\documentclass{article}

\usepackage{arxiv}

\usepackage[utf8]{inputenc} % allow utf-8 input
\usepackage[T1]{fontenc}    % use 8-bit T1 fonts
\usepackage[hidelinks]{hyperref}       % hyperlinks
\usepackage{url}            % simple URL typesetting
\usepackage{amssymb}
\usepackage{makecell}

\usepackage{authblk}
\usepackage{booktabs}       % professional-quality tables
\usepackage{hyperref}
\usepackage{amsfonts}       % blackboard math symbols
\usepackage{nicefrac}       % compact symbols for 1/2, etc.
\usepackage{graphicx}
\usepackage{doi}
\usepackage{float}
\usepackage{stfloats}
\usepackage{subcaption}
\usepackage{textcomp, gensymb}
\usepackage{enumitem}
\usepackage{multirow}
\usepackage{xcolor}
\usepackage{pgfplots}
\usepackage{wrapfig}
\usepackage{rotating}
\usepackage{todonotes}
\usepackage[ruled]{algorithm2e}
\usepackage{amsmath}
\usepackage{bbm} % \mathbbm{1}
\usepackage[most]{tcolorbox}
\usepackage{xparse}
\usepackage{cleveref}

\usepackage{amsthm}

\newtheorem{theorem}{Theorem}
\newtheorem{proposition}{Proposition}
\newtheorem{corollary}{Corollary}
\newtheorem{definition}{Definition}
\newtheorem{remark}{Remark}

\newtheorem{innercustomthm}{Theorem}
\newenvironment{customthm}[1]
  {\renewcommand\theinnercustomthm{#1}\innercustomthm}
  {\endinnercustomthm}
\newtheorem{innercustomprop}{Proposition}
\newenvironment{customprop}[1]
  {\renewcommand\theinnercustomprop{#1}\innercustomprop}
  {\endinnercustomprop}

\usepackage{array}
\newcolumntype{x}[1]{>{\centering\arraybackslash\hspace{0pt}}p{#1}}

\pgfplotsset{compat=1.18}

\usepackage{tikz}
\usepackage{tikz-qtree}
\usetikzlibrary{trees}
\usetikzlibrary{mindmap}
\usetikzlibrary{positioning}

\newcommand{\changed}[1]{{#1}}

\newcounter{llmprompt}

\NewEnviron{llmprompt}[3][0.9\linewidth]{%
  \refstepcounter{llmprompt}%
  \def\prompttitle{Prompt~\thellmprompt}%
  \if\relax\detokenize{#2}\relax
  \else
    \edef\prompttitle{Prompt~\thellmprompt: #2}%
  \fi
  \begin{tcolorbox}[float=!ht,
    width=#1,
    colback=gray!10,
    colframe=black,
    fonttitle=\bfseries,
    title=\prompttitle,
    sharp corners,
    boxrule=0.5pt,
    arc=2mm,
    left=4pt,
    right=4pt,
    top=4pt,
    bottom=4pt,
    listing only,
    listing options={
      basicstyle=\ttfamily\small,
      breaklines=false,
      breakautoindent=false,
      columns=fullflexible,
      keepspaces=true,
      showstringspaces=false,
      aboveskip=0pt,
      belowskip=0pt,
      frame=none,
      rulecolor=\color{white},
      rulesepcolor=\color{white}
    },
    label={#3}
  ]
  \BODY
  \end{tcolorbox}
}

\tcbset{
  myresultbox/.style={
    colback=white,       % no background color
    colframe=gray,      % black frame
    fonttitle=\bfseries, % bold title
    title=Result~\thetcbcounter,
    boxrule=0.5pt,       % thin line
    sharp corners,
    enhanced,
  }
}

\newtcolorbox[auto counter, number within=section]{resultbox}[1][]{
  myresultbox, #1
}

\tcbset{
  myobservationbox/.style={
    colback=white,       % no background color
    colframe=gray,      % black frame
    fonttitle=\bfseries, % bold title
    title=Observation~\thetcbcounter,
    boxrule=0.5pt,       % thin line
    sharp corners,
    enhanced,
  }
}

\newtcolorbox[auto counter, number within=section]{observationbox}[1][]{
  myobservationbox, #1
}

\begin{document}

\title{Statistical Detection of Positional Bias in LLM-Based Recommender Systems: Methodologies and Empirical Evidence}
\newcommand{\shorttitle}{Statistical Detection of Positional Bias in LLM-Based Recommender Systems}

\author{Carlos Pel\'aez-Gonz\'alez$^{*,1}$, Francisco Herrera$^{1}$, Andr\'es Herrera-Poyatos$^{2}$}

\affil{$^{1}$Department of Computer Science and Artificial Intelligence, Andalusian Institute of Data Science and Computational Intelligence (DaSCI), University of Granada, Spain. \\ 
$^{2}$ Department of Algebra, University of Granada, Spain.
$^{*}$Corresponding author. \\ Emails: \texttt{\{carlosprog, andreshp\}@ugr.es}, \texttt{herrera@decsai.ugr.es}}

\date{\today}

\maketitle              % typeset the header of the contribution

\begin{abstract}
Large Language Models (LLMs) are increasingly used as recommendation generators in conversational recommender systems (RS) via retrieval-augmented generation (RAG). This flexible design introduces new sources of structural bias. This paper studies \emph{positional bias} in \changed{top-1 and single-turn} LLM-based RS: systematic dependence of the recommended item on the order in which candidates are presented in the prompt. We propose two complementary detection methodologies: (i) \textbf{POB-ICM}, based on contingency matrix analysis and Chi-square independence testing, and (ii) \textbf{POB-UNBP}, a sample-efficient approach derived from a uniformity null model under permutation-based evaluation, enabling stochastic decoding with formal statistical guarantees. Using a controlled RAG testbed \changed{for product recommendation, built on a real product catalog, a fixed prompt template and} 2000 \changed{template-generated} user queries \changed{with} 20 permutations each (40{,}000 instances), both methodologies reject the null hypothesis of positional independence and uniformity in all evaluated state-of-the-art models. We introduce three complementary metrics — permutation sensitivity, positional bias, and strong positional bias — to quantify this effect\changed{, the latter targeting cases where the position displaces the recommendation towards clearly less relevant items rather than only choosing among comparably relevant ones}. A robustness analysis across candidate-set size and decoding temperature shows that the bias is structural and model-dependent, not an artifact of deterministic decoding or candidate-set size. These results establish positional bias as a measurable, reproducible property of prompt-mediated recommendation pipelines \changed{in the evaluated setting} and motivate routine auditing and governance controls (e.g., order randomization, standardized prompting) for deployed LLM-based RS. Source code is publicly released\footnote{\url{https://github.com/GoComputing/Positional-Bias.git}}.
\end{abstract}

\keywords {Large language models \and conversational recommender system \and positional bias}

\section{Introduction}

% Introduction to RS
Recommender systems (RS) are a core component of modern digital platforms, supporting users in navigating and filtering large volumes of information. By modeling user preferences and item characteristics, RS aims to deliver relevant and personalized recommendations across domains such as e-commerce, media consumption, and information retrieval. Despite substantial progress, RS remain susceptible to a range of biases that can affect both the reliability and the consistency of their outputs. Previous work has documented several forms of bias—including exposure bias, popularity bias, and positional bias—that can emerge at different stages of the recommendation pipeline and propagate through system components, ultimately shaping the user experience in unintended ways \cite{chen_bias_2023}. Understanding, measuring, and mitigating such biases is, therefore, essential for ensuring transparent and reliable recommendation behavior.

% Introduction to LLMs and capabilities
In parallel, Large Language Models (LLMs) have recently emerged as a powerful paradigm for building intelligent systems. Trained in large-scale corpora using self-supervised objectives, LLMs exhibit a wide range of emergent capabilities, including question answering, reasoning, and decision-making \cite{raiaan_review_2024}. This has led to a shift from traditional model training toward prompt-based interaction, where system behavior is largely determined by how information is presented in natural language form. As a consequence, LLMs are increasingly being adopted as central components in RS, particularly in conversational settings.

% Introduction to jailbreaking
% Beyond reliability concerns, sensitivity to non-semantic prompt structure has security implications for LLM-based RS. Prompt-based attacks ("jailbreaks") often exploit systematic model vulnerabilities that arise from generalization gaps, competing objectives, or robustness failures, enabling outputs that violate intended constraints. Recent work proposes a domain-based taxonomy of jailbreak vulnerabilities that emphasizes underlying model deficiencies rather than surface prompt templates~\cite{pelaez2025domain}. In this perspective, positional bias can be viewed as a measurable form of vulnerability to prompt-structure, motivating routine auditing of ordering effects in prompt-mediated recommendation pipelines.

% Introduction to LLM-based RS
A common design for LLM-based RS is based on a retrieval-augmented generation (RAG) architecture \cite{deldjoo2024review}. In this setting, a retriever first computes semantic similarity scores between a user query and a collection of candidate items using an embedding model. The top-$k$ most relevant items are then selected and inserted into a prompt, together with the user query, which is provided to the LLM. Based on this prompt, the LLM produces a recommendation, typically by selecting or describing one of the candidate items. This architecture decouples retrieval from generation while allowing the LLM to reason over a constrained candidate set.

% Importance of RS analysis
The increasing adoption of LLMs within information systems has been identified as a major research frontier in the Information Systems (IS) community. Recent work emphasizes that generative AI systems introduce new forms of socio-technical risk, requiring systematic analysis of their behavioral properties, reliability, and governance when deployed in real-world organizational settings~\cite{storey2025genai}. In this context, understanding how design choices in prompt-mediated systems affect model behavior is critical to advance both theory and practice in LLM-enabled information systems.

% Challenges on the analysis of LLMs as RS
However, LLM-based RS operate largely as black boxes. Although the retriever stage is often well understood and explicitly optimized, the final recommendation decision is made by the LLM through implicit reasoning over a sequential text prompt. As a result, unintended behaviors may arise that are not directly related to user preferences or item relevance. These behaviors are not necessarily associated with socially sensitive attributes such as gender or race but instead stem from subtle interactions between prompt structure and model internals. Ideally, a recommendation should depend only on the semantic relationship between the user query and the candidate items. In practice, additional latent variables may influence the outcome.

% Introduction to positional bias
One such variable is the position of candidate items within the prompt. Because text is inherently sequential, the retrieved items must be presented in a specific order. This ordering introduces positional information that is not semantically meaningful, but can nevertheless affect the LLM’s output. When the probability of recommending an item depends on its position in the input list rather than solely on its relevance, the system exhibits positional bias. Although positional bias has been extensively studied in traditional RS and user interfaces, its manifestation and measurement in LLM-based RS remain unstudied.

% What answers this paper
In this work, we focus on the measurement and statistical characterization of positional bias in top-1 and single-turn LLM-based RS, where the model recommends a single item from the retrieved candidates in one conversational turn. Rather than proposing mitigation strategies, our objective is to rigorously determine whether positional bias is present, how it can be detected, and how strong its effects are under controlled experimental conditions. To structure this investigation, we formalize the following research questions:

\begin{itemize}
\item[RQ1:] Does output selection in LLM-based RS depend on the position of candidate items within the input prompt?

\item[RQ2:]  Is any observed dependence between item position and model output statistically significant under permutation-based testing?

\item[RQ3:] How large is the positional bias, and does its magnitude vary across models?

\end{itemize}

% Introduction to POB-ICM methodology
To address these questions, we propose two complementary statistical methodologies for positional bias detection. The first, Positional Bias detection based on Independence of the Contingency Matrix (POB-ICM), models the relationship between input positions and output selections as a contingency table and evaluates whether these variables are statistically independent. Under the null hypothesis of no positional bias, the distribution of recommended items should be independent of their input positions. Violations of this assumption are assessed using a query-level permutation test of independence that allows both detection and interpretation of positional bias.

% Introduction to POB-UNBP methodology
The second methodology, Positional Bias detection based on Uniformity assuming Non-Biased Positions (POB-UNBP), is designed to reduce sample complexity by leveraging a stronger null assumption. Specifically, in the absence of positional bias, on expectation over all the possible queries, the probability of recommending the item at a given position is uniform. By repeatedly permuting the order of candidate items and analyzing the resulting recommendation distributions, this method enables efficient detection of systematic positional preferences while requiring fewer samples than full contingency-based testing. Together, these methodologies provide a principled and statistically grounded framework for evaluating positional bias in LLM-based RS.

% Our main contributions
Using these approaches, we conducted an empirical study on state-of-the-art LLMs embedded within a RAG-based RS for product recommendation, built on a real product catalog with template-generated user queries and a fixed prompt template. Our experiments reveal statistically significant positional bias, demonstrating that item order can systematically influence recommendation outcomes even when item relevance is kept constant. We complement these statistical tests with three proposed metrics. Permutation sensitivity metric evaluates how the model changes its decision on permutation changes, so it is an upper bound of positional bias. Positional bias metric measures the total variation distance between the uniform distribution and the distribution of positions of items selected by the RS, as deviation from the uniform distribution indicates a systematic preference for certain positions. Finally, the strong positional bias metric builds on relevance sets, the query-dependent sets of candidate items that the model treats as comparably relevant. Relevance sets decompose the overall positional bias detected by our methodologies into soft positional bias, where the position determines which item of the relevance set is recommended, and strong positional bias, where the position displaces the recommendation outside the relevance set, which this metric quantifies. In summary, the main contributions of this paper are as follows:

\begin{enumerate}
 \item We introduce POB-ICM, a statistically grounded methodology for detecting and interpreting overall positional bias in LLM-based RS based on contingency matrix independence.
 \item We propose POB-UNBP, a sample-efficient positional bias detection methodology based on uniformity assumptions under the absence of positional bias, which allows for stochastic decoding, and complement it with two quantification metrics (permutation sensitivity and positional bias) that disentangle generic permutation sensitivity from structural positional preference.
 \item We decompose overall positional bias into soft and strong positional bias through query-dependent relevance sets, quantify the strong component, and show through counterfactual analyses that it is not explained by the identifiers assigned to the items and, for six of the eight evaluated models, not by stochastic decoding alone. The formal detection of each component separately is left as future work.
 \item We implement and evaluate an end-to-end experimental framework to assess positional bias in LLM-based RS using a RAG architecture in a single-turn product recommendation setting, including a systematic robustness analysis across candidate-set size $k$ and decoding temperature $t$ on eight state-of-the-art LLMs, showing that positional bias is a structural, model-dependent property rather than an artifact of deterministic decoding or candidate-set size.
\end{enumerate}

% Results
The empirical results indicate that positional bias is a systematic and statistically detectable factor in LLM-based recommender systems implemented under an RAG pipeline, within the evaluated setting. Across permutation-based evaluations, both proposed methodologies (POB-ICM and POB-UNBP) reject the null hypothesis that recommendation outcomes are independent of the positional placement of candidate items within the prompt in all evaluated models, and the resulting selection distributions exhibit structured, non-uniform positional bias with certain prompt positions associated with reliably elevated selection probabilities.

The remainder of this paper is organized as follows. In~\autoref{sec:relatedwork} we review related work on positional bias and related phenomena in RS and language models. In \autoref{sec:methodologies} we describe the system design, formal modeling, and the proposed POB-ICM and POB-UNBP methodologies. The experimental setup and empirical results are presented in ~\autoref{sec:results}. Finally, ~\autoref{sec:conclusion_futurework} discusses the implications of our findings, limitations, and directions for future work.

\section{Related Work}
\label{sec:relatedwork}

% Introduction to LLM-based RS
Conversational RS generate personalized recommendations by interpreting user preferences expressed in natural language. Early conversational RS relied on rule-based or traditional machine learning approaches, while more recent systems increasingly incorporate deep learning models. With the advent of Large Language Models (LLMs), recommendation pipelines have undergone a substantial shift, as LLMs enable direct reasoning over textual user queries and item descriptions. Previous work has demonstrated the applicability of LLMs to a variety of recommendation tasks, including rating prediction, top-$N$ recommendation, and conversational recommendation scenarios~\cite{li-etal-2024-large}. These systems can generate recommendations as item identifiers or as natural-language responses, leveraging the linguistic and reasoning capabilities of LLMs.

% Introduction to RAG-based RS
A common instantiation of LLM-based RS adopts a retrieval-augmented generation (RAG) architecture. In this setting, a retriever selects a set of $K$ candidate items from a larger database based on semantic similarity to the user query, and these candidates are subsequently provided to the LLM for recommendation generation~\cite{wang_zero-shot_2023}. In the top-$N$ recommendation, the model selects or ranks multiple items (with $K > N$), while in the special case of the Top-1 recommendation, the LLM is required to select a single item from the candidate set. The latter setting is particularly relevant for conversational interfaces, where concise recommendations are often preferred, and is the setting studied in this work, in which a single item is recommended in a single conversational turn.

% Introduction to unfair and structural biases
Bias in RS has been extensively studied and is commonly categorized as \emph{unfair biases} and \emph{structural biases}. Unfair biases involve systematic disparities that affect sensitive user groups, such as those defined by gender, race, or age~\cite{gallegos_bias_2024, navigli2023biases}. In contrast, structural biases arise from the design of the system itself, including data preprocessing choices, model architecture, or prompt formulation. In LLM-based RS, structural biases can emerge from seemingly innocuous prompt design decisions, such as wording or synonym selection, which can influence the model output independently of user intent~\cite{gallegos_bias_2024}.

% Positional bias introduction and empirical evidence
Beyond recommender systems, structural bias in LLMs has been studied across several application domains. We define positional bias as a special case of structural bias in which a model's output depends systematically on the sequential position of an input element such as a candidate item, an answer option, or a retrieved passage, rather than solely on its semantic content. Because language models process text as ordered sequences, any system that presents multiple elements in a list implicitly encodes positional information that the model may exploit, independently of relevance or quality. In the context of \textit{multiple-choice question answering}, it has been shown that the order in which choices are listed can affect model answers~\cite{pezeshkpour-2024}. LLMs also exhibit selection bias by assigning more probability mass to specific option ID tokens regardless of content~\cite{zheng_large_2024}. Additional studies have demonstrated that even the symbols used to delimit answer options can introduce systematic biases in model output~\cite{yang-etal-2025-option}. In \textit{long-context question answering and retrieval-augmented generation}, \cite{liu-etal-2024-lost} demonstrated that LLM performance follows a U-shaped curve as a function of where relevant information appears in the context. They named this phenomenon as ``lost in the middle'' effect, with \cite{hsieh-etal-2024-found} tracing this degradation to intrinsic U-shaped attention patterns in transformer models independent of content relevance. In the domain of \textit{LLM-as-a-Judge evaluation}, \cite{zheng-etal-2023-judging} documented that strong LLMs used as evaluators exhibit positional bias when judging response quality, a finding confirmed at scale by \cite{shi-etal-2025-judging} across 15 judge models and over 150,000 evaluation instances. In \textit{LLM-based recommender systems}, \cite{hou_large_2024} observed that recommendation performance deteriorates when highly relevant items appear in later positions of the candidate list. These findings provide empirical indicators of positional bias in different areas. However, across all these domains, these characterize positional bias through performance degradation metrics and does not provide a hypothesis-testing framework that formally isolates position as a statistically independent variable.

% Justification on why positional bias in LLM-based RS
Top-1 RS based on LLM can be naturally framed as a multiple-choice decision problem: a set of $K$ candidate items is presented to the model, and the LLM must select the item that best satisfies the user's preferences. However, unlike standard multiple-choice tasks, recommendation scenarios often admit multiple plausible answers rather than a single correct option. This distinction complicates the direct transfer of positional bias analysis from question answering to recommendation settings, and similarly limits the applicability of long-context QA approaches such as the lost-in-the-middle framework~\cite{liu-etal-2024-lost}, which rely on a ground-truth answer to measure positional bias and do not isolate position as a statistically independent variable. Moreover, unlike ranking-based formulations of LLM-based recommendation~\cite{hou_large_2024}, where the model orders all the candidates, a top-1 recommender only reveals the item it selects, so the relevance it assigns to the remaining candidates must be inferred indirectly, which makes it harder to separate positional effects from item relevance. There are nevertheless strong reasons to hypothesize that positional bias affects LLM-based RS. Training data for LLMs include numerous instances of recommendations and ranked lists, which may themselves reflect user interaction biases, such as a tendency to favor items appearing earlier in lists~\cite{collins_position_2018}. Previous surveys have warned that such biases can propagate into learned recommendation behaviors~\cite{chen_bias_2023, wu2024survey}. Empirical observations further suggest that the performance of LLM-based RS deteriorates when highly relevant items appear in later positions within the candidate set~\cite{hou_large_2024}, though these studies do not isolate positional bias from confounders such as retrieval quality or item relevance.

% Limitations of current approaches and how we solve the gap
Despite these observations and intuition, the closest existing approaches to statistically grounded quantification of positional bias in LLMs rely on scalar summary metrics rather than formal hypothesis testing. \cite{zheng_large_2024} propose the standard deviation of per-option recall (RStd) as a measure of selection bias in MCQ tasks. \cite{liusie-etal-2024-teacher} define positional bias as the divergence between the marginalized selection distribution and the uniform distribution, implicitly assuming, as we formally prove in \autoref{sec:uniform_methodology}, that absence of positional bias implies uniformity. \cite{wei-etal-2024-unveiling} introduce the Fluctuation Rate, a label-free metric defined as the fraction of instances where a fixed forward ordering and a single fixed backward ordering yield different answers; because it compares exactly two predetermined permutations rather than the full permutation space, it may not capture the full extent of positional bias. Crucially, none of these approaches frames detection as a hypothesis test or provides statistical significance guarantees, and all are developed for MCQ tasks where a unique correct answer exists. Moreover, positional bias analysis developed for multiple-choice question answering does not directly transfer to recommendation settings, where multiple candidates may be plausibly relevant and a single ground-truth label may not exist. To address these gaps, we develop two complementary hypothesis-testing methodologies tailored to single-turn Top-1 LLM-based recommendation under controlled permutations: \textbf{POB-ICM}, which detects and interprets positional dependence via contingency-matrix independence testing, and \textbf{POB-UNBP}, which provides a sample-efficient alternative via a uniformity-based null model with formal statistical guarantees. Furthermore, since several candidates may be comparably relevant, we distinguish soft positional bias, where the position only determines which of these candidates is recommended, from strong positional bias, where the position displaces the recommendation towards clearly less relevant items, and, besides the overall positional bias, we quantify its strong component. Together, these methods enable principled detection and quantification of positional bias independently of item quality and retrieval performance.

\section{Methodology for positional bias detection in LLM-based RS}
\label{sec:methodologies}

LLM-based RS expose candidate items to the model as an ordered list within a text prompt. Let $\{p(1), p(2), \ldots, p(k)\}$ denote the marginal probability distribution over prompt positions, where $p(i)$ is the probability that the item appearing at position $i$ is recommended, averaged over queries and random permutations of the candidate set (formally defined in~\autoref{sec:mathematical_modeling}). In the absence of positional bias, a model should recommend each item based solely on its relevance to the user query, regardless of its position in the prompt. As we formally prove in~\autoref{sec:uniform_methodology}, this condition implies that $p(i)$ follows a uniform distribution: when items are presented in random order, no prompt position is systematically preferred, and thus $p(1) = p(2) = \cdots = p(k) = 1/k$.

To detect violations of this property, we propose two complementary statistical methodologies, POB-ICM and POB-UNBP, which trade off interpretability against sample efficiency. We first describe the RAG-based RS testbed used as our experimental framework in~\autoref{sec:recommender_implementation}, establish the formal probabilistic notation in~\autoref{sec:mathematical_modeling}, and then introduce each methodology in~\autoref{sec:independency_methodology} and~\autoref{sec:uniform_methodology} respectively.

\subsection{An implementation of an LLM-based RS for positional bias analysis}
\label{sec:recommender_implementation}

To empirically study positional bias under controlled conditions, we implement an LLM-based RS following a RAG architecture. This implementation serves as a testbed that allows systematic manipulation of item order while keeping all other components fixed.
\begin{figure}
    \centering
    \includegraphics[width=0.5\linewidth]{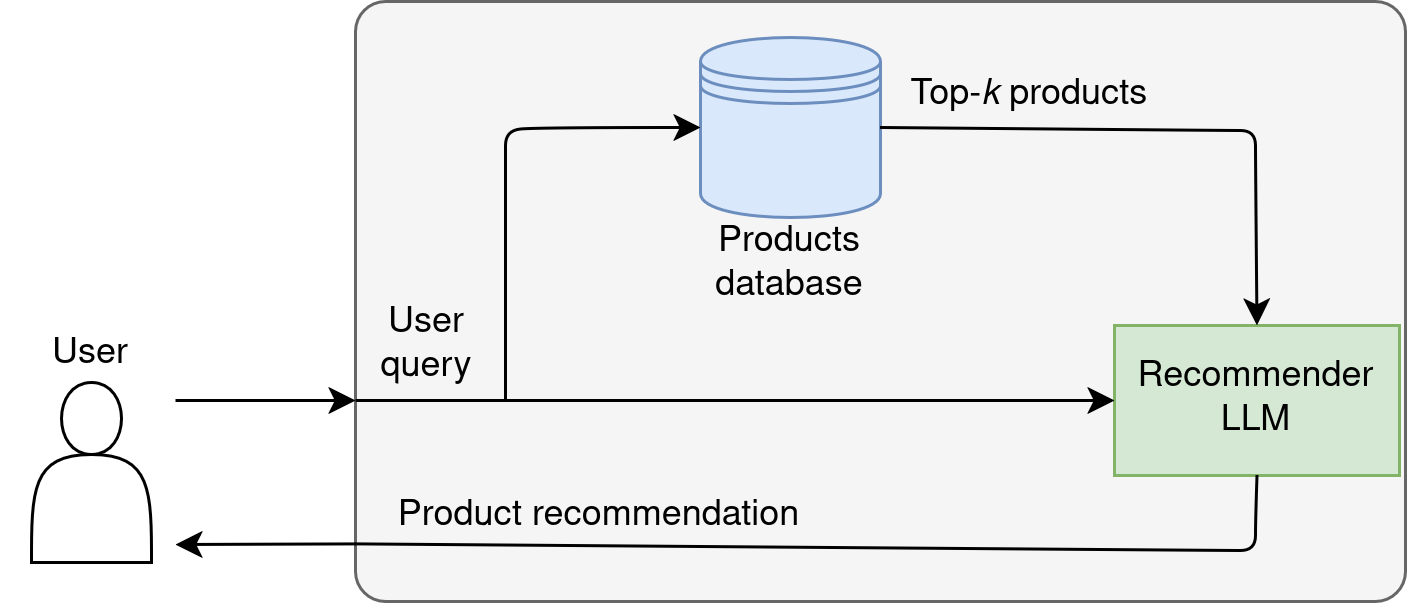}
    \caption{LLM-based RS architecture used in this paper.}
    \label{fig:usecase}
\end{figure}

% Description of our use case
The implemented RS based on LLM, illustrated in \autoref{fig:usecase}, follows an RAG architecture. It takes as input a user query and is used to semantically match several products from a database, sorting them using the distance between the query and the product titles. Then, this query, the titles and the descriptions of the products are fed to an LLM used to recommend one of these products. As a result, the RS generates a recommendation selecting the position of one of the related products. The specific steps are described below.

\begin{enumerate}
  \item An embedding is computed using a pre-trained model, namely, the Beijing Academy of Artificial Intelligence (BAAI) general embedding model~\cite{xiao_c-pack_2024}.
  \item The cosine distance is computed along each precomputed embedding from a product database, specifically from the products' titles. These products are sorted using the cosine distance, as it represents the semantic similarity of the products with the user query.
  \item The top-$k$ products are selected, where $k$ is an hyperparameter of the system. 
  \item The selected products (titles and descriptions) and the user query are used to construct a prompt, following the template illustrated in~\autoref{fig:recommender_prompt}. This prompt asks the model to generate a recommendation given the products and the user query. A JSON format is required to automatically parse the answer. The response is a recommendation for a specific product, and the chatbot generates a conversation for the user.
\end{enumerate}

\begin{llmprompt}[\linewidth]{Recommender system template prompt}{fig:recommender_prompt}
You are a shop assistant that recommends the user the given shop articles based on his/her preferences.\newline

Here are some articles in my shop:\newline

Title 1: \{title\}\newline
Description 1: \{description\}\newline

...\newline

Title \textit{k}: \{title\}\newline
Description \textit{k}: \{description\}\newline

The user query is \{query\}. Based on the user query and the provided articles, recommend one of these articles in a chat way. Think step by step why you choose your selected article.\newline

At the end, please provide your answer using a JSON format with the fields "article\_number" (int), "article\_title" (str) and "recommendation" (str). Provide a single JSON as described.
\end{llmprompt}

The above implementation defines an end-to-end mapping from a user query and a set of retrieved items to a single recommended item. To enable a rigorous statistical analysis of positional bias, we next formalize this pipeline as a function that maps a query instance and an item permutation to a selected output position. 
This formalization allows us to define explicit null hypotheses and apply statistical tests that isolate the effect of item position from other components of the system.

\subsection{Mathematical notation and probability distributions of the LLM-based RS}
\label{sec:mathematical_modeling}

Let \(S_k\) be the set of all permutations of $\{1,2,\ldots,k\}$. Let $\mathcal{D}$ be a probability distribution over possible user queries for the recommendation system and the associated top-$k$ products, that is, we sample from $\mathcal{D}$ tuples of the form $(Q,P_1, P_2, \ldots, P_k)$, where $Q$ is a user query and $P_1, P_2, \ldots, P_k$ are the top-$k$ products selected within the RAG architecture for query $Q$. For an instance $X \sim \mathcal{D}$ and a permutation $\sigma \in S_k$, we use~\autoref{fig:recommender_prompt} with the items listed in the order $P_{\sigma(1)}, P_{\sigma(2)}, \dots, P_{\sigma(k)}$, obtaining a recommended item by the LLM. We denote by $R(X, \sigma)$ the position (in the prompt) of the item recommended. For example, if $R(X, \sigma) = 1$, this means that the LLM recommends the first item in the prompt, which corresponds to $P_{\sigma(1)}$. That is, in the original list $P_1, P_2, \ldots, P_k$ of top-$k$ products, the recommended item actually has index $\sigma(1)$. 

In our analysis we sample $\sigma$ from the uniform distribution over $S_k$, denoted $U(S_k)$. The use of random permutations enables the evaluation of positional bias on the recommendation $R$, consistent with the principles of permutation testing~\cite{good2013permutation}. For example, the probability that the LLM will recommend the first item to appear in the prompt is
\begin{equation*}
%p_1 = \frac{1}{|\mathcal{D}| k!}  \sum_{X \in \mathcal{D}} \sum_{\sigma \in S_k} \mathbf{1}[R(X,\sigma)=1],
p(i) = \operatorname{Pr}_{X \sim \mathcal{D},\sigma \sim U(S_k)}\left[R(X, \sigma) = i\right] = \mathbb{E}_{X \sim \mathcal{D},\sigma \sim U(S_k)}\mathbf{1}\left[R(X, \sigma) = i\right],
\end{equation*}
where $\mathbf{1}(A)$ is $1$ if the event $A$ holds, and $\mathbf{1}(A) = 0$ otherwise. Generally, the probability that the element at position $i$ in the prompt is the one recommended in our experiment is
\begin{equation}
p(i) = \mathbb{E}_{X \sim \mathcal{D},\sigma \sim U(S_k)}\mathbf{1}\left[R(X, \sigma) = i\right].
\label{eq:probability_distribution}
\end{equation}
Obviously, we have $p(1)+p(2)+\cdots+p(k) = 1$. If the LLM-based RS does not suffer from positional bias, we would have $p(1) = p(2) = \cdots = p(k) = 1/k$, which will be exploited  in one of our proposed methodologies.

We design two complementary methodologies to test the null hypothesis of positional independence. 
The first methodology, POB-ICM, is described in \autoref{sec:independency_methodology} and evaluates dependence through a contingency matrix and statistical independence tests. 
Although robust and interpretable, this approach requires a large number of samples. 
To address this limitation, we propose a second methodology, POB-UNBP, described in ~\autoref{sec:uniform_methodology}, which tests positional bias through uniformity assumptions and requires substantially fewer samples, at the cost of reduced interpretability.

\subsection{POB-ICM: POsitional Bias detection methodology based on the Independency of the Contingency Matrix}\label{sec:independency_methodology}

% Definition of the method
In the absence of positional bias, applying a permutation to the top-$k$ items in the LLM query should not statistically affect the final recommendation. Concretely, the choice of $\sigma \sim U(S_k)$ should be independent of the final recommendation. Building on this observation, we present a methodology that employs the Chi-square test for independence~\cite{agresti2013categorical}, whose $p$-value is calibrated with a query-level permutation test to account for the dependence among the permutations evaluated on the same query. In addition to showing statistical significance, we assess the strength of the detected association using three different metrics described in \autoref{sec:metrics}. Implementing this test requires building a contingency matrix for the following two random variables: the selected product and the random shuffling applied. First, we formalize these random variables. 

Recall that in our experiment we start with an instance $X= (Q, P_1, P_2, \ldots, P_k) \sim \mathcal{D}$, choose $\sigma \in S_k$ uniformly at random and determine $R(X, \sigma) = j$, that is, the recommended product is $P_{\sigma(j)}$. We define the random variable ``original index'', denoted $Y$, as $\sigma(j)$. Note that $Y$ is actually a function of $X$ and $\sigma$, so we can write $Y = f(X, \sigma)$ for $X \sim \mathcal{D}$ and $\sigma \sim U(S_k)$. Moreover, $Y$ yields an integer between $1$ and $k$. Lack of positional bias implies that the random variables $Y$ and $\sigma$ are conditionally independent given $X$, that is, $\operatorname{Pr}[Y = i \,\vert\, \sigma, X] = \operatorname{Pr}[Y= i \,\vert\, X]$ for all $i \in \{1,2,\ldots,k\}$ and $\sigma \in S_k$.

We build the contingency table of $Y$ and $\sigma$ as follows. For ease of notation, we assume that we have indexed all the permutations in $S_k$, obtaining a list $\sigma_1, \sigma_2, \ldots, \sigma_{k!}$. The $i$-th row in this table correspond to $Y  = i$, that is, the case when the recommended product has position $i$ in the original order (before shuffling). The $j$-th column corresponds to the permutation $\sigma_j$. Let $X_1, X_2, \ldots, X_t$ be samples from the distribution $\mathcal{D}$. For $i \in \{1,2,\ldots,k\}$ and $\sigma_j \in S_k$, the $i,j$ entry of this contingency matrix is formally defined as
\begin{equation}\label{eq:contingency_matrix}
    C_{i,j} = \sum_{\ell = 1}^t
    \mathbf{1} \left[
    f(X_\ell, \sigma_j) = i
    \right] = \sum_{\ell = 1}^t
    \mathbf{1} \left[
    \sigma_j(R(X_\ell, \sigma_j)) = i
    \right].
\end{equation}
That is, $C_{i,j}$ counts how many instances $(Q, P_1, P_2, \ldots, P_k)$ of the dataset sampled from $\mathcal{D}$ satisfy that the recommended item, when the prompt presents the items following the order $\sigma_j$, is the one in position $i$ in the original order $P_1, P_2, \ldots, P_k$. Note that if there is no positional bias, then all columns of the matrix $C$ should be the same since the item recommended is the same independently of the permutation $\sigma$, that is, $\sigma(R(X, \sigma)) = \varphi(R(X, \varphi))$ for any two permutations $\varphi, \sigma \in S_k$.

The null-hypothesis is that the variables $Y$ and $\sigma$ are conditionally independent given $X$. We compute the Pearson Chi-square statistic $\chi^2_{\mathrm{obs}}$ of the contingency matrix $C$. Since $C$ aggregates several permutations of the same instance $X_\ell$, its entries are not independent observations and the asymptotic Chi-square distribution may yield miscalibrated $p$-values. Hence, we calibrate the $p$-value with a query-level (stratified) permutation test. Under the null-hypothesis, the recommended item $Y$ does not depend on $\sigma$, so, for each instance $X_\ell$, the values of $Y$ are exchangeable among the permutations applied to $X_\ell$. We generate $B$ contingency matrices $C^{(1)}, \ldots, C^{(B)}$ by randomly shuffling the values of $Y$ among the permutations of each instance, never mixing values across different instances, and compute their Chi-square statistics $\chi^2_{(1)}, \ldots, \chi^2_{(B)}$. This empirical distribution preserves both the query-level dependence and the marginals of $C$. The Monte Carlo $p$-value is $p = \big(1 + \sum_{b=1}^{B} \mathbf{1}[\chi^2_{(b)} \geq \chi^2_{\mathrm{obs}}]\big)/(1 + B)$, with $B = 2000$ in our experiments. If the null-hypothesis is rejected, there is a dependency between the product selection and the position of the products, indicating a positional bias.

The generation of the contingency matrix described in~\autoref{eq:contingency_matrix} requires the generation of $k!$ permutations for each of the $t$ samples from the distribution~$\mathcal{D}$. In practice, a subset of permutations may be used, since the permutation test above does not rely on the asymptotic Chi-square approximation and thus does not require a minimum number of counts per cell. We can display the contingency matrix via a heatmap, which allows the user to inspect the positional bias induced by the LLM, see \autoref{fig:contingency_matrix}. However, this matrix might be overwhelming due to its size. To simplify this visualization, we also use a matrix where each row represents the position of a given product \textit{before} the shuffling, and the columns represent the position of this product \textit{after} the shuffling. Then each cell $i,j$ averages the number of times the product $i$ was selected when $\sigma$ placed it at position $j$, that is,
\begin{equation} \label{eq:bias_analysis}
   A_{i,j} = \frac{1}{c_{i,j}}
             \sum_{X \in \mathcal{D}}
             \sum_{\sigma \in \Sigma_X : \sigma(j) = i}
             \mathbf{1}[R(X, \sigma) = j],
\end{equation}
where $\Sigma_X$ is the set of permutations evaluated for instance $X$, and $c_{i,j}$ is the average number of permutations per instance in $\Sigma_X$ that place the product $i$ at position $j$. If all possible permutations are evaluated, $\Sigma_X = S_k$ and $c_{i,j} = (k-1)!$.

We could interpret the matrix $A_{i,j}$ by columns. Each column represents the number of times that a recommender LLM selects a product in the position $j$ after moving the products to this position. If a column is more likely to be selected, this would mean that there is a positional bias with a positive tendency. In the same way, if a column is less likely to be selected, then there is a negative positional bias towards this position.

\subsection{POB-UNBP: Positional Bias detection methodology based on Uniformity assuming Non-Biased Positions}\label{sec:uniform_methodology}

The positional bias evaluation methodology described in the previous section requires many samples in order to be robust. Thus, its evaluation on closed-weights models could be too expensive. In order to alleviate this, a second methodology is proposed based on the uniformity of the probability distribution generated by the model. In this section we work with the probability distribution $p(i)$ given in \autoref{eq:probability_distribution}. 

First, we mathematically prove that, under the assumption of absence of positional bias, $p(i)$ follows a uniform distribution. Secondly, we generalize this result to the stochastic setting, where the RS output is understood as a probability distribution on the set of items in the prompt. Here  we allow for a small tolerance of positional bias and show that, despite this tolerance and the stochastic setting, $p(i)$ is very close to the uniform distribution. As a consequence, positional bias can be statistically tested by empirically sampling from $p(i)$ and checking if sampled data comes from a probability distribution close to the uniform distribution.
 
\subsubsection{Zero-temperature setting}
 
We note that in the absence of positional bias, given an instance $X = (Q, P_1, P_2, \ldots, P_k)$, for any two permutations $\sigma, \varphi \in S_k$, the recommended item is the same. Recall that we denote by $R(X, \sigma)$ the position of the recommended item within the prompt, and that this prompt contains the items in the order $P_{\sigma(1)}, P_{\sigma(2)}, \ldots, P_{\sigma(k)}$. Hence, the recommended item is $P_{\sigma(R(X, \sigma))}$, and, by assuming absence of positional bias, we have $P_{\sigma(R(X, \sigma))} = P_{\varphi(R(X, \varphi))}$. That is, $\sigma(R(X, \sigma)) = \varphi(R(X, \varphi))$ for all permutations $\sigma, \varphi \in S_k$. Equivalently, there exists a map $g$ from instances to $\{1, 2, \ldots, k\}$ such that $\sigma(R(X, \sigma)) = g(X)$ for all $\sigma \in S_k$.
 
\begin{proposition}\label{prop:deterministic}
    In the absence of positional bias, $p(i \mid X = x) = 1/k$ for all $i \in \{1, 2, \ldots, k\}$ and all instances $x$.
\end{proposition}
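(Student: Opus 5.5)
The plan is to compute $p(i \mid X = x) = \Pr_{\sigma \sim U(S_k)}[R(x,\sigma) = i]$ directly, using the map $g$ introduced just before the statement. Fix an instance $x$. Absence of positional bias means there is a fixed original index $g(x) \in \{1,\ldots,k\}$ with $\sigma(R(x,\sigma)) = g(x)$ for every $\sigma \in S_k$. Since $\sigma$ is a bijection, this pins down the prompt position of the recommended item as
\begin{equation*}
R(x,\sigma) = \sigma^{-1}(g(x)).
\end{equation*}
So the event $R(x,\sigma) = i$ coincides with the event $\sigma(i) = g(x)$.

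It then remains to count permutations. For fixed $i$ and fixed target value $m = g(x)$, the permutations $\sigma \in S_k$ with $\sigma(i) = m$ are exactly those that send $i$ to $m$ and biject the remaining $k-1$ positions onto the remaining $k-1$ values. There are $(k-1)!$ of them. Because $\sigma$ is uniform on $S_k$,
\begin{equation*}
p(i \mid X = x) = \frac{\left|\{\sigma \in S_k : \sigma(i) = g(x)\}\right|}{k!} = \frac{(k-1)!}{k!} = \frac{1}{k}.
\end{equation*}
Equivalently, if $\sigma$ is uniform then so is $\sigma^{-1}$, so $\sigma^{-1}(m)$ is uniform on $\{1,\ldots,k\}$ for any fixed $m$.

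The step needing the most care is the first: making precise that the recommendation is a deterministic function of $(x,\sigma)$ in this zero-temperature setting, and that conditioning on $X = x$ leaves $\sigma$ uniform. The latter holds because $\sigma$ is drawn independently of $X$. The remaining counting is routine.

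As a remark, averaging over $X \sim \mathcal{D}$ then yields the unconditional statement $p(i) = 1/k$ quoted earlier in \autoref{sec:mathematical_modeling}.
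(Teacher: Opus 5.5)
Your proposal is correct and follows the paper's proof. Both rewrite $R(x,\sigma) = \sigma^{-1}(g(x))$, reduce the event $R(x,\sigma)=i$ to $\sigma(i) = g(x)$, and conclude from the uniformity of $\sigma(i)$ over $\{1,\ldots,k\}$. You justify that uniformity by the explicit count $(k-1)!/k!$, where the paper simply asserts it.
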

\begin{proof}
    Fix an instance $X = x$. Since $\sigma(R(x, \sigma)) = g(x)$ for all $\sigma \in S_k$, we have $R(x, \sigma) = \sigma^{-1}(g(x))$. Therefore,
    \begin{equation*}
    \begin{split}
        p(i \mid X = x) &= \mathbb{E}_{\sigma \sim U(S_k)} \mathbf{1}[R(x, \sigma) = i] \\
        &= \mathbb{E}_{\sigma \sim U(S_k)} \mathbf{1}[\sigma^{-1}(g(x)) = i] \\
        &= \mathbb{E}_{\sigma \sim U(S_k)} \mathbf{1}[\sigma(i) = g(x)] \\
        &= \frac{1}{k},
    \end{split}
    \end{equation*}
    where the last equality holds because, for $\sigma$ drawn uniformly from $S_k$, the value $\sigma(i)$ is uniformly distributed over $\{1, 2, \ldots, k\}$.
\end{proof}

 % Description of statistical test and power (Chi-square goodness-of-fit test)
Once we have determined that the probability distribution $p$ defined in~\autoref{eq:probability_distribution} should be close to a uniform distribution under the absence of positional bias, we can evaluate positional bias by testing whether $p$ deviates from the uniform distribution. However, the notion of absence of positional bias used in \autoref{prop:deterministic} assumes that, for each instance, a single item is recommended regardless of the permutation applied. This excludes both decoding at non-zero temperature, where the recommended item may vary even for a fixed prompt, and the case in which several candidates are comparably relevant and the model may spread its recommendations among them. Hence, we describe the statistical test in the stochastic setting below, where absence of positional bias is defined on the distribution of the recommended item, which may be spread over several items as long as it does not depend on the ordering, and is further relaxed to tolerate small positional effects. This setting includes the deterministic one as a particular case. This test requires fewer samples compared to the methodology proposed in~\autoref{sec:independency_methodology}, as shown in \autoref{sec:simulation_study}.

\subsubsection{Stochastic setting}

In certain settings, the LLM may operate with non-zero temperature, so items recommended will vary each time we query the same recommendation problem. In such a setting, the LLM recommends item $P_j$ with some probability $\mathrm{Pr}(P_j \mid X, \sigma)$ that depends on the query and items $X = (Q, P_1, P_2, \ldots, P_k)$ and the ordering $\sigma$ of the items in the prompt. Ideally, in the absence of positional bias, this probability distribution over items should not depend on $\sigma$, that is, we should have $\mathrm{Pr}(P_j \mid X, \sigma) = \mathrm{Pr}(P_j \mid X, \varphi)$ for any permutations $\sigma, \varphi \in S_k$. However, in practice, such definition seems difficult to fully attain by current LLM architectures. Therefore, we introduce the concept of $\varepsilon$-absence of positional bias, which only requires the distributions  $\mathrm{Pr}(\cdot \mid X, \sigma)$ and $\mathrm{Pr}(\cdot \mid X, \varphi)$ to be reasonably similar.

\begin{definition}\label{def:eps-absence}
    We say that there is $\varepsilon$-absence of positional bias if, for all instances $X = x$ and all permutations $\sigma, \varphi \in S_k$,
    \begin{equation*}
        \mathrm{TV}\!\left(\mathrm{Pr}(\cdot \mid x, \sigma),\; \mathrm{Pr}(\cdot \mid x, \varphi)\right) \leq \varepsilon,
    \end{equation*}
    where $\mathrm{TV}(\pi, \mu) = \frac{1}{2} \sum_{j=1}^k |\pi(j) - \mu(j)|$ denotes the total variation distance.
\end{definition}

\begin{theorem}\label{thm:stochastic}
    Under $\varepsilon$-absence of positional bias, we have $\mathrm{TV}(p(\cdot), U_k) \le \varepsilon$, where $U_k$ denotes the uniform distribution over $\{1, 2, \ldots, k\}$. In fact, for all instances $x$, $\mathrm{TV}\!\left(p(\cdot \mid x),\; U_k\right) \leq \varepsilon$.
\end{theorem}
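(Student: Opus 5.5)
Fix an instance $x$ and write $q_\sigma = \mathrm{Pr}(\cdot \mid x, \sigma)$, a distribution over items $P_1,\dots,P_k$, indexed by original index. The plan is to compare the true process with an idealized, position-independent one. Set $\bar q = \frac{1}{k!}\sum_{\varphi\in S_k} q_\varphi$, the average item distribution over orderings. By convexity of total variation and \autoref{def:eps-absence}, $\mathrm{TV}(q_\sigma,\bar q) \le \frac{1}{k!}\sum_\varphi \mathrm{TV}(q_\sigma,q_\varphi) \le \varepsilon$ for every $\sigma$. Alternatively, one can compare with any fixed $q_\varphi$; the bound is the same.

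Next, express $p(i\mid x)$ in terms of the $q_\sigma$. The item at prompt position $i$ under $\sigma$ is $P_{\sigma(i)}$, so
\begin{equation*}
p(i\mid x) = \frac{1}{k!}\sum_{\sigma\in S_k} q_\sigma(\sigma(i)).
\end{equation*}
The idealized counterpart is $\tilde p(i) = \frac{1}{k!}\sum_\sigma \bar q(\sigma(i))$. Exactly as in \autoref{prop:deterministic}, $\sigma(i)$ is uniform on $\{1,\dots,k\}$ when $\sigma\sim U(S_k)$, so $\tilde p(i) = \frac1k\sum_j \bar q(j) = 1/k$. This is the stochastic analogue of the zero-temperature argument.

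The key step is then
\begin{equation*}
2\,\mathrm{TV}(p(\cdot\mid x),U_k) = \sum_i \Bigl|\frac{1}{k!}\sum_\sigma \bigl(q_\sigma(\sigma(i)) - \bar q(\sigma(i))\bigr)\Bigr| \le \frac{1}{k!}\sum_\sigma \sum_i |q_\sigma(\sigma(i))-\bar q(\sigma(i))|.
\end{equation*}
For each fixed $\sigma$, the inner sum over $i$ is a reindexing by the bijection $\sigma$, so it equals $\sum_j |q_\sigma(j)-\bar q(j)| = 2\,\mathrm{TV}(q_\sigma,\bar q)\le 2\varepsilon$. Hence $\mathrm{TV}(p(\cdot\mid x),U_k)\le\varepsilon$.

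Finally, $p(\cdot) = \mathbb{E}_{X\sim\mathcal D}\, p(\cdot\mid X)$. Joint convexity of TV, with $U_k$ fixed, gives $\mathrm{TV}(p,U_k)\le \mathbb{E}_X\,\mathrm{TV}(p(\cdot\mid X),U_k)\le\varepsilon$.

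The only delicate point is bookkeeping. One must distinguish prompt positions from original indices and apply the triangle inequality before the reindexing by $\sigma$. If the order is reversed, the permutation-dependence of $q_\sigma$ is lost.
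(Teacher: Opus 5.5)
Your proof is correct and follows the paper's argument. The only difference is that the paper compares $r_\sigma$ with a fixed reference distribution $r_{\varphi_0}$ instead of the average $\bar q$, which spares your preliminary convexity step $\mathrm{TV}(q_\sigma,\bar q)\le\varepsilon$; the identity $p(i\mid x)=\mathbb{E}_\sigma[q_\sigma(\sigma(i))]$, the centering via uniformity of $\sigma(i)$, the triangle inequality followed by reindexing through $\sigma$, and the final convexity step over $X$ are the same.
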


The proof of \autoref{thm:stochastic} is given in \autoref{app:proof_theorem}.
\begin{remark}
    \autoref{prop:deterministic} is a particular case of  \autoref{thm:stochastic} with $\varepsilon = 0$. Indeed, $\mathrm{TV}(p(\cdot \mid x), U_k) = 0$ implies $p(i \mid x) = 1/k$ for all $i$. Moreover, here we still allow for stochastic answers of the RS.
\end{remark}

\autoref{thm:stochastic} establishes that $\varepsilon$-absence of positional bias implies $\mathrm{TV}(p(\cdot), U_k) \le \varepsilon$. The null hypothesis naturally associated with $\varepsilon$-absence is therefore the interval hypothesis
\begin{equation}\label{eq:eps_null}
    H_0^{\varepsilon}: \mathrm{TV}(p(\cdot), U_k) \le \varepsilon
    \qquad \text{versus} \qquad
    H_1^{\varepsilon}: \mathrm{TV}(p(\cdot), U_k) > \varepsilon,
\end{equation}
and, by contraposition, rejecting $H_0^{\varepsilon}$ provides evidence against $\varepsilon$-absence of positional bias. Testing exact uniformity of $p$ corresponds to the special case $\varepsilon = 0$. Since exact absence of positional bias is rarely attainable in practice, and a test of exact uniformity may flag arbitrarily small deviations when the sample is large, we now describe a test of $H_0^{\varepsilon}$ for arbitrary $\varepsilon \ge 0$.

The test relies on the following characterization of the total variation distance to the uniform distribution. Writing $x_+ = \max(x, 0)$, we have
\begin{equation}\label{eq:tv_positive_part}
    \mathrm{TV}(p(\cdot), U_k)
    = \max_{A \subseteq \{1, 2, \ldots, k\}} \left( p(A) - \frac{|A|}{k} \right)
    = \sum_{i=1}^{k} \left( p(i) - \frac{1}{k} \right)_{\!+},
\end{equation}
where $p(A) = \sum_{i \in A} p(i)$, and the maximum is attained at $A^{*} = \{ i : p(i) > 1/k \}$. Writing $\mu_i = p(i) - 1/k$ for each prompt position $i$, it therefore suffices to obtain simultaneous lower confidence bounds for the $k$ scalar quantities $\mu_1, \ldots, \mu_k$.

Let $X_1, \ldots, X_n$ be instances sampled independently from our data distribution $\mathcal{D}$ (see \autoref{sec:mathematical_modeling}). For each $X_j$ we sample $\ell$ permutations $\sigma_{j,1}, \ldots, \sigma_{j,\ell}$ uniformly and independently from $S_k$, and determine $R(X_j, \sigma_{j,t})$ for all $t$. We define
\begin{equation*}
    \hat{q}_j(i) = \frac{1}{\ell} \left\lvert \left\{ t \in \{1, 2, \ldots, \ell\} : R(X_j, \sigma_{j,t}) = i \right\} \right\rvert,
\end{equation*}
that is, $\hat{q}_j(i)$ is the fraction of evaluated permutations for instance $X_j$ in which the LLM selected the item at position $i$ in the prompt.\footnote{When some generations are discarded due to parsing errors or refusals (see \autoref{sec:experimental_setup}), $\hat{q}_j(i)$ is computed over the valid generations of $X_j$, and the analysis below is understood conditionally on their number.} For each position $i \in \{1, 2, \ldots, k\}$, we define the instance-level statistics
\begin{equation*}
    Z_j(i) = \hat{q}_j(i) - \frac{1}{k}, \qquad j = 1, 2, \ldots, n.
\end{equation*}
Since the instances are sampled independently and the permutations are drawn independently of each other and of the instances, the variables $Z_1(i), \ldots, Z_n(i)$ are independent and identically distributed. Moreover, $\mathbb{E}[\hat{q}_j(i)] = p(i)$ by \autoref{eq:probability_distribution}, so their common mean is $\mathbb{E}[Z_j(i)] = \mu_i = p(i) -1/k$.

Fix a significance level $\alpha \in (0,1)$; in our experiments $\alpha = 0.05$. Let $\bar{Z}(i)$ and $s_i$ denote the sample mean and standard deviation of $Z_1(i), \ldots, Z_n(i)$. For each position $i$ we define the one-sided lower confidence bound
\begin{equation}\label{eq:position_lower_bound}
    L_i = \bar{Z}(i) - t_{1-\alpha/k,\, n-1} \, \frac{s_i}{\sqrt{n}},
\end{equation}
where $t_{1-\alpha/k,\,n-1}$ denotes the $(1-\alpha/k)$-quantile of the Student's $t$ distribution with $n-1$ degrees of freedom, and we aggregate these bounds into
\begin{equation}\label{eq:tv_lower_bound}
    L = \sum_{i=1}^{k} \left( L_i \right)_{\!+}.
\end{equation}
The quantities $s_i$, $L_i$ and $L$ depend on $n$, which we omit from the notation.

\begin{proposition}\label{prop:lower_bound}
    Fix $k \ge 2$, $\ell \ge 1$ and $\alpha \in (0,1)$, and let $L$ be the bound defined in~\autoref{eq:tv_lower_bound}, computed from $n$ independent instances. Then
    \begin{equation*}
        \liminf_{n \to \infty} \; \Pr\!\left[\, L \le \mathrm{TV}(p(\cdot), U_k) \,\right] \;\ge\; 1 - \alpha .
    \end{equation*}
\end{proposition}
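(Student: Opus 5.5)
The argument has two ingredients: monotonicity of the map $(y_1,\dots,y_k) \mapsto \sum_i (y_i)_+$, and a union bound over the $k$ one-sided Student $t$ bounds $L_i$. By~\autoref{eq:tv_positive_part}, $\mathrm{TV}(p(\cdot), U_k) = \sum_{i=1}^k (\mu_i)_+$. Since $y \mapsto y_+$ is nondecreasing, on the event
\[
E_n = \{ L_i \le \mu_i \text{ for all } i \in \{1,\ldots,k\} \}
\]
we have $L = \sum_i (L_i)_+ \le \sum_i (\mu_i)_+ = \mathrm{TV}(p(\cdot), U_k)$. Therefore $\Pr[L \le \mathrm{TV}(p(\cdot),U_k)] \ge \Pr[E_n] \ge 1 - \sum_{i=1}^k \Pr[L_i > \mu_i]$. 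It then suffices to show that $\limsup_{n\to\infty} \Pr[L_i > \mu_i] \le \alpha/k$ for each fixed $i$, and to sum over the finitely many positions.

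Fix $i$. The variables $Z_1(i),\ldots,Z_n(i)$ are i.i.d. with mean $\mu_i$, as noted before the statement. They are bounded, taking values in $[-1/k, 1-1/k]$, so their variance $v_i$ is finite. We have
\[
\Pr[L_i > \mu_i] = \Pr\!\left[\sqrt{n}\,(\bar Z(i) - \mu_i)/s_i > t_{1-\alpha/k,\,n-1}\right].
\]
In the main case $v_i > 0$, I would argue as follows. The central limit theorem gives $\sqrt{n}(\bar Z(i)-\mu_i) \Rightarrow N(0, v_i)$. The law of large numbers gives $s_i \to \sqrt{v_i}$ almost surely, so by Slutsky the studentized statistic converges in distribution to $N(0,1)$. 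Moreover, $t_{1-\alpha/k,n-1} \to z_{1-\alpha/k}$, the standard normal quantile. Because the normal distribution function is continuous, the probability above converges to $1-\Phi(z_{1-\alpha/k}) = \alpha/k$.

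The step I expect to need the most care is the degenerate case $v_i = 0$, where Slutsky does not apply. In this case $Z_j(i) = \mu_i$ almost surely, so $\bar Z(i) = \mu_i$ and $s_i = 0$. Hence $L_i = \mu_i$ and the event $\{L_i > \mu_i\}$ has probability zero. This case can occur, for instance when $\ell = 1$ and the model never selects position $i$. A related subtlety appears when $v_i > 0$: for finite $n$ the statistic $s_i$ may vanish with positive probability. With the convention that the ratio is then infinite or undefined, I would note that $\Pr[s_i = 0] \le \Pr[\text{all } Z_j(i) \text{ equal}] \to 0$, so this event does not affect the limit.

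Combining the two cases gives $\limsup_{n\to\infty} \Pr[L_i > \mu_i] \le \alpha/k$ for every $i$. Summing over the $k$ positions in the union bound yields $\liminf_{n\to\infty} \Pr[L \le \mathrm{TV}(p(\cdot),U_k)] \ge 1-\alpha$.
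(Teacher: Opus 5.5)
Your proposal is correct and follows the paper's proof essentially step for step: you prove the per-position bound $\limsup_{n\to\infty}\Pr[L_i > \mu_i] \le \alpha/k$ via the CLT, Slutsky and $t_{1-\alpha/k,\,n-1} \to z_{1-\alpha/k}$, handle the zero-variance case separately, take a union bound over the $k$ positions, and conclude with monotonicity of $y \mapsto y_+$ together with \autoref{eq:tv_positive_part}. Your handling of the event $\{s_i = 0\}$ when $v_i > 0$, showing its probability vanishes, does the same job as the paper's convention $T_n(i) = 0$ on that event combined with $\Pr[s_i = 0] \to 0$.
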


The proof of \autoref{prop:lower_bound} is given in \autoref{app:proof_theorem}.

\begin{corollary}\label{cor:test_level}
    Fix $k \ge 2$, $\ell \ge 1$ and $\alpha \in (0,1)$, and let $L$ be the bound defined in~\autoref{eq:tv_lower_bound}, computed from $n$ independent instances. For every $\varepsilon \ge 0$, the test that rejects $H_0^{\varepsilon}$ whenever $L > \varepsilon$ has asymptotic level $\alpha$, that is, $\limsup_{n \to \infty} \Pr[L > \varepsilon] \le \alpha$ whenever $\mathrm{TV}(p(\cdot), U_k) \le \varepsilon$.
\end{corollary}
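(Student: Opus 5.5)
The corollary should follow directly from \autoref{prop:lower_bound}, by comparing events. Fix $\varepsilon \ge 0$ and assume the null hypothesis $H_0^{\varepsilon}$, that is, $\mathrm{TV}(p(\cdot), U_k) \le \varepsilon$. The plan rests on the event inclusion
\begin{equation*}
    \{L > \varepsilon\} \subseteq \{L > \mathrm{TV}(p(\cdot), U_k)\}.
\end{equation*}
This holds because on $\{L > \varepsilon\}$ we have $L > \varepsilon \ge \mathrm{TV}(p(\cdot), U_k)$. The inclusion gives, for every $n$,
\begin{equation*}
    \Pr[L > \varepsilon] \le \Pr\!\left[L > \mathrm{TV}(p(\cdot), U_k)\right] = 1 - \Pr\!\left[L \le \mathrm{TV}(p(\cdot), U_k)\right].
\end{equation*}

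Next, take $\limsup_{n\to\infty}$ on both sides. Use the identity $\limsup_n (1 - a_n) = 1 - \liminf_n a_n$, with $a_n$ the probability of the event $\{L \le \mathrm{TV}(p(\cdot), U_k)\}$ for sample size $n$. \autoref{prop:lower_bound} gives $\liminf_n a_n \ge 1-\alpha$. Hence $\limsup_{n\to\infty}\Pr[L>\varepsilon] \le 1 - (1-\alpha) = \alpha$, which is the claimed asymptotic level.

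I expect no real obstacle here. The only points to handle carefully are the following.
\begin{itemize}
    \item $\mathrm{TV}(p(\cdot), U_k)$ is a deterministic population quantity that does not depend on $n$. The event comparison is therefore legitimate for each fixed $n$, and the bound then passes to the limit.
    \item The hypotheses $k \ge 2$, $\ell \ge 1$ and $\alpha \in (0,1)$ are exactly those of \autoref{prop:lower_bound}, so that result applies without modification.
    \item The case $\varepsilon = 0$ (the exact uniformity test) is covered by the same argument, with no separate treatment needed.
\end{itemize}
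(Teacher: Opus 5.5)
Your proposal is correct and follows the paper's proof: under $H_0^{\varepsilon}$ you use the inclusion $\{L > \varepsilon\} \subseteq \{L > \mathrm{TV}(p(\cdot), U_k)\}$ and then invoke \autoref{prop:lower_bound}. The only difference is that you write out the complement and the $\limsup/\liminf$ step, which the paper's one-line proof leaves implicit.
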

\begin{proof}
    Under $H_0^{\varepsilon}$, the event $L > \varepsilon$ implies $L > \mathrm{TV}(p(\cdot), U_k)$, whose probability has $\limsup$ at most $\alpha$ by \autoref{prop:lower_bound}.
\end{proof}

By \autoref{cor:test_level}, $\varepsilon$-absence of positional bias is rejected at level $\alpha$ for every $\varepsilon < L$, so that no tolerance needs to be fixed in advance. Only $k$ statistics are computed to determine $L$, rather than the $2^k$ subsets appearing in~\autoref{eq:tv_positive_part}, so the procedure is computationally efficient for every candidate-set size considered in this work. Beyond the aggregate bound $L$, the individual bounds $L_i$ are informative in their own right: a position $i$ with $L_i > 0$ is over-selected with respect to the uniform baseline, at simultaneous confidence level $1-\alpha$. We report $L$ as a percentage in \autoref{sec:results_analysis}.

%The Bonferroni correction in~\autoref{eq:position_lower_bound} is conservative, as it disregards the dependence between the $k$ statistics. A less conservative alternative is to replace $t_{1-\alpha/k,\,n-1}$ by the $(1-\alpha)$-quantile of the bootstrap distribution of $\max_{i} \sqrt{n}\,(\bar{Z}^{*}(i) - \bar{Z}(i)) / s^{*}_i$, where $\bar{Z}^{*}(i)$ and $s^{*}_i$ are computed on a resample of the $n$ instances drawn with replacement (max-$t$ bootstrap~\cite{westfall1993resampling, romano2005exact}). In our setting, however, the two choices yield nearly identical critical values, since the statistics $Z_j(1), \ldots, Z_j(k)$ are only weakly correlated. For instance, for $n = 2000$ and $\alpha = 0.05$, the Bonferroni critical values are $2.33$ for $k = 5$ and $2.81$ for $k = 20$, while the bootstrap yields [2.34] and [2.73], respectively. We therefore report results based on~\autoref{eq:position_lower_bound}.

The construction above is valid for any number of permutations per instance $\ell \ge 1$, including $\ell = 1$, since $\mathbb{E}[\hat{q}_j(i)] = p(i)$ for every $\ell$, and hence $\bar{Z}(i)$ is always an unbiased estimator of $\mu_i$. The choice of $\ell$ affects only the variance of the instance-level statistics, which decomposes as
\begin{equation*}
    \operatorname{Var}(Z_j(i)) = \operatorname{Var}_X\!\left( p(i \mid X) \right)
    + \frac{1}{\ell} \, \mathbb{E}_X\!\left[ \operatorname{Var}_{\sigma}\!\left( \mathbf{1}\left[ R(X, \sigma) = i \right] \mid X \right) \right].
\end{equation*}
Only the second, within-instance term decreases with $\ell$, while the first, between-instance term is irreducible. Consequently, increasing $\ell$ yields diminishing returns, and beyond a moderate value the power of the test is governed by the number of instances $n$. In the simulation study of \autoref{sec:simulation_study}, the between-instance term vanishes, since $p(i \mid X)$ does not depend on $X$ in the simulated recommender, and the required number of instances decreases roughly as $1/\ell$ (\Cref{tab:min_samples}); in real data, where the relevance of the candidates varies across queries, the gains from increasing $\ell$ are expected to saturate earlier. In our experiments we use $\ell = 20$, matching the query-permutation dataset described in \autoref{sec:datasets}.

\section{Empirical results and analysis of positional bias}
\label{sec:results}

This section presents the empirical evaluation and statistical analysis conducted to assess positional bias in LLM-based RS. The experiments are designed to test the null hypotheses of positional independence and uniformity formalized in ~\autoref{sec:methodologies}, using the POB-ICM and POB-UNBP detection methodologies introduced in ~\autoref{sec:independency_methodology} and~\autoref{sec:uniform_methodology}, respectively.

We first describe the construction of the evaluation datasets in ~\autoref{sec:datasets}, including the generation of user queries, candidate item sets, and their permutations. Secondly, ~\autoref{sec:experimental_setup} details the experimental setup and implementation choices, which strictly follow the RS design introduced in ~\autoref{sec:recommender_implementation}. Finally, ~\autoref{sec:results_analysis} reports the empirical results and analyzes the observed positional bias in the different recommended LLMs, highlighting both statistical significance and qualitative behavioral patterns.

\subsection{Evaluation dataset} \label{sec:datasets}

The dataset used in this study is derived by integrating and transforming domain-specific data sources. The main external source is the \textit{Amazon ML Challenge 2023} dataset. This dataset, in conjunction with user queries automatically generated, forms the basis of the final RAG-query dataset.

\textit{User-query dataset}. This dataset contains $2000$ different user queries for RS. Each query asks for a specific type of product, stating the user preferences. See \autoref{fig:user_queries_examples} for examples of such queries. The user-query dataset can be generated using our source code\footnote{\url{https://github.com/GoComputing/Positional-Bias.git}}. To construct this dataset, we start with $56$ distinct query templates, which ask the RS for a generic recommendation, using different wording and user preferences in each. Each template has placeholders for a product, up to three product attributes (such as quality, brand, feature, style, color, size, material, or condition), and optionally a use case or feature. We consider $194$ distinct products names, which can be grouped in $18$ categories: electronics, fashion, home appliances, beauty, sports, automotive, books, toys, groceries, office supplies, pet supplies, fitness, gaming, travel, outdoor equipment, health, baby products, and jewelry. Our user-query dataset has been constructed by sampling a category, a product within the category, a query template, and compatible attributes uniformly at random (discarding the repetitions), ensuring that each selected attribute is compatible with the chosen product category. Of the $69{,}071{,}332$ total possible queries, we sampled $2000$ in order to minimize the running time of the experiments while having a large enough dataset to achieve statistical significance in our results.

\begin{figure}
    \begin{tcolorbox}[colback=gray!5!white, colframe=gray, sharp corners, title={Examples of queries}]
    \centering
    \begin{itemize}
        \item Show me \underline{reputable brand} \underline{fitness trackers}.
        \item What's the best \underline{modern} \underline{portable speakers} \underline{for students}?
        \item \underline{humidifiers} that is \underline{large} and \underline{new}.
        \item What's a good \underline{compact} \underline{suspension trainers} \underline{for gym}?.
    \end{itemize}
\end{tcolorbox}
    \caption{Several user queries examples included in our dataset. These are generated from templates, where the underlined words indicate a placeholder.}
    \label{fig:user_queries_examples}
\end{figure}

\textit{Products dataset}. This dataset contains $50{,}000$ distinct products, each with a name/title and a description, which we have sampled from the \textit{Amazon ML Challenge 2023 dataset}\footnote{Available at \url{https://www.kaggle.com/datasets/ashisparida/amazon-ml-challenge-2023}}. Examples of such products are shown in \autoref{fig:products_examples}. Before sampling, we have discarded all products with description shorter than $40$ characters or empty values in the product type ID. The full Amazon ML Challenge 2023 dataset contains over $2.9$ million products spanning $9{,}919$ distinct browse node categories. Its domain coverage is broad, encompassing a wide variety of product types such as electronics, clothing, home appliances, food, movies, dietary supplements, and personal care products, among others. While our evaluation is therefore restricted to the product recommendation domain, the diversity of product types within this domain lends robustness to our findings and reduces the risk of conclusions being specific to a narrow product niche.

\textit{RAG-query dataset}. This dataset contains $2000$ different queries. Each one of them consists of a user query and $5$ products from the product dataset. We have constructed this data set running a RAG scheme on the user-query data set and the product dataset. That is, for each user query, the top $5$ products ($K=5$) are selected from the product dataset, as described in~\autoref{sec:recommender_implementation}. Thus, we obtain a tuple ($Q$, $P_1$, $P_2$, $P_3$, $P_4$, $P_5$), where $Q$ is the user query and $P_1, \ldots, P_5$ are the top-$5$ products. For the robustness analysis across candidate-set sizes (\autoref{sec:results_unbp}), analogous datasets are built from the same user queries with the top-$k$ products, for $k \in \{10, 15, 20\}$. In an LLM-based RS, this RAG-query is sent to the LLM (by filling in a prompt template with the query and the $5$ products), to get a recommendation. Our hypothesis is that LLMs exhibit a systematic preference for certain positions within the candidate list, independently of the relevance of the products.

\begin{figure}
    \begin{tcolorbox}[colback=gray!5!white, colframe=gray, sharp corners, title={Products titles associated with "Find me an eco-friendly protein snack." query}]
    \centering
    \begin{itemize}
        \item Green Sense Organic Multigrain Flour | Multigrain Atta | Gluten Free | Helps in Weight Loss, High in Protein - 500gm x 1 | Pack of 1
        \item Go Vegan California Roasted Pistachios, 400 Gram | Pista, Lightly Salted - (Pack of 2 -200g Each)
        \item Go Vegan Stone Candy - 400g, Rock Candy, Milk Stone Chocolate [Jar Pack]
        \item Nutritional Designs Sugar-Free Liquid Collagen \& Whey Protein Supplement - Non-GMO Drink - Promotes Healthy Skin \& Hair for Men \& Women (Neutral)
        \item Dev.Pro. Date and Nuts Drops, Healthy Snack with High Fiber, Gluten Free, Vegan - Cocoa Coconut 40g, (Pack of 5)
    \end{itemize}
\end{tcolorbox}
    \caption{Titles of the selected products for the query "Find me an eco-friendly protein snack". These are sorted by the cosine similarity to the query using BAAI embedding model.}
    \label{fig:products_examples}
\end{figure}

\textit{Query-permutation dataset}. This dataset contains, for each tuple ($Q$, $P_1$, $P_2$, $P_3$, $P_4$, $P_5$) in the RAG-query dataset, $20$ independently and uniformly sampled permutations of the $5$ candidate products. Thus, this dataset contains $2000 \cdot 20 = 40{,}000$ instances. In our experiments, each LLM analyzed is evaluated on this dataset as explained in~\autoref{sec:experimental_setup}. The resulting query-permutation dataset yields sufficient sample size to support the tests employed in ~\autoref{sec:results_analysis}, whose $p$-values are calibrated with a query-level permutation test and thus do not rely on asymptotic expected-count conditions (\autoref{sec:independency_methodology}). The same procedure is applied to the datasets with $k \in \{10, 15, 20\}$, each also yielding $40{,}000$ instances. Unless stated otherwise, all results are based on these sampled configurations. In addition, to better illustrate the relevance signal of the retriever, for $k = 5$ we also use an exhaustive configuration, in which all $5! = 120$ permutations are evaluated for a subset of $730$ queries of the RAG-query dataset.

\subsection{Positional bias metrics}
\label{sec:metrics}

Quantifying positional bias requires distinguishing between \emph{permutation sensitivity} and positional bias itself, as noted by~\cite{liusie-etal-2024-teacher}. Permutation sensitivity captures the instability of the model under different orderings of the same candidate set, a \emph{local} property that may change on a per-query basis. Positional bias, by contrast, refers to a systematic tendency to favor certain prompt positions regardless of item relevance, a \emph{global} property reflecting consistent patterns across queries. Crucially, permutation sensitivity does not necessarily imply positional bias, but positional bias necessarily induces some degree of permutation sensitivity. To capture both aspects, we introduce the following metrics.

\paragraph{Permutation sensitivity (PS).} We quantify permutation sensitivity as the probability that the recommended item differs from the item most frequently recommended for the same instance across permutations:
\begin{equation}
    \mathrm{PS} = 100 \cdot \mathbb{E}_{X \sim \mathcal{D}}\left[ \Pr_{\sigma \sim U(S_k)}\!\left[\sigma\!\left(R(X,\sigma)\right) \neq m(X)\right] \right],
    \label{eq:permutation_sensitivity}
\end{equation}
where $m(X) = \arg\max_{j} \Pr_{\sigma \sim U(S_k)}[\sigma(R(X,\sigma)) = j]$ is the most frequently recommended item for $X$. PS provides a directly interpretable practical impact metric: it is the fraction of cases in which the top recommendation changes solely due to item reordering.

\paragraph{Positional bias (PB).} We quantify positional bias as the total variation distance between the position-selection distribution $p(\cdot)$ and the uniform distribution $U_k$, expressed as a percentage:
\begin{equation}
    \mathrm{PB} = 100 \cdot \mathrm{TV}\!\left(p(\cdot),\, U_k\right).
    \label{eq:positional_bias_metric}
\end{equation}
To quantify the statistical uncertainty of this estimate, we construct a $(1-\alpha)$ confidence interval via percentile bootstrap resampling at the query level. Let $\widehat{\mathrm{PB}}^*_1, \ldots, \widehat{\mathrm{PB}}^*_B$ denote the PB values computed from $B$ bootstrap resamples. The confidence interval is then
\begin{equation}
    \mathrm{CI}_{1-\alpha}(\mathrm{PB}) =
    \left[
    \widehat{\mathrm{PB}}^*_{(\lfloor B\alpha/2\rfloor)},\;
    \widehat{\mathrm{PB}}^*_{(\lceil B(1-\alpha/2)\rceil)}
    \right],
    \label{eq:pb_confidence_interval}
\end{equation}
where $\widehat{\mathrm{PB}}^*_{(j)}$ denotes the $j$-th order statistic of the bootstrap distribution. In our experiments we set $\alpha = 0.05$, yielding 95\% confidence intervals. Note that $\mathrm{PB} \geq 0$ by definition of total variation distance, with $\mathrm{PB} = 0$ only under exact uniformity; in practice the confidence interval is bounded away from zero.

Together, PS measures the \emph{practical instability} of recommendations under reordering, while PB quantifies the \emph{structural positional preference} of the model. The latter is meaningful as evidence of positional bias only when the null hypothesis of uniformity has been rejected by the POB-UNBP methodology; otherwise, deviations from uniformity may reflect sampling variability alone.

\begin{remark}\label{rem:ps_upper_bound}
    Permutation sensitivity is an upper bound of positional bias, $\mathrm{PB} \le \mathrm{PS}$. Indeed, the proof of \autoref{thm:stochastic} (\autoref{app:proof_theorem}) only uses that $r_{\varphi_0}$ is a probability distribution over items, so it shows that $\mathrm{TV}(p(\cdot \mid x), U_k) \le \mathbb{E}_{\sigma}\left[\mathrm{TV}(r_\sigma, \mu)\right]$ for any instance $x$ and any distribution $\mu$ over items. Taking $\mu$ as the point mass at $m(x)$, we have $\mathrm{TV}(r_\sigma, \mu) = 1 - r_\sigma(m(x))$, and hence $\mathrm{TV}(p(\cdot \mid x), U_k) \le \Pr_{\sigma}\left[\sigma(R(x,\sigma)) \neq m(x)\right]$. Taking expectations over $X \sim \mathcal{D}$ and using the convexity of the total variation distance yields the claim.
\end{remark}

\paragraph{Strong positional bias (sPB).} The metrics PS and PB in \autoref{eq:permutation_sensitivity}--\autoref{eq:positional_bias_metric}, as well as the POB-ICM and POB-UNBP methodologies, treat any dependence of the recommended item on its position as positional bias, regardless of the relevance of the candidates involved. Such a dependence can manifest in two distinct ways. When several candidates are comparably relevant for the user query, the position of the items may determine which of them is recommended; we refer to this as \emph{soft positional bias}. This form of bias does not degrade the relevance of the recommendation, but it still makes the exposure of comparably relevant items depend on the prompt construction rather than on their content. Alternatively, the position may lead the model to recommend an item that is clearly less relevant than the alternatives; we refer to this as \emph{strong positional bias}. Positional bias, as detected by POB-ICM and POB-UNBP and quantified by PB, comprises both forms; we refer to this combination as \emph{overall positional bias}. To make this distinction operational, we introduce a query-dependent relevance set $\mathcal{R}(X) \subseteq \{1, 2, \ldots, k\}$, identifying the candidate items considered comparably relevant for instance $X$. For each instance, we count how often each candidate item is recommended across its sampled permutations.

\begin{definition}\label{def:relevance_set}
    Given a threshold $\tau \in (0, 1]$, the relevance set $\mathcal{R}(X)$ of an instance $X$ is the smallest set of its most frequently recommended items that together account for at least a fraction $\tau$ of its recommendations.
\end{definition}

The threshold $\tau$ is calibrated from the data. The recommendation frequencies of an instance, sorted in decreasing order, usually show a natural break between the items the model treats as plausibly interchangeable and those it rarely selects regardless of position. For each instance, we locate this break as the largest gap between consecutive sorted frequencies and compute the fraction of recommendations captured by the items above it; $\tau$ is the average of this fraction across instances. Using a common threshold, instead of locating the break separately in each instance, makes the relevance set less sensitive to the noise of the frequencies of individual instances, which are estimated from a limited number of permutations.

We then define strong positional bias as the proportion of recommendations that fall outside the relevance set:
\begin{equation}
    \mathrm{sPB} = 100 \cdot \Pr_{X \sim \mathcal{D}, \sigma \sim U(S_k)}\left[\sigma(R(X,\sigma)) \notin \mathcal{R}(X)\right].
    \label{eq:strong_positional_bias_metric}
\end{equation}
To avoid calibrating the threshold on the same instances on which sPB is evaluated, we use $5$-fold cross-validation over instances. For each fold, $\tau$ is calibrated on the other four folds, and the relevance sets and sPB are computed on the held-out fold. The reported sPB is the average over the five folds, so each instance is evaluated exactly once. Note that sPB and PB are not on a comparable scale, and sPB may exceed PB even though it targets a component of the overall positional bias. The calibrated values of $\tau$ for each setting are reported in \autoref{app:spb_ci}, and they are stable across folds.

\subsection{Experimental setup and implementation details}\label{sec:experimental_setup}

In this section we include relevant details of our experimental setup and implementation. 

\begin{enumerate}
    \item For the RAG scheme, we use the BAAI general embedding model~\cite{xiao_c-pack_2024} as a semantic match between user queries and product titles. For each user query, the top-$5$ most similar products ($K=5$)  are selected, leading to our RAG-query dataset.
    
    \item When evaluating a tuple ($Q$, $\sigma$, $P_1$, $P_2$, $P_3$, $P_4$, $P_5$) from the query-permutation dataset, the final prompt sent to the LLM uses the prompt template given in~\autoref{fig:recommender_prompt}, filled in with the titles and descriptions of the top-$5$ selected products, and the user query. The top-$5$ products appear in this prompt following the order $\sigma$, that is, first the product $P_{\sigma(1)}$, then the product $P_{\sigma(2)}$, etc. 
    
    \item  Several recommender LLMs are tested\footnote{Tested models are in FP16 precision, except for GPT-OSS and LLaMA 3.1 70B}, including LLaMA 3.1 (8B and 70B)~\cite{grattafiori_llama_2024}, LLaMA 4 Scout~\cite{meta2024llama4}, GPT-OSS (20B and 120B)~\cite{gptoss}, and DeepSeek R1 (8B LLaMA distill, 7B Qwen distill and 32B Qwen distill)~\cite{deepseekr1}. The temperature of the recommender model is kept to zero, so the results are deterministic. Thus, if there were no positional bias, the recommender should output the same recommendation for a given user query and top-$5$ products, independently of $\sigma$.
    
    %\item The prompt used in our experiments asks the recommender LLM ($R$ function using our notation) to output the position of the recommended product, obtaining thus a number between $1$ and $5$ ($k=5$), denoted $R(Q, P_1, \ldots, P_5, \sigma)$. Then, we recover back the original position by applying $\sigma(j)$. For example, if the model recommends the product in the $j=3$rd position with the product shuffled using the indices $\left[4, 2, 1, 3, 5\right]$, then the recommended product is $P_1$ (as the first product was moved to the third position after shuffling), $j = R(Q, \sigma, P_1, \ldots, P_5) = 3$, and $\sigma^{-1}(j) = 1$).
    
    \item A \textit{Recommender parser} has been implemented to correct the JSON format. As LLM might fail in generating a JSON following correct syntax, this algorithm fixes the most common mistakes the model makes, including not generating the quotes, not escaping the quotes inside the strings, and so on. As a remark, this recommender model does not always generate a valid recommendation\footnote{A valid recommendation generates a JSON with the specified format and a valid product position.} or the model even refuses to recommend a product if no suitable product is found in the list. We report error rates and refusal rates on \autoref{tab:error_rates}. Since parsing errors are detected and the corresponding samples discarded prior to applying POB-ICM and POB-UNBP, we assess their impact in \autoref{app:failures_order}, where restricting POB-UNBP to the queries with no discarded generations leaves its conclusions unchanged. If failure to produce a valid response is itself influenced by item ordering, this would reflect positional bias already present in the model rather than bias introduced by the methodology.
\end{enumerate}

\begin{table}[t]
    \centering
    \begin{tabular}{c|cccc|cccc}
        \toprule
        \multirow{2}{*}{Model name} & \multicolumn{4}{c|}{Error rate} & \multicolumn{4}{c}{Refusal rate} \\
         & Top-5 & Top-10 & Top-15 & Top-20 & Top-5 & Top-10 & Top-15 & Top-20 \\
        \midrule
        DeepSeek R1 32B & 0.35\% & 0.39\% & 0.24\% & 0.26\% & 4.03\% & 3.89\% & 3.82\% & 4.06\% \\
        DeepSeek R1 7B & 0.81\% & 1.58\% & 1.99\% & 2.46\% & 7.39\% & 5.38\% & 5.14\% & 4.36\% \\
        DeepSeek R1 8B & 0.37\% & 0.40\% & 0.21\% & 0.23\% & 0.01\% & 0.01\% & 0.00\% & 0.00\% \\
        LLaMA 3.1 8B & 0.60\% & 1.99\% & 3.23\% & 4.01\% & 0.01\% & 0.01\% & 0.01\% & 0.01\% \\
        LLaMA 4 Scout & 0.21\% & 0.48\% & 2.05\% & 3.20\% & 0.17\% & 0.06\% & 0.08\% & 0.05\% \\
        GPT OSS 120B & 0.45\% & 0.50\% & 0.45\% & 0.36\% & 0.01\% & 0.02\% & 0.01\% & 0.03\% \\
        GPT-OSS 20B & 1.01\% & 1.07\% & 1.11\% & 1.22\% & 0.04\% & 0.03\% & 0.03\% & 0.02\% \\
        LLaMA 3.1 70B & 0.31\% & 0.41\% & 0.44\% & 0.54\% & 0.12\% & 0.17\% & 0.22\% & 0.26\% \\
        \bottomrule
        \vspace{0mm}
    \end{tabular}
    \caption{Recommendation parsing error rate and refusal rate for each candidate-set size, each computed over the $40{,}000$ samples of the corresponding query-permutation dataset.}
    \label{tab:error_rates}
\end{table}

All of the described implementation details, both for the generation of the dataset and the evaluation of bias, are available in our source code\footnote{\url{https://github.com/GoComputing/Positional-Bias.git}}, which allows reproduction of all of the results presented.

\subsection{Results and analysis}\label{sec:results_analysis}

% Introduce the four experiments
\autoref{tab:tests_results} summarizes the outcome of both proposed methodologies, POB-ICM and POB-UNBP, together with the quantification metrics PS, PB, and sPB for the top-$5$ setting, providing a unified overview of the statistical significance and magnitude of positional bias across all evaluated models. We then examine these results in turn. Subsection~\ref{sec:results_icm} details the POB-ICM results, including the contingency-matrix visualization and the associated bias-inspection analysis, characterizing \emph{how} positional bias manifests for each model. Subsection~\ref{sec:results_unbp} reports the POB-UNBP results, which corroborate these findings with a more sample-efficient test and additionally assess the robustness of positional bias as the candidate-set size $k$ and the decoding temperature $t$ vary. Finally, Subsection~\ref{sec:results_quantification} analyzes the PS, PB, and sPB metrics across these same conditions, disentangling generic permutation sensitivity from the structural, position-driven component of the bias.

\begin{table}
    \centering
    
    \begin{tabular}{x{2.6cm}x{1.7cm}x{2cm}x{1.1cm}x{1.8cm}x{2.8cm}x{2cm}}
        \toprule
        Model name & $p$-value (POB-ICM) & $p$-value (POB-UNBP) & $L$ & Permutation sensitivity (PS) & Positional bias (PB) &  Strong Positional Bias (sPB) \\
        \midrule
        DeepSeek R1 32B & \textbf{<0.0005} & \textbf{7.456e-07} & 0.97{\%} & 21.857 & 2.563, CI: (2.0, 3.2) & 3.1 \\
        DeepSeek R1 7B & \textbf{<0.0005} & \textbf{6.197e-60} & 4.15\% & 33.983 & 5.612, CI: (5.1, 6.3) & 9.8 \\
        DeepSeek R1 8B & \textbf{<0.0005} & \textbf{5.875e-05} & 0.79\% & 29.758 & 2.391, CI: (1.8, 3.0) & 6.7 \\
        LLaMA 3.1 8B & \textbf{<0.0005} & \textbf{4.457e-26} & 3.01\% & 30.395 & 4.187, CI: (3.6, 4.8) & 6.7 \\
        LLaMA 4 Scout & \textbf{<0.0005} & \textbf{1.188e-04} & 0.87\% & 22.393 & 2.886, CI: (2.4, 3.4) & 3.2 \\
        GPT OSS 120B & \textbf{<0.0005} & \textbf{8.879e-30} & 3.14\% & 15.487 & 4.368, CI: (3.8, 5.0) & 1.0 \\
        GPT-OSS 20B & \textbf{<0.0005} & \textbf{4.814e-65} & 4.73\% & 17.709 & 5.944, CI: (5.3, 6.6) & 2.1 \\
        LLaMA 3.1 70B & \textbf{<0.0005} & \textbf{7.216e-24} & 3.40\% & 22.654 & 4.638, CI: (4.0, 5.3) & 3.1 \\
        \bottomrule
        \vspace{0mm}    
    \end{tabular}
    \caption{Results from the proposed bias quantification metrics and proposed statistical methodologies for top-$5$ setting. POB-ICM $p$-values are Monte Carlo estimates from $B = 2000$ permutations, so their smallest attainable value is $1/(B+1) = 0.0005$. $L$ value is the lower bound of $TV(p(\cdot), U_k)$ according to the POB-UNBP test. Confidence intervals of sPB reported at \autoref{app:spb_ci}.}
    \label{tab:tests_results}
\end{table}

\subsubsection{Results and visualization of POB-ICM methodology}
\label{sec:results_icm}

We first apply the POB-ICM methodology from \autoref{sec:independency_methodology}. To do so, the contingency matrix is built from the results generated using~\autoref{eq:contingency_matrix}. This empirical contingency matrix is represented in~\autoref{fig:contingency_matrix} for LLaMA 8B. The statistical test results of the POB-ICM methodology are shown in \autoref{tab:tests_results}.

\begin{resultbox}
Under the POB-ICM methodology, the calibrated Chi-square test rejects the null hypothesis of independence between candidate-item position and the selected output in all evaluated models, with p-values below the standard significance threshold of $0.05$, providing strong evidence that positional dependence is a substantive and reproducible property of the evaluated pipelines. See~\autoref{tab:tests_results} for exact results.
\end{resultbox}

We can further analyze the positional bias through \autoref{fig:contingency_matrix}. The figure includes two configurations: 730 queries with exhaustive permutations (120 per query), and 2000 queries with sampled permutations (20 per query). In the exhaustive configuration, the products in the first position given by the semantic matcher (first row) are more likely to be selected, while the last product is less likely to be selected (last row). Indeed, the first product is the preferred product by the semantic matcher, so it is more likely to also be preferred by the LLM. Note that the semantic matcher only has access to the product title, while the LLM has access to both the title and the description of the product, and thus, the recommendation of the LLM should be more reliable. We stress that this row structure reflects item relevance, not positional bias. Rows are indexed by the original (pre-shuffling) position $Y = \sigma(R(X,\sigma))$, so unequal row totals only indicate that the LLM tends to agree with the retriever on which candidates are more relevant. This does not confound the detection of positional bias, for two reasons. First, the candidates are presented to the LLM in a uniformly random order $\sigma$, and retrieval scores are never included in the prompt, so the retriever's ranking cannot favor any prompt position. Second, the null hypothesis of POB-ICM (independence between $Y$ and $\sigma$) places no restriction on the row totals: positional bias manifests instead as differences between columns, that is, as a dependence of the selected item on the permutation applied. Similarly, the distribution $p(\cdot)$ used by POB-UNBP is indexed by prompt position, and \autoref{prop:deterministic} and \autoref{thm:stochastic} guarantee its uniformity under absence of positional bias for every instance, regardless of how unequal the relevance of its candidates is. In the sampled configuration, this influence of the similarity ranker is less clearly visible, as the reduced number of permutations per query introduces additional noise in the contingency matrix. Nevertheless, the overall positional bias structure remains consistent across both configurations, supporting the robustness of the detected effect.

\begin{figure}
    \centering
    \begin{subfigure}[t]{0.49\linewidth}
        \centering
        \includegraphics[width=\linewidth]{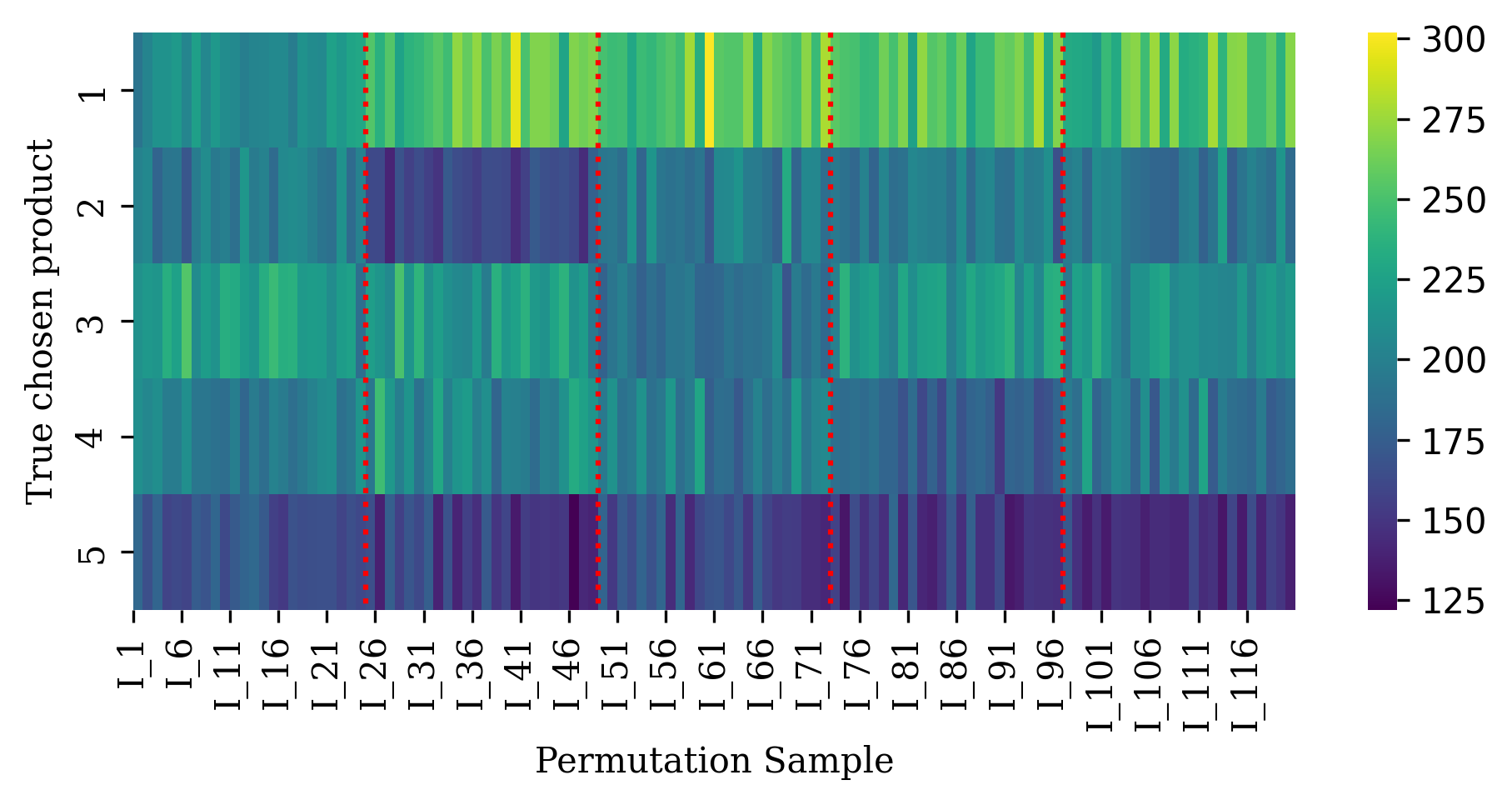}
        \caption{Generated using 730 queries and 120 permutations per query.}
    \end{subfigure}
    \begin{subfigure}[t]{0.49\linewidth}
        \centering
        \includegraphics[width=\linewidth]{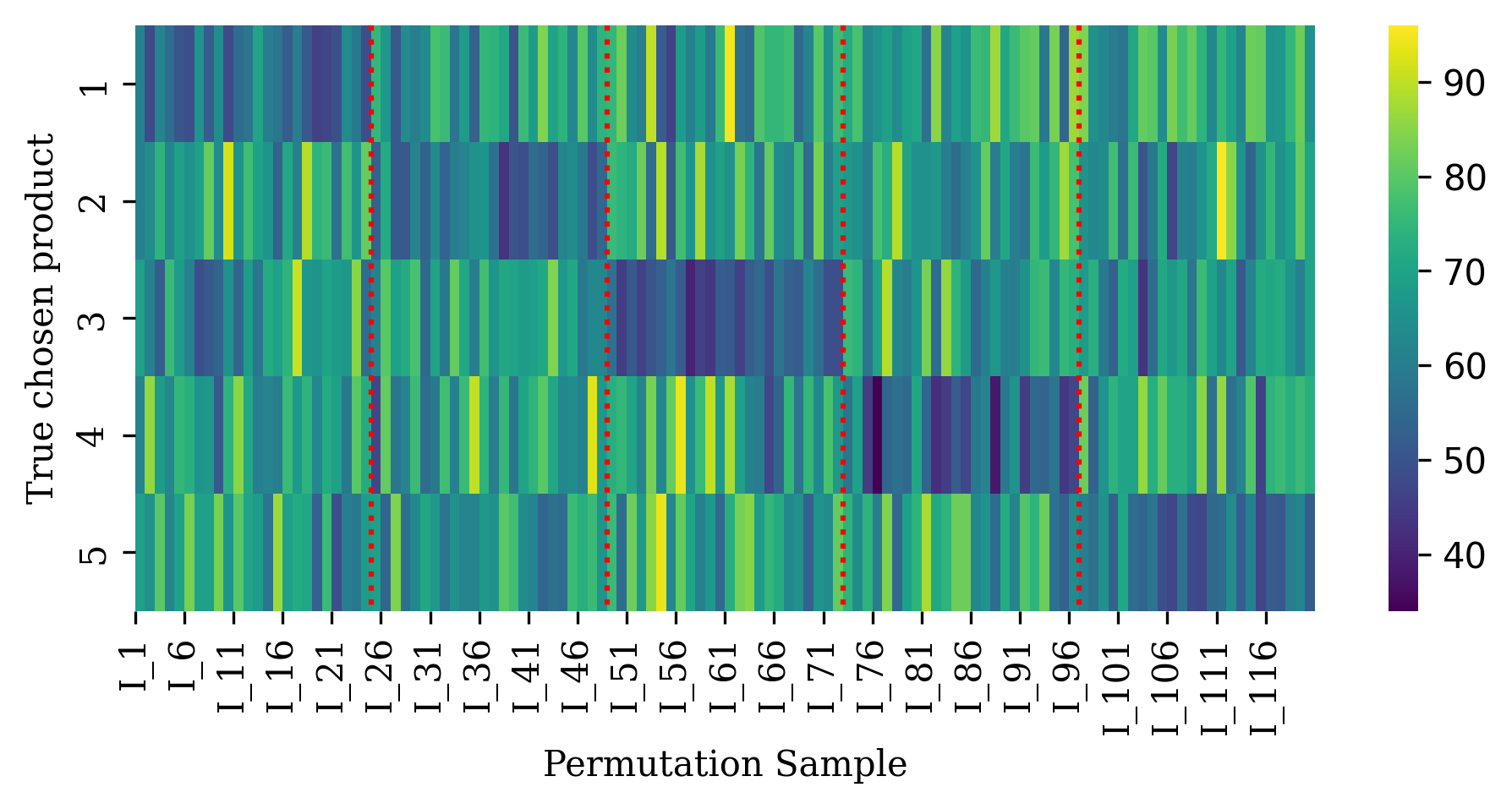}
        \caption{Generated using 2000 queries and 20 sampled permutations per query.}
    \end{subfigure}
    
    \caption{Contingency matrix from \autoref{sec:experimental_setup}, shown as a heatmap. The plot was generated using LLaMA 3.1 8B as recommender. Each column shows the empirical distribution of recommendations conditioned on a permutation, with permutations sorted lexicographically. Red lines indicate when the first, second, third, fourth, and fifth product are shuffled to the first position, respectively. Consistency was ensured by shuffling products according to the permutation sample, generating the recommendation, and then unshuffling the recommendation positions.}
    \label{fig:contingency_matrix}
\end{figure}

%Second, we observe that moving products to the first position reduces the likelihood of them being selected. This is observed by understanding how the abscissa axis is represented. This axis represents the shuffles, where these are lexicographically sorted. This means that each shuffle in a block of $24$ $ $($120 / 5)$ shares the first index. In other words, each of these blocks of shuffles moves the product from a given position to the first position. This given position is determined by the block position in this domain. For example, the leftmost block in the plot (separated by a red line) does not move the first product. The next block moves the second product to the first position, and the last block (the rightmost block) moves the last product to the first position. Given this note, we observe that products moved to the first position are not likely selected by the recommender model.

Positional bias in this setting is better visualized using our proposed \textit{bias inspection matrix}, described in~\autoref{sec:independency_methodology}. This matrix is interpreted by columns. Each column represents products selected when these are moved to the position $j$. This plot is shown in~\autoref{fig:bias_analysis}, generated using the 2000-query configuration with 20 sampled permutations per query. For example, in the plot associated with LLaMA 3.1 8B, we note that products moved to the first position are less likely selected. We also note that products moved to the last position are more likely to be selected. There are several potential explanations for this, as, for example, LLMs pay more attention to tokens that are closer to the end of the prompt, as observed in~\cite{zhang_can_2024}. This is not the same for models from the GPT-OSS family, where products moved to the first position are more likely recommended, so the positional bias is dependent either on the architecture, training, or post-training process. Interestingly, DeepSeek R1 8B and 32B exhibit comparatively weaker positional bias. While the underlying causes are not investigated here, this observation suggests that architectural or training differences may modulate positional sensitivity. Notably, these qualitative patterns are consistent with those observed under the exhaustive 730-query configuration, supporting the robustness of the findings across evaluation settings.

\begin{observationbox}
Using the proposed POB-ICM visualization methodology in \autoref{fig:bias_analysis}, several models exhibit different positional biases. LLaMA 3.1 8B tends to select the last products in the prompt, while models like GPT-OSS and LLaMA 3.1 70B tend to select first products.
\end{observationbox}

\begin{figure}
    \centering
    \includegraphics[width=\linewidth]{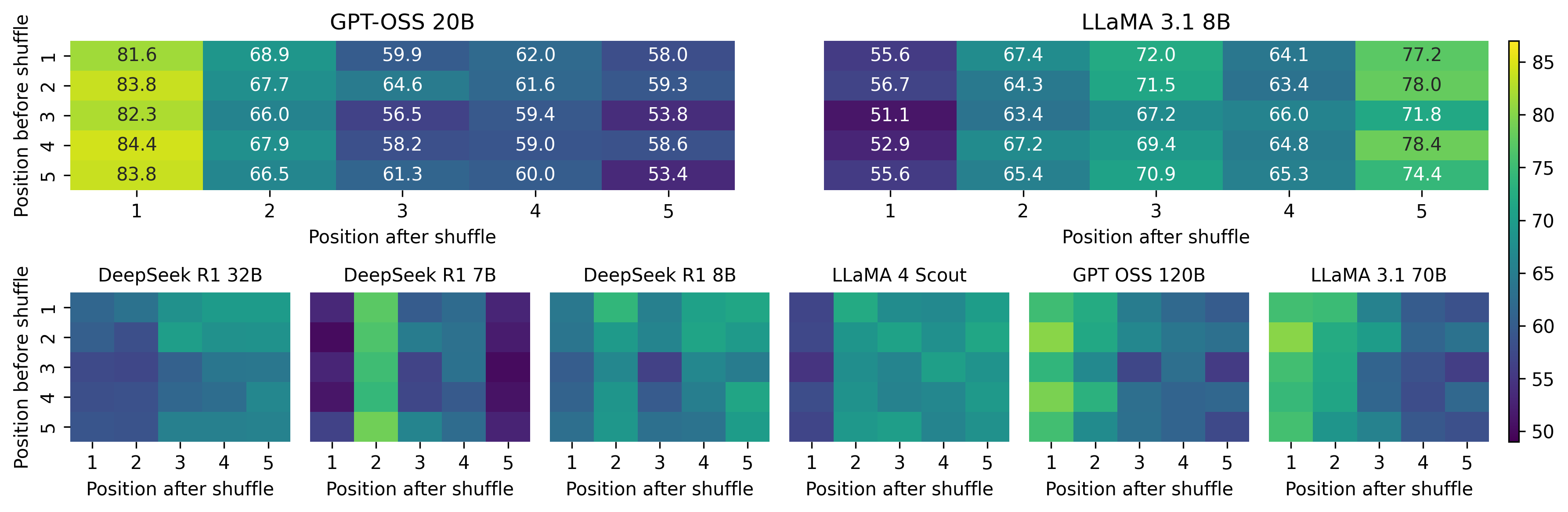}
    \caption{Bias analysis given the~\autoref{eq:bias_analysis} for different recommender LLMs. For each model, a column of its associated heatmap represents the mean of the (non-normalized) counts of selected products by the model when these were moved to the position represented by said column. In the absence of positional bias, all columns would be identical.}
    \label{fig:bias_analysis}
\end{figure}

\subsubsection{Results on POB-UNBP methodology}
\label{sec:results_unbp}

A more optimized methodology, POB-UNBP, is also proposed in \autoref{sec:uniform_methodology}. This methodology mathematically proves approximate uniformity of the probability distribution $p(i)$ under the assumption of absence of positional bias, both in deterministic and stochastic settings. This allows testing whether $p(i)$ deviates from the uniform distribution through the lower confidence bound $L$ on $\mathrm{TV}(p(\cdot), U_k)$ given in~\autoref{eq:tv_lower_bound}. Thus, a much smaller number of queries are required to evaluate the positional bias, at the expense of lower interpretability of the results (see \autoref{sec:simulation_study}). This reduction in sample complexity is enabled by averaging random permutations of the candidate set, a strategy aligned with the classical permutation-based hypothesis testing~\cite{good2013permutation}. Since the permutations of each query are aggregated into the instance-level statistics $Z_j(i)$, which are independent across queries, all $20$ sampled permutations per query are used without violating the independence assumption. Along with $L$, we report the $p$-value of the test in \autoref{cor:test_level} for $\varepsilon = 0$, that is, of the null hypothesis of exact uniformity, defined as the smallest significance level $\alpha$ at which $L > 0$. Since $L > 0$ if and only if $L_i > 0$ for some position $i$, this $p$-value is $\min\left(1,\, k \cdot \min_i p_i\right)$, where $p_i$ is the one-sided $p$-value of the $t$-statistic $\bar{Z}(i)\sqrt{n}/s_i$ with $n-1$ degrees of freedom; see~\autoref{tab:pob_unbp_topk} for exact results.

\begin{table}
    \centering
    
    \begin{tabular}{c|cccccccc}
        \toprule
        & \multicolumn{2}{c}{Top-5} & \multicolumn{2}{c}{Top-10} & \multicolumn{2}{c}{Top-15} & \multicolumn{2}{c}{Top-20} \\
        Model name & $p$-value & $L$ & $p$-value & $L$ & $p$-value & $L$ & $p$-value & $L$ \\
        \midrule
        DeepSeek R1 32B & \textbf{7.456e-07} & 0.97{\%} & \textbf{5.130e-17} & 2.83{\%} & \textbf{2.202e-47} & 4.29{\%} & \textbf{2.098e-48} & 6.78{\%} \\
        DeepSeek R1 7B & \textbf{6.197e-60} & 4.15{\%} & \textbf{9.362e-55} & 5.79{\%} & \textbf{5.966e-79} & 9.51{\%} & \textbf{4.449e-70} & 9.89{\%} \\
        DeepSeek R1 8B & \textbf{5.875e-05} & 0.79{\%} & \textbf{1.150e-11} & 1.96{\%} & \textbf{2.253e-46} & 4.30{\%} & \textbf{2.111e-71} & 7.24{\%} \\
        LLaMA 3.1 8B & \textbf{4.457e-26} & 3.01{\%} & \textbf{4.928e-17} & 3.13{\%} & \textbf{3.379e-16} & 4.08{\%} & \textbf{1.465e-26} & 3.77{\%} \\
        LLaMA 4 Scout & \textbf{1.188e-04} & 0.87{\%} & \textbf{1.438e-60} & 8.30{\%} & \textbf{8.023e-208} & 21.43{\%} & \textbf{1.341e-249} & 30.19{\%} \\
        GPT OSS 120B & \textbf{8.879e-30} & 3.14{\%} & \textbf{2.233e-41} & 3.62{\%} & \textbf{6.968e-48} & 5.10{\%} & \textbf{1.601e-58} & 7.00{\%} \\
        GPT-OSS 20B & \textbf{4.814e-65} & 4.73{\%} & \textbf{5.034e-82} & 5.69{\%} & \textbf{6.788e-96} & 7.78{\%} & \textbf{2.993e-100} & 9.03{\%} \\
        LLaMA 3.1 70B & \textbf{7.216e-24} & 3.40{\%} & \textbf{4.845e-25} & 2.99{\%} & \textbf{2.463e-29} & 4.40{\%} & \textbf{1.045e-41} & 7.04{\%} \\
        \bottomrule
    \end{tabular}
    \caption{POB-UNBP statistical test $p$-value and the computed lower bound $L$ for different top-$K$ values and a fixed temperature $t=0$. Bold values represent statistically detectable positional bias according to the threshold of $0.05$.}
    \label{tab:pob_unbp_topk}
\end{table}

\begin{resultbox}
By the POB-UNBP methodology, positional bias is confirmed in all evaluated models, requiring a smaller quantity of samples to confirm positional bias compared to the POB-ICM methodology (see \autoref{sec:simulation_study}).
\end{resultbox}

% Top-k, k in {5, 10, 15, 20}
The statistical tests confirm non-uniformity in all evaluated models, providing further evidence of positional bias. To assess whether the strength and location of this bias depend on the size of the candidate set, we visualize the results in \autoref{fig:raw_probability_distribution}. Two qualitatively distinct behaviors emerge. DeepSeek R1 variants and LLaMA 3.1 8B exhibit a preference above uniform baseline for middle positions over the very first positions of the prompt, and the location of this preference is itself sensitive to $k$: the bias shifts toward later relative positions for smaller $k$ and toward earlier relative positions as $k$ increases. LLaMA 4 Scout displays a related but substantially more pronounced pattern, where the preference for later positions grows markedly with $k$, indicating that this model is particularly sensitive to candidate-set size. In contrast, GPT-OSS (20B and 120B) and LLaMA 3.1 70B consistently favor the first positions of the prompt, and this tendency remains stable across all evaluated values of $k$, suggesting that positional bias in these models is largely independent of the number of candidates presented.

% Temperature \in {0.2, 0.4, 0.6}

To assess whether positional bias is an artifact of deterministic decoding or persists under stochastic sampling, we repeat the POB-UNBP evaluation at top-$5$ for $t \in \{0.0, 0.2, 0.4, 0.6\}$, where $t=0.0$ corresponds to the deterministic setting reported in \autoref{sec:results_analysis}. We use these temperatures according to the used temperature in the literature~\cite{hou_large_2024}. The corresponding $p$-values are reported in \autoref{tab:pob_unbp_temp}.

\begin{table}
    \centering
    
    \begin{tabular}{c|cccccccc}
        \toprule
        & \multicolumn{2}{c}{$t = 0.0$} & \multicolumn{2}{c}{$t = 0.2$} & \multicolumn{2}{c}{$t = 0.4$} & \multicolumn{2}{c}{$t = 0.6$} \\
        Model name & $p$-value & $L$ & $p$-value & $L$ & $p$-value & $L$ & $p$-value & $L$ \\
        \midrule
        DeepSeek R1 32B & \textbf{7.456e-07} & 0.97{\%} & \textbf{2.001e-09} & 1.14{\%} & \textbf{6.392e-06} & 0.79{\%} & \textbf{6.290e-07} & 1.12{\%} \\
        DeepSeek R1 7B & \textbf{6.197e-60} & 4.15{\%} & \textbf{1.924e-64} & 4.29{\%} & \textbf{7.115e-58} & 4.24{\%} & \textbf{2.998e-66} & 4.40{\%} \\
        DeepSeek R1 8B & \textbf{5.875e-05} & 0.79{\%} & \textbf{8.181e-06} & 0.90{\%} & \textbf{9.225e-08} & 1.21{\%} & \textbf{1.701e-16} & 1.65{\%} \\
        LLaMA 3.1 8B & \textbf{4.457e-26} & 3.01{\%} & \textbf{2.049e-28} & 2.67{\%} & \textbf{4.884e-21} & 2.27{\%} & \textbf{1.925e-17} & 1.97{\%} \\
        LLaMA 4 Scout & \textbf{1.188e-04} & 0.87{\%} & \textbf{1.476e-05} & 0.97{\%} & \textbf{5.367e-05} & 0.97{\%} & \textbf{6.510e-05} & 1.00{\%} \\
        GPT OSS 120B & \textbf{8.879e-30} & 3.14{\%} & \textbf{3.367e-31} & 3.27{\%} & \textbf{4.932e-31} & 3.40{\%} & \textbf{1.528e-32} & 3.24{\%} \\
        GPT-OSS 20B & \textbf{4.814e-65} & 4.73{\%} & \textbf{1.098e-73} & 5.26{\%} & \textbf{1.878e-72} & 5.06{\%} & \textbf{1.475e-69} & 4.93{\%} \\
        LLaMA 3.1 70B & \textbf{7.216e-24} & 3.40{\%} & \textbf{1.690e-28} & 3.63{\%} & \textbf{7.065e-26} & 3.78{\%} & \textbf{6.539e-39} & 4.48{\%} \\
        \bottomrule
    \end{tabular}
    \caption{POB-UNBP statistical test $p$-value and the lower bound $L$ for different decoding $t$ temperatures and a fixed $K=5$. Bold values represent statistically detectable positional bias according to the threshold of $0.05$.}
    \label{tab:pob_unbp_temp}
\end{table}

\begin{resultbox}
Positional bias is detected by the POB-UNBP methodology consistently across decoding temperatures, with $p$-values remaining below the standard significance threshold of $0.05$ for all models at every evaluated value of $t$. The magnitude and significance of the effect do not vary substantially as $t$ increases, indicating that positional bias is not an artifact of deterministic decoding but persists under stochastic sampling.
\end{resultbox}

\autoref{tab:temperature_experiment} holds the position-selection distributions across temperatures for top-$5$. Consistently with the statistical results in \autoref{tab:pob_unbp_temp}, the qualitative shape of the bias remains stable as $t$ varies for all evaluated models. GPT-OSS (20B and 120B) and LLaMA 3.1 70B preferentially select the first positions of the prompt across all temperatures, mirroring the behavior observed in the top-$k$ experiment. LLaMA 3.1 8B, DeepSeek R1 32B, and DeepSeek R1 8B instead favor later positions, while DeepSeek R1 7B exhibits a distinct pattern, concentrating its bias on the second position rather than the first or last. LLaMA 4 Scout shows the complementary behavior, systematically avoiding the first position regardless of temperature. Taken together, these results indicate that positional bias is primarily determined by model identity rather than by decoding stochasticity, reinforcing the conclusion that it reflects a structural property of the underlying model rather than an effect of the sampling procedure.

\begin{figure}[t]
    \centering
    \begin{subfigure}[t]{\linewidth}
        \centering
        \includegraphics[width=\linewidth]{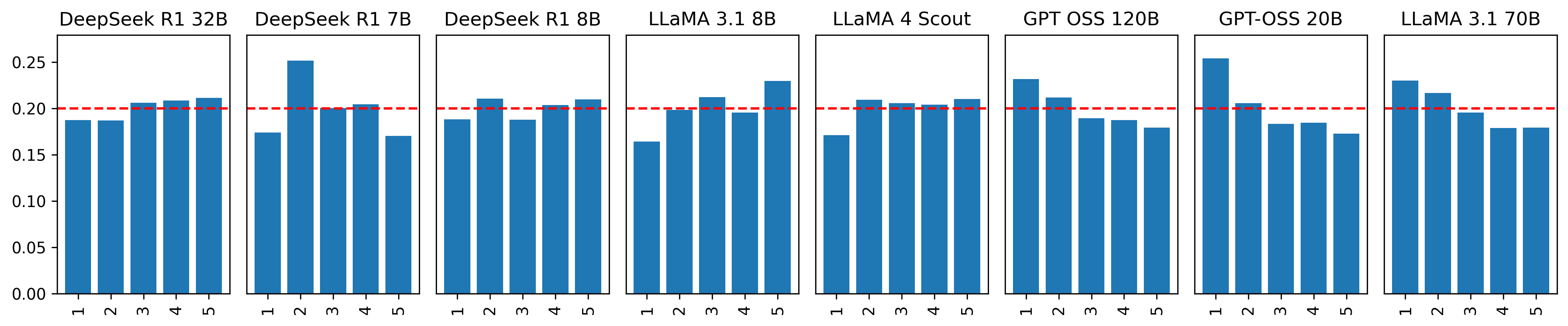}
        \caption{$k = 5$}
    \end{subfigure}
    \begin{subfigure}[t]{\linewidth}
        \centering
        \includegraphics[width=\linewidth]{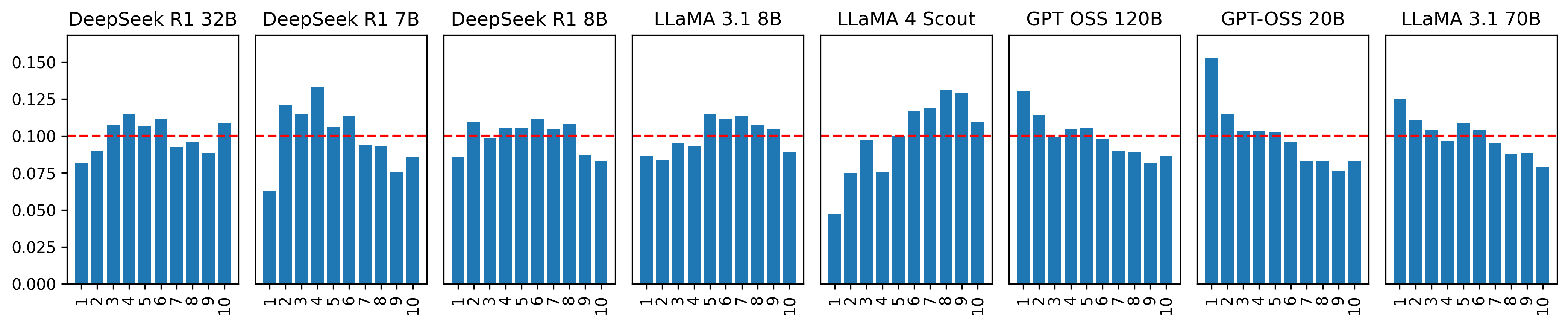}
        \caption{$k = 10$}
    \end{subfigure}
    \begin{subfigure}[t]{\linewidth}
        \centering
        \includegraphics[width=\linewidth]{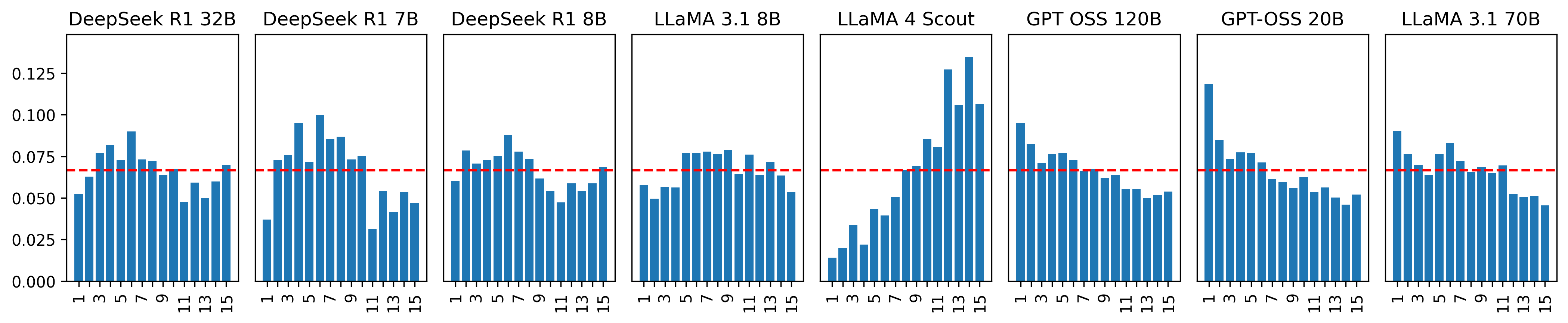}
        \caption{$k = 15$}
    \end{subfigure}
    \begin{subfigure}[t]{\linewidth}
        \centering
        \includegraphics[width=\linewidth]{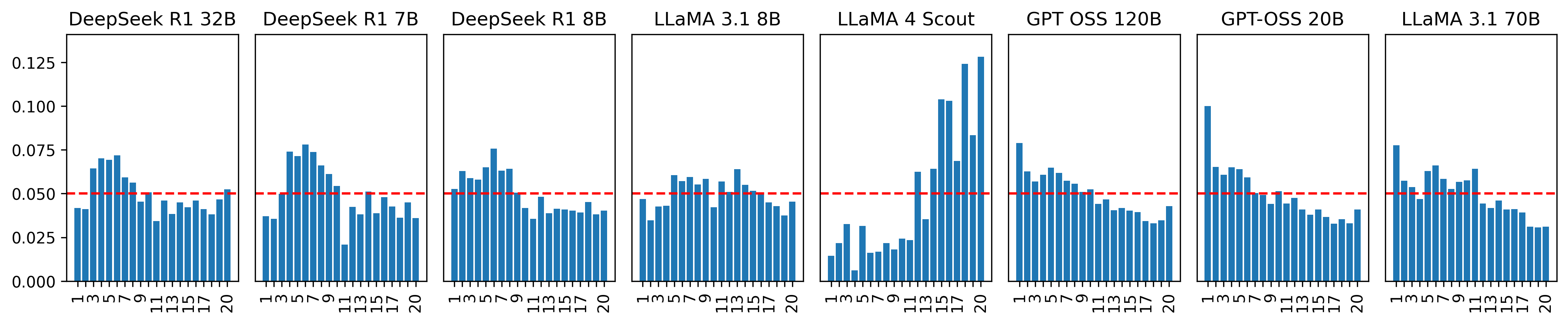}
        \caption{$k = 20$}
    \end{subfigure}
    \caption{Representation of probability distribution p(i) defined in \autoref{eq:probability_distribution} for different $k$ values. Red-dashed lines represent uniform baseline $1/k$.}
    \label{fig:raw_probability_distribution}
\end{figure}

\begin{table}
    \centering
    \begin{tabular}{cc|x{1.5cm}x{1.5cm}x{1.5cm}x{1.5cm}x{1.5cm}}
        \toprule
        Model name & Temperature & p(1) & p(2) & p(3) & p(4) & p(5) \\
        \midrule
        \multirow{4}{*}{DeepSeek R1 32B} & 0.0 & 0.1874 & 0.1869 & 0.2060 & 0.2083 & 0.2114 \\
        & 0.2 & 0.1869 & 0.1868 & 0.2046 & 0.2080 & 0.2137 \\
        & 0.4 & 0.1873 & 0.1879 & 0.2055 & 0.2097 & 0.2097 \\
        & 0.6 & 0.1872 & 0.1852 & 0.2053 & 0.2103 & 0.2119 \\
        \midrule
        \multirow{4}{*}{DeepSeek R1 7B} & 0.0 & 0.1738 & 0.2514 & 0.2004 & 0.2043 & 0.1701 \\
        & 0.2 & 0.1686 & 0.2535 & 0.2012 & 0.2039 & 0.1729 \\
        & 0.4 & 0.1689 & 0.2509 & 0.2012 & 0.2069 & 0.1721 \\
        & 0.6 & 0.1644 & 0.2544 & 0.1994 & 0.2059 & 0.1759 \\
        \midrule
        \multirow{4}{*}{DeepSeek R1 8B} & 0.0 & 0.1883 & 0.2104 & 0.1878 & 0.2037 & 0.2098 \\
        & 0.2 & 0.1898 & 0.2083 & 0.1896 & 0.2001 & 0.2121 \\
        & 0.4 & 0.1847 & 0.2094 & 0.1884 & 0.2029 & 0.2146 \\
        & 0.6 & 0.1847 & 0.2057 & 0.1870 & 0.2006 & 0.2221 \\
        \midrule
        \multirow{4}{*}{LLaMA 3.1 8B} & 0.0 & 0.1644 & 0.1981 & 0.2122 & 0.1956 & 0.2297 \\
        & 0.2 & 0.1629 & 0.2021 & 0.2088 & 0.1973 & 0.2289 \\
        & 0.4 & 0.1679 & 0.2026 & 0.2097 & 0.1960 & 0.2237 \\
        & 0.6 & 0.1748 & 0.1998 & 0.2095 & 0.1951 & 0.2208 \\
        \midrule
        \multirow{4}{*}{LLaMA 4 Scout} & 0.0 & 0.1711 & 0.2092 & 0.2057 & 0.2041 & 0.2099 \\
        & 0.2 & 0.1691 & 0.2105 & 0.2049 & 0.2053 & 0.2102 \\
        & 0.4 & 0.1694 & 0.2097 & 0.2068 & 0.2055 & 0.2085 \\
        & 0.6 & 0.1702 & 0.2097 & 0.2072 & 0.2040 & 0.2090 \\
        \midrule
        \multirow{4}{*}{GPT OSS 120B} & 0.0 & 0.2318 & 0.2119 & 0.1895 & 0.1875 & 0.1793 \\
        & 0.2 & 0.2319 & 0.2122 & 0.1870 & 0.1900 & 0.1790 \\
        & 0.4 & 0.2320 & 0.2137 & 0.1874 & 0.1892 & 0.1777 \\
        & 0.6 & 0.2329 & 0.2112 & 0.1897 & 0.1878 & 0.1784 \\
        \midrule
        \multirow{4}{*}{GPT-OSS 20B} & 0.0 & 0.2538 & 0.2057 & 0.1834 & 0.1843 & 0.1728 \\
        & 0.2 & 0.2574 & 0.2073 & 0.1827 & 0.1813 & 0.1712 \\
        & 0.4 & 0.2574 & 0.2052 & 0.1833 & 0.1810 & 0.1730 \\
        & 0.6 & 0.2549 & 0.2071 & 0.1814 & 0.1838 & 0.1729 \\
        \midrule
        \multirow{4}{*}{LLaMA 3.1 70B} & 0.0 & 0.2299 & 0.2165 & 0.1953 & 0.1789 & 0.1794 \\
        & 0.2 & 0.2307 & 0.2166 & 0.1930 & 0.1807 & 0.1789 \\
        & 0.4 & 0.2289 & 0.2205 & 0.1944 & 0.1790 & 0.1773 \\
        & 0.6 & 0.2363 & 0.2202 & 0.1940 & 0.1760 & 0.1736 \\
        \bottomrule
    \end{tabular}
    \caption{Probability distributions p(i) defined in \autoref{eq:probability_distribution} for different temperature values and top-5 setting. Uniform baseline is $0.2$.}
    \label{tab:temperature_experiment}
\end{table}

\subsubsection{Positional bias quantification}
\label{sec:results_quantification}

Beyond the hypothesis-testing results reported above, we now characterize positional bias through the quantification metrics introduced in \autoref{sec:methodologies}: permutation sensitivity (PS), positional bias (PB), and strong positional bias (sPB). Before reporting how these metrics evolve across $k$ and $t$, we first illustrate the rationale behind the relevance set $\mathcal{R}(X)$ on which sPB is based.

\autoref{fig:relevance_set} shows, for twenty randomly sampled queries under the exhaustive configuration (730 queries, 120 permutations each), the empirical selection count of each candidate item in its original (pre-shuffling) position. These histograms reveal substantial heterogeneity across queries in the number of items the model treats as plausible recommendations. For several queries, a single item is selected overwhelmingly often regardless of the permutation applied, indicating a clear, unambiguous preference largely unaffected by item ordering. For other queries, however, selection counts are split across two or more items, suggesting that the model regards several candidates as comparably relevant to the user query, and that residual instability across permutations within this subset reflects soft rather than strong positional bias. This heterogeneity directly motivates the construction of a query-dependent relevance set $\mathcal{R}(X)$ (\autoref{def:relevance_set}): treating every change in the recommended item as evidence of strong positional bias, without accounting for cases of comparably relevant alternatives, would overstate the practical impact of position on the relevance of the recommendations.

\begin{figure}
    \centering
    \includegraphics[width=\linewidth]{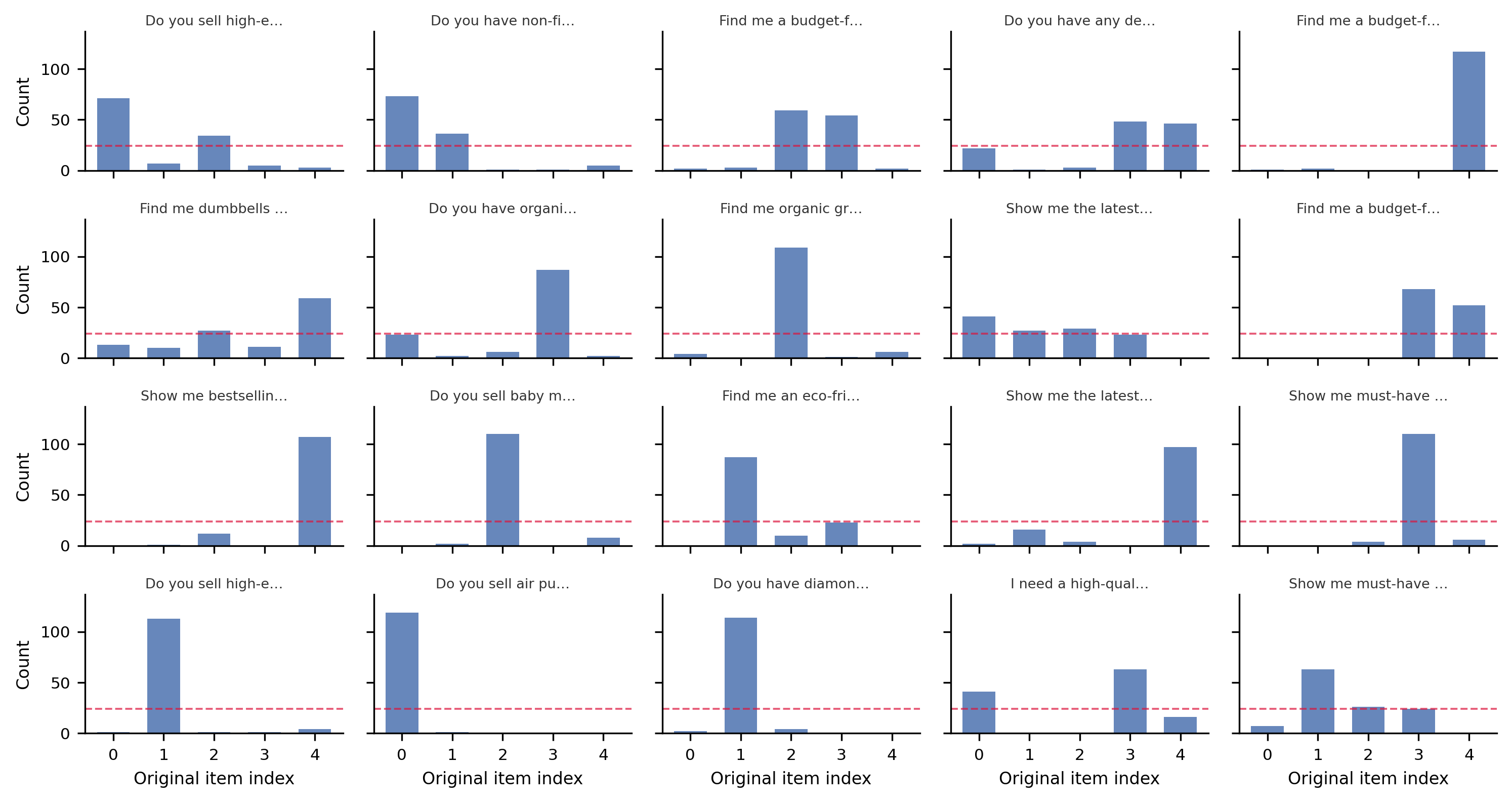}
    \caption{Selection counts of each candidate item, by original (pre-shuffling) position, for twenty random queries of the exhaustive configuration, using LLaMA 3.1 8B. Each panel shows, for one query, how many of the 120 permutations recommended each item.}
    \label{fig:relevance_set}
\end{figure}

\autoref{tab:metrics_topk} reports PS, PB, and sPB across $k \in \{5, 10, 15, 20\}$ for all evaluated models. Several patterns emerge. First, permutation sensitivity is substantial and grows monotonically with $k$ for every model, confirming that prompt sensitivity is a genuine and increasingly impactful phenomenon as the candidate set grows: e.g., DeepSeek R1 7B rises from $34.0\%$ at $k=5$ to $58.1\%$ at $k=20$, meaning the recommended item changes on more than half of all reorderings once twenty candidates are presented. Second, positional bias (PB) generally increases with $k$ (the only exception being a slight decrease for DeepSeek R1 7B from $k=15$ to $k=20$), though the magnitude of this growth varies considerably across model families. LLaMA 4 Scout exhibits by far the steepest increase, with PB rising from $2.9\%$ at $k=5$ to $33.8\%$ at $k=20$ (more than a tenfold increase), consistent with the pronounced $k$-dependence already observed in the probability distributions in \autoref{fig:raw_probability_distribution}. In contrast, GPT-OSS (20B and 120B) and LLaMA 3.1 70B show comparatively modest growth in PB (roughly a $2$--$2.5\times$ increase from $k=5$ to $k=20$), mirroring their stable first-position preference across $k$ noted previously. Third, strong positional bias (sPB) increases monotonically with $k$ for every model, growing between roughly 3 and 11 from $k=5$ to $k=20$; this suggests that as the candidate set grows, a progressively larger share of permutation-induced changes involve items outside the relevance set, at a model-dependent rate ranging from DeepSeek R1 8B ($6.7\%$ to $20.6\%$) to LLaMA 4 Scout ($3.2\%$ to $36.6\%$). Taken together, these results indicate that larger candidate sets do not merely amplify generic instability, but specifically exacerbate the structural, position-driven component of that instability, with the magnitude of this effect strongly modulated by model identity.

\begin{table}
    \centering
    \begin{tabular}{c|cccc|cccc|cccc}
        \toprule
        \multirow{2}{*}{Model name} & \multicolumn{4}{c}{PS} & \multicolumn{4}{c}{PB} & \multicolumn{4}{c}{sPB} \\
         & 5 & 10 & 15 & 20 & 5 & 10 & 15 & 20 & 5 & 10 & 15 & 20 \\
        \midrule
        DeepSeek R1 32B & 21.9 & 29.4 & 34.1 & 37.2 & 2.6 & 5.0 & 7.1 & 9.4 & 3.1 & 6.9 & 10.7 & 13.7 \\
        DeepSeek R1 7B & 34.0 & 45.1 & 52.7 & 58.1 & 5.6 & 8.9 & 13.5 & 13.0 & 9.8 & 20.1 & 26.6 & 33.2 \\
        DeepSeek R1 8B & 29.8 & 38.6 & 43.2 & 46.4 & 2.4 & 4.5 & 7.1 & 10.1 & 6.7 & 14.8 & 19.4 & 20.6 \\
        LLaMA 3.1 8B & 30.4 & 41.9 & 46.9 & 49.6 & 4.2 & 5.3 & 6.8 & 7.0 & 6.7 & 15.7 & 20.7 & 24.1 \\
        LLaMA 4 Scout & 22.4 & 38.3 & 51.5 & 61.6 & 2.9 & 10.5 & 24.3 & 33.8 & 3.2 & 12.9 & 24.5 & 36.6 \\
        GPT OSS 120B & 15.5 & 21.9 & 24.9 & 27.3 & 4.4 & 5.4 & 7.5 & 10.2 & 1.0 & 3.4 & 5.9 & 6.7 \\
        GPT-OSS 20B & 17.7 & 24.3 & 28.3 & 30.9 & 5.9 & 7.8 & 10.2 & 11.6 & 2.1 & 5.6 & 6.7 & 9.4 \\
        LLaMA 3.1 70B & 22.7 & 31.7 & 36.7 & 40.2 & 4.6 & 5.3 & 7.3 & 10.7 & 3.1 & 9.5 & 12.2 & 14.9 \\
        \bottomrule
    \end{tabular}
    \caption{Permutation sensitivity (PS), positional bias (PB), and strong positional bias (sPB), all expressed as percentages, across $k \in \{5, 10, 15, 20\}$. Confidence intervals of sPB metric reported in \autoref{app:spb_ci}.}
    \label{tab:metrics_topk}
\end{table}

\autoref{tab:metrics_temperature} reports the same metrics across decoding temperatures $t \in \{0.0, 0.2, 0.4, 0.6\}$ at top-$5$. In contrast to the $k$-dependence above, all three metrics remain remarkably stable as temperature increases: PS grows only mildly (e.g., LLaMA 3.1 8B rises from $30.4\%$ at $t=0.0$ to $33.0\%$ at $t=0.6$), and sPB follows a similarly modest upward trend, consistent with the added decoding stochasticity introducing a small amount of additional, content-driven variability in the recommended item. Notably, PB does not exhibit a consistent upward trend with temperature; for several models it remains essentially flat or even decreases slightly (e.g., LLaMA 3.1 8B decreases from $4.2\%$ at $t=0.0$ to $3.0\%$ at $t=0.6$, and DeepSeek R1 32B fluctuates within a narrow band of $2.5$--$2.8\%$). This indicates that the structural, position-driven component of the bias captured by PB is largely insensitive to sampling temperature, while the mild increase observed in PS and sPB is better attributed to the additional stochasticity in item selection introduced by non-zero temperature itself, rather than to a strengthening of positional bias. This observation reinforces the conclusion already drawn from the POB-UNBP results in \autoref{tab:pob_unbp_temp}: positional bias is a structural property of the model that persists, largely unchanged in magnitude, under stochastic decoding.

\begin{table}
    \centering
    \begin{tabular}{c|cccc|cccc|cccc}
        \toprule
        \multirow{2}{*}{Model name} & \multicolumn{4}{c}{PS} & \multicolumn{4}{c}{PB} & \multicolumn{4}{c}{sPB} \\
         & 0.0 & 0.2 & 0.4 & 0.6 & 0.0 & 0.2 & 0.4 & 0.6 & 0.0 & 0.2 & 0.4 & 0.6 \\
        \midrule
        DeepSeek R1 32B & 21.9 & 22.7 & 22.9 & 23.4 & 2.6 & 2.6 & 2.5 & 2.8 & 3.1 & 3.3 & 3.4 & 3.5 \\
        DeepSeek R1 7B & 34.0 & 34.6 & 35.3 & 35.9 & 5.6 & 5.9 & 5.9 & 6.0 & 9.8 & 10.0 & 10.4 & 10.4 \\
        DeepSeek R1 8B & 29.8 & 30.9 & 31.5 & 32.6 & 2.4 & 2.1 & 2.7 & 2.8 & 6.7 & 6.9 & 8.7 & 9.9 \\
        LLaMA 3.1 8B & 30.4 & 31.4 & 32.3 & 33.0 & 4.2 & 4.0 & 3.6 & 3.0 & 6.7 & 7.1 & 9.3 & 9.4 \\
        LLaMA 4 Scout & 22.4 & 22.6 & 22.6 & 22.9 & 2.9 & 3.1 & 3.1 & 3.0 & 3.2 & 3.2 & 3.2 & 3.3 \\
        GPT OSS 120B & 15.5 & 15.6 & 15.7 & 16.0 & 4.4 & 4.4 & 4.6 & 4.4 & 1.0 & 1.0 & 1.2 & 1.1 \\
        GPT-OSS 20B & 17.7 & 17.9 & 18.6 & 18.6 & 5.9 & 6.5 & 6.3 & 6.2 & 2.1 & 2.5 & 2.6 & 2.7 \\
        LLaMA 3.1 70B & 22.7 & 22.8 & 23.7 & 24.2 & 4.6 & 4.7 & 4.9 & 5.6 & 3.1 & 3.4 & 3.5 & 3.5 \\
        \bottomrule
    \end{tabular}
    \caption{Permutation sensitivity (PS), positional bias (PB), and strong positional bias (sPB), all expressed as percentages, for top-$5$ across decoding temperatures $t \in \{0.0, 0.2, 0.4, 0.6\}$. Confidence intervals of sPB metric reported in \autoref{app:spb_ci}.}
    \label{tab:metrics_temperature}
\end{table}

Since the relevance set is estimated from the selection frequencies of the model itself, sPB also captures recommendations outside $\mathcal{R}(X)$ that are not caused by the order of the items. These arise from stochastic decoding, that is, the randomness of token sampling at non-zero temperature, which can change the recommended item even if the prompt is identical, and from the estimation of the relevance set from a limited number of permutations. Together, they define a noise floor, the level of sPB observed without any change in item order. \autoref{app:sampling_vs_order} estimates this floor by repeating the generation for a fixed order at $t = 0.6$: it lies between $0.9\%$ and $9.1\%$, comparable to the sPB obtained with permutations, and permutations add between $0.3$ and $1.1$ percentage points to it for six of the eight models. Hence, absolute sPB values should be read as an upper bound on the strong component of positional bias, while its trends across $k$ and across models are more informative.

% \textcolor{red}{ANDRES TODO: Hacer una tabla con los p-valores de las dos metodologías, para cada uno de los LLMs. Quiero ver qué p-valor sale para deep seek. Draft the exact p-value table }

\section{Conclusion and future work}
\label{sec:conclusion_futurework}

This paper investigated positional bias in top-1 and single-turn LLM-based RS with the goal of determining whether, how, and to what extent the order of candidate items influences recommendation outcomes. Addressing \textbf{RQ1}, our empirical analysis provides statistical evidence that the output selection of LLM-based RS depends on the position at which candidate items are presented in the input prompt. Through extensive permutation-based evaluations, we show that recommendation behavior is not invariant to item ordering, even when item relevance and retrieval are held constant. Addressing \textbf{RQ2}, we demonstrate that this dependence is statistically significant under formal hypothesis testing, with all evaluated models rejecting the null hypotheses of positional independence and uniformity using two complementary methodologies. Finally, with respect to \textbf{RQ3}, our results indicate that the magnitude and structure of positional bias vary across models, even within the same family, such as LLaMA 3.1 8B and 70B, suggesting that positional bias is influenced by design and training choices rather than being purely incidental. By increasing the number of presented items in the prompt, the positional bias is observed to generally increase across models. This idea is supported by how LLMs attend to different regions of the input prompt, where information appearing at the beginning or end of the prompt is often more influential for the model, regardless of the semantic information distributed across the prompt~\cite{liu-etal-2024-lost}. Furthermore, we distinguish two components of positional bias: soft positional bias, where the position determines which of several comparably relevant candidates is recommended, and strong positional bias, where the position displaces the recommendation towards clearly less relevant items. While the proposed statistical tests detect the overall positional bias, which comprises both components, the proposed metrics quantify the overall positional bias and its strong component, and our counterfactual analyses show that strong positional bias persists when controlling for the identifiers assigned to the items, and for six of the eight evaluated models when controlling for stochastic decoding.

These findings should be interpreted within the scope of our experimental setting. Our evaluation covers single-turn product recommendation, combining user queries generated from templates with real components: a real product catalog (Amazon ML Challenge 2023), a pre-trained embedding-based retriever, and a single fixed prompt template. The heterogeneity of the real product titles and descriptions makes the task non-trivial; for instance, the titles generated by the models do not always match the original ones exactly. However, the template-generated queries encode a limited number of user preferences, and other domains, query formulations or prompt designs may yield different magnitudes and patterns of positional bias. Moreover, we focus on top-1 recommendation, a particularly demanding setting for positional bias analysis, since the relevance the model assigns to the candidates can only be inferred from a single selected item rather than from a full ranking. Within this scope, the main contribution of this work is methodological: a statistically grounded framework to detect overall positional bias in top-1 LLM-based RS, together with metrics to quantify it and the distinction between soft and strong positional bias, quantifying the latter, which is applicable beyond the evaluated setting, as discussed below.

Beyond its methodological contributions, this work has direct implications for the design, evaluation, and governance of RS based on LLM in practice. In operational settings, positional bias can undermine the reliability of A/B testing, distort offline evaluation metrics, and introduce unintended variability in user-facing recommendations. From a governance perspective, such biases complicate auditability and accountability, as system behavior may change solely due to prompt construction choices rather than genuine differences in item relevance. For practitioners deploying LLM-based RS, our findings highlight the importance of treating prompt order as a first-class design variable rather than an implementation detail.

Based on these findings, we outline several practical recommendations for organizations deploying LLM-based RS. First, evaluation pipelines should explicitly incorporate randomized or counterbalanced item ordering to detect and monitor positional bias. Second, prompt construction should be standardized or controlled to ensure consistency between deployments and experiments. Third, bias audits should include permutation-based tests, such as those proposed in this work, as part of routine model validation and monitoring processes. Finally, when positional bias is detected, mitigation strategies such as order randomization, ensemble prompting, or post-generation re-ranking should be considered to reduce its impact on end-user outcomes. Auditing positional bias is inexpensive: in our simulation study, POB-UNBP reaches $80\%$ power with fewer than $200$ queries and $20$ permutations per query when the item placed first is selected with an extra probability of $0.03$, which corresponds to $\mathrm{PB} \approx 2.4$ for $k=5$, comparable to the weakest bias among the evaluated models (\Cref{tab:min_samples}). Since the simulation ignores the variability of relevance across queries, real audits may require more queries. Thus, audits can be repeated whenever the model or the prompt template changes, also to verify position-agnostic templates that avoid ordinal cues in the item identifiers. At inference time, randomizing the order of the candidates prevents items from being systematically favored by their retrieval rank at no additional cost, and, when the model exposes probabilities over the candidate identifiers, the estimated distribution $p(\cdot)$ can serve as a prior for post-hoc recalibration~\cite{zheng_large_2024}.

\noindent Regarding the extensibility of the proposed framework and future research lines, both POB-ICM and POB-UNBP are designed to operate on any black-box system that (i) receives an indexed list of candidate products and (ii) returns a recommendation expressed as a product position within that list. This abstraction is intentionally general and does not depend on the internal architecture of the LLM, the retrieval mechanism, or the number of conversational turns. Concretely:
\begin{itemize}
    \item \textbf{Multi-turn conversational RS.} The framework can be applied to multi-turn settings provided that the recommendation at each turn is expressed as a positional selection from a fixed candidate set. The permutation-based evaluation can then be applied independently at each turn, treating each turn as a self-contained Top-1 instance, although confirming that our empirical findings carry over to multi-turn interactions requires a new set of experiments, which is outside the scope of this work.
    \item \textbf{Top-$N$ recommendation and re-ranking.} These settings are less straightforward. When the system returns an ordered list of $N > 1$ items, or reorders the full candidate set, the notion of positional bias must be generalized to account for how the relative ranking of selected items is affected by input ordering. This requires adapting the contingency-based and uniformity-based null models to joint distributions over ranked outputs, which we leave as a direction for future work.
    \item \textbf{Prompt sensitivity analysis.} We use a single fixed prompt template, not derived from a systematic prompt engineering phase. This does not affect the validity of the proposed detection methodologies, but the magnitude and pattern of the measured bias may depend on the prompt design, so evaluating how different prompt formulations modulate positional bias remains an important direction for future work.
    \item \textbf{Other domains and richer user queries.} Our experiments are restricted to product recommendation, although the proposed methodologies apply without modification to other domains, such as news, music or long-document recommendation. Since the position preferences are stable across decoding temperatures and, for several models, across candidate-set sizes, we expect the differences between models to persist, although their magnitude may vary: longer item descriptions may accentuate effects such as lost in the middle~\cite{liu-etal-2024-lost}, and queries encoding more user preferences may shift the balance between soft and strong positional bias.
    \item \textbf{Independent detection of soft and strong positional bias.} The proposed statistical tests detect the overall positional bias, while strong positional bias is currently quantified through a metric based on relevance sets estimated from the selection frequencies of the model itself. A natural research line is to develop statistical methodologies that detect each component independently, with formal guarantees and robust to the noise introduced by stochastic decoding and by the estimation of the relevance sets.
    \item \textbf{Bias mitigation guided by detection metrics.} The proposed metrics can guide mitigation, either by adding PB as a regularization term when fine-tuning the recommender, as related work does in reranking settings~\cite{debiasFirst}, or, without access to the model weights, by optimizing the prompt template with a genetic algorithm whose fitness is the negated sPB on random batches of queries. Such an optimization should preserve the recommendation task, e.g., by also penalizing PB, since a purely position-driven prompt could also lower sPB, and be validated on held-out queries. More broadly, our framework provides a principled protocol to benchmark any mitigation strategy.
\end{itemize}

Future work should therefore further extend positional bias analysis to top-$N$, multi-turn conversational and other recommendation domains, incorporate explicit quality and utility metrics, and explore the interaction between positional bias and other structural biases such as popularity or exposure bias. Finally, the development of principled mitigation techniques and the integration of positional bias diagnostics into production-scale evaluation frameworks remain important directions to advance the robustness and trustworthiness of LLM-based information systems.

\subsubsection*{Acknowledgments} This publication is part of the Project “Ethical, Responsible and General Purpose Artificial Intelligence: Applications In Risk Scenarios” (IAFER) Exp.:TSI-100927-2023-1 funded through the Creation of university-industry research programs (Enia Programs), aimed at the research and development of artificial intelligence, for its dissemination and education within the framework of the Recovery, Transformation and Resilience Plan from the European Union Next Generation EU through the Ministry for Digital Transformation and the Civil Service.

This publication used computational resources from the project "Thematic Center on Mountain Ecosystem \& Remote sensing, Deep learning-AI e-Services University of Granada-Sierra Nevada" (LifeWatch-2019-10-UGR-01), which has been co-funded by the Ministry of Science and Innovation through the FEDER funds from the Spanish Pluriregional Operational Program 2014-2020 (POPE), LifeWatch-ERIC action line.

\bibliographystyle{splncs04}
\bibliography{biblatex}

@book{agresti2013categorical,
  title     = {Categorical Data Analysis},
  author    = {Agresti, Alan},
  edition   = {3},
  year      = {2013},
  publisher = {Wiley},
  address   = {Hoboken, NJ, USA}
}

@article{chen_bias_2023,
	title = {Bias and {Debias} in {Recommender} {System}: {A} {Survey} and {Future} {Directions}},
	volume = {41},
	issn = {1046-8188},
	
	number = {3},
	journal = {ACM Trans. Inf. Syst.},
	author = {Chen, Jiawei and Dong, Hande and Wang, Xiang and Feng, Fuli and Wang, Meng and He, Xiangnan},
	month = feb,
	year = {2023},
	note = {Place: New York, NY, USA
Publisher: Association for Computing Machinery},
}

@inproceedings{collins_position_2018,
	address = {Cham},
	title = {Position {Bias} in {Recommender} {Systems} for {Digital} {Libraries}},
	isbn = {978-3-319-78105-1},
	booktitle = {Transforming {Digital} {Worlds}},
	publisher = {Springer International Publishing},
	author = {Collins, Andrew and Tkaczyk, Dominika and Aizawa, Akiko and Beel, Joeran},
	editor = {Chowdhury, Gobinda and McLeod, Julie and Gillet, Val and Willett, Peter},
	year = {2018},
	pages = {335--344},
}

@article{deepseekr1,
  author  = {Guo, Daya and Yang, Dejian and Zhang, Hao and et al.},
  title   = {DeepSeek-R1 Incentivizes Reasoning in LLMs through Reinforcement Learning},
  journal = {Nature},
  year    = {2025},
  volume  = {645},
  pages   = {633--638},
  doi     = {10.1038/s41586-025-09422-z}
}

@inproceedings{deldjoo2024review,
  title={A review of modern recommender systems using generative models (gen-recsys)},
  author={Deldjoo, Yashar and He, Zhankui and McAuley, Julian and Korikov, Anton and Sanner, Scott and Ramisa, Arnau and Vidal, Ren{\'e} and Sathiamoorthy, Maheswaran and Kasirzadeh, Atoosa and Milano, Silvia},
  booktitle={Proceedings of the 30th ACM SIGKDD conference on Knowledge Discovery and Data Mining},
  pages={6448--6458},
  year={2024}
}

@article{gallegos_bias_2024,
    title = "Bias and Fairness in Large Language Models: A Survey",
    author = "Gallegos, Isabel O.  and
      Rossi, Ryan A.  and
      Barrow, Joe  and
      Tanjim, Md Mehrab  and
      Kim, Sungchul  and
      Dernoncourt, Franck  and
      Yu, Tong  and
      Zhang, Ruiyi  and
      Ahmed, Nesreen K.",
    journal = "Computational Linguistics",
    volume = "50",
    number = "3",
    month = sep,
    year = "2024",
    address = "Cambridge, MA",
    publisher = "MIT Press",
    url = "https://aclanthology.org/2024.cl-3.8/",
    doi = "10.1162/coli_a_00524",
    pages = "1097--1179"
}

@book{good2013permutation,
  title     = {Permutation Tests: A Practical Guide to Resampling Methods for Testing Hypotheses},
  author    = {Good, Phillip},
  edition   = {3},
  year      = {2013},
  publisher = {Springer},
  address   = {New York, NY, USA}
}

@misc{gptoss,
      title={gpt-oss-120b \& gpt-oss-20b Model Card}, 
      author={OpenAI},
      year={2025},
      eprint={2508.10925},
      archivePrefix={arXiv},
      primaryClass={cs.CL},
      url={https://arxiv.org/abs/2508.10925}, 
}

@misc{grattafiori_llama_2024,
	title = {The {Llama} 3 {Herd} of {Models}},
	
	urldate = {2025-07-17},
	publisher = {arXiv},
	author = {Grattafiori, Aaron and et al.},
	month = nov,
	year = {2024},
	note = {arXiv:2407.21783 [cs]},
}

@inproceedings{hou_large_2024,
	address = {Berlin, Heidelberg},
	title = {Large {Language} {Models} are {Zero}-{Shot} {Rankers} for {Recommender} {Systems}},
	isbn = {978-3-031-56059-0},
	booktitle = {Advances in {Information} {Retrieval}: 46th {European} {Conference} on {Information} {Retrieval}, {ECIR} 2024, {Glasgow}, {UK}, {March} 24–28, 2024, {Proceedings}, {Part} {II}},
	publisher = {Springer-Verlag},
	author = {Hou, Yupeng and Zhang, Junjie and Lin, Zihan and Lu, Hongyu and Xie, Ruobing and McAuley, Julian and Zhao, Wayne Xin},
	month = mar,
	year = {2024},
	pages = {364--381},
}

@inproceedings{li-etal-2024-large,
    title = "Large Language Models for Generative Recommendation: A Survey and Visionary Discussions",
    author = "Li, Lei  and
      Zhang, Yongfeng  and
      Liu, Dugang  and
      Chen, Li",
    editor = "Calzolari, Nicoletta  and
      Kan, Min-Yen  and
      Hoste, Veronique  and
      Lenci, Alessandro  and
      Sakti, Sakriani  and
      Xue, Nianwen",
    booktitle = "Proceedings of the 2024 Joint International Conference on Computational Linguistics, Language Resources and Evaluation (LREC-COLING 2024)",
    month = may,
    year = "2024",
    address = "Torino, Italia",
    publisher = "ELRA and ICCL",
    pages = "10146--10159"
}

@misc{
    meta2024llama4, 
    title={Introducing LLaMA 4: Advancing Multimodal Intelligence}, 
    author={Meta AI}, 
    year={2024}, 
    url={https://ai.meta.com/blog/llama-4-multimodal-intelligence/}
}

@article{navigli2023biases,
  title={Biases in large language models: origins, inventory, and discussion},
  author={Navigli, Roberto and Conia, Simone and Ross, Bj{\"o}rn},
  journal={ACM Journal of Data and Information Quality},
  volume={15},
  number={2},
  pages={1--21},
  year={2023},
  publisher={ACM New York, NY}
}

@inproceedings{pezeshkpour-2024,
    title = "Large Language Models Sensitivity to The Order of Options in Multiple-Choice Questions",
    author = "Pezeshkpour, Pouya  and
      Hruschka, Estevam",
    editor = "Duh, Kevin  and
      Gomez, Helena  and
      Bethard, Steven",
    booktitle = "Findings of the Association for Computational Linguistics: NAACL 2024",
    month = jun,
    year = "2024",
    address = "Mexico City, Mexico",
    publisher = "Association for Computational Linguistics",
    pages = "2006--2017",
}

@article{raiaan_review_2024,
	title = {A {Review} on {Large} {Language} {Models}: {Architectures}, {Applications}, {Taxonomies}, {Open} {Issues} and {Challenges}},
	volume = {12},
	issn = {2169-3536},
	shorttitle = {A {Review} on {Large} {Language} {Models}},
	
	journal = {IEEE Access},
	author = {Raiaan, Mohaimenul Azam Khan and Mukta, Md. Saddam Hossain and Fatema, Kaniz and Fahad, Nur Mohammad and Sakib, Sadman and Mim, Most Marufatul Jannat and Ahmad, Jubaer and Ali, Mohammed Eunus and Azam, Sami},
	year = {2024},
	pages = {26839--26874},
}

@article{storey2025genai,
  title   = {Generative Artificial Intelligence: Evolving Technology, Growing Societal Impact, and Opportunities for Information Systems Research},
  author  = {Storey, Veda C. and Yue, Wei T. and Zhao, J. Leon and Lukyanenko, Roman and others},
  journal = {Information Systems Frontiers},
  volume  = {27},
  pages   = {2081--2102},
  year    = {2025},
  doi     = {10.1007/s10796-025-10581-7}
}

@misc{wang_zero-shot_2023,
	title = {Zero-{Shot} {Next}-{Item} {Recommendation} using {Large} {Pretrained} {Language} {Models}},
	
	urldate = {2025-06-30},
	publisher = {arXiv},
	author = {Wang, Lei and Lim, Ee-Peng},
	month = apr,
	year = {2023},
	note = {arXiv:2304.03153 [cs]},
}

@article{wu2024survey,
  title={A survey on large language models for recommendation},
  author={Wu, Likang and Zheng, Zhi and Qiu, Zhaopeng and Wang, Hao and Gu, Hongchao and Shen, Tingjia and Qin, Chuan and Zhu, Chen and Zhu, Hengshu and Liu, Qi and others},
  journal={World Wide Web},
  volume={27},
  number={5},
  pages={60},
  year={2024},
  publisher={Springer}
}

@inproceedings{xiao_c-pack_2024,
	address = {New York, NY, USA},
	series = {{SIGIR} '24},
	title = {C-{Pack}: {Packed} {Resources} {For} {General} {Chinese} {Embeddings}},
	isbn = {979-8-4007-0431-4},
	shorttitle = {C-{Pack}},
	booktitle = {Proceedings of the 47th {International} {ACM} {SIGIR} {Conference} on {Research} and {Development} in {Information} {Retrieval}},
	publisher = {Association for Computing Machinery},
	author = {Xiao, Shitao and Liu, Zheng and Zhang, Peitian and Muennighoff, Niklas and Lian, Defu and Nie, Jian-Yun},
	month = jul,
	year = {2024},
	pages = {641--649},
}

@inproceedings{yang-etal-2025-option,
    title = "Option Symbol Matters: Investigating and Mitigating Multiple-Choice Option Symbol Bias of Large Language Models",
    author = "Yang, Zhen  and
      Jian, Ping  and
      Li, Chengzhi",
    editor = "Chiruzzo, Luis  and
      Ritter, Alan  and
      Wang, Lu",
    booktitle = "Proceedings of the 2025 Conference of the Nations of the Americas Chapter of the Association for Computational Linguistics: Human Language Technologies (Volume 1: Long Papers)",
    month = apr,
    year = "2025",
    address = "Albuquerque, New Mexico",
    publisher = "Association for Computational Linguistics",
    pages = "1902--1917",
    ISBN = "979-8-89176-189-6"
}

@inproceedings{zhang_can_2024,
	address = {Miami, Florida, USA},
	title = {Can {We} {Instruct} {LLMs} to {Compensate} for {Position} {Bias}?},
	
	urldate = {2025-07-17},
	booktitle = {Findings of the {Association} for {Computational} {Linguistics}: {EMNLP} 2024},
	publisher = {Association for Computational Linguistics},
	author = {Zhang, Meiru and Meng, Zaiqiao and Collier, Nigel},
	editor = {Al-Onaizan, Yaser and Bansal, Mohit and Chen, Yun-Nung},
	month = nov,
	year = {2024},
	pages = {12545--12556},
}

@misc{zheng_large_2024,
	title = {Large {Language} {Models} {Are} {Not} {Robust} {Multiple} {Choice} {Selectors}},
	booktitle = {The {Twelfth} {International} {Conference} on {Learning} {Representations}},
	author = {Zheng, Chujie and Zhou, Hao and Meng, Fandong and Zhou, Jie and Huang, Minlie},
	year = {2024}
}

@article{liu-etal-2024-lost,
    title = "Lost in the Middle: How Language Models Use Long Contexts",
    author = "Liu, Nelson F.  and
      Lin, Kevin  and
      Hewitt, John  and
      Paranjape, Ashwin  and
      Bevilacqua, Michele  and
      Petroni, Fabio  and
      Liang, Percy",
    journal = "Transactions of the Association for Computational Linguistics",
    volume = "12",
    year = "2024",
    address = "Cambridge, MA",
    publisher = "MIT Press",
    url = "https://aclanthology.org/2024.tacl-1.9/",
    doi = "10.1162/tacl_a_00638",
    pages = "157--173"
}

@inproceedings{hsieh-etal-2024-found,
    title = "Found in the middle: Calibrating Positional Attention Bias Improves Long Context Utilization",
    author = "Hsieh, Cheng-Yu  and
      Chuang, Yung-Sung  and
      Li, Chun-Liang  and
      Wang, Zifeng  and
      Le, Long  and
      Kumar, Abhishek  and
      Glass, James  and
      Ratner, Alexander  and
      Lee, Chen-Yu  and
      Krishna, Ranjay  and
      Pfister, Tomas",
    editor = "Ku, Lun-Wei  and
      Martins, Andre  and
      Srikumar, Vivek",
    booktitle = "Findings of the Association for Computational Linguistics: ACL 2024",
    month = aug,
    year = "2024",
    address = "Bangkok, Thailand",
    publisher = "Association for Computational Linguistics",
    url = "https://aclanthology.org/2024.findings-acl.890/",
    doi = "10.18653/v1/2024.findings-acl.890",
    pages = "14982--14995"
}

@inproceedings{zheng-etal-2023-judging,
author = {Zheng, Lianmin and Chiang, Wei-Lin and Sheng, Ying and Zhuang, Siyuan and Wu, Zhanghao and Zhuang, Yonghao and Lin, Zi and Li, Zhuohan and Li, Dacheng and Xing, Eric P. and Zhang, Hao and Gonzalez, Joseph E. and Stoica, Ion},
title = {Judging LLM-as-a-judge with MT-bench and Chatbot Arena},
year = {2023},
publisher = {Curran Associates Inc.},
address = {Red Hook, NY, USA},
booktitle = {Proceedings of the 37th International Conference on Neural Information Processing Systems},
articleno = {2020},
numpages = {29},
location = {New Orleans, LA, USA},
series = {NIPS '23}
}

@inproceedings{shi-etal-2025-judging,
    title = "Judging the Judges: A Systematic Study of Position Bias in {LLM}-as-a-Judge",
    author = "Shi, Lin  and
      Ma, Chiyu  and
      Liang, Wenhua  and
      Diao, Xingjian  and
      Ma, Weicheng  and
      Vosoughi, Soroush",
    editor = "Inui, Kentaro  and
      Sakti, Sakriani  and
      Wang, Haofen  and
      Wong, Derek F.  and
      Bhattacharyya, Pushpak  and
      Banerjee, Biplab  and
      Ekbal, Asif  and
      Chakraborty, Tanmoy  and
      Singh, Dhirendra Pratap",
    booktitle = "Proceedings of the 14th International Joint Conference on Natural Language Processing and the 4th Conference of the Asia-Pacific Chapter of the Association for Computational Linguistics",
    month = dec,
    year = "2025",
    address = "Mumbai, India",
    publisher = "The Asian Federation of Natural Language Processing and The Association for Computational Linguistics",
    url = "https://aclanthology.org/2025.ijcnlp-long.18/",
    doi = "10.18653/v1/2025.ijcnlp-long.18",
    pages = "292--314",
    ISBN = "979-8-89176-298-5",
}

@inproceedings{liusie-etal-2024-teacher,
    title = "Teacher-Student Training for Debiasing: General Permutation Debiasing for Large Language Models",
    author = "Liusie, Adian  and
      Fathullah, Yassir  and
      Gales, Mark",
    editor = "Ku, Lun-Wei  and
      Martins, Andre  and
      Srikumar, Vivek",
    booktitle = "Findings of the Association for Computational Linguistics: ACL 2024",
    month = aug,
    year = "2024",
    address = "Bangkok, Thailand",
    publisher = "Association for Computational Linguistics",
    url = "https://aclanthology.org/2024.findings-acl.81/",
    doi = "10.18653/v1/2024.findings-acl.81",
    pages = "1376--1387",
}

@inproceedings{wei-etal-2024-unveiling,
    title = "Unveiling Selection Biases: Exploring Order and Token Sensitivity in Large Language Models",
    author = "Wei, Sheng-Lun  and
      Wu, Cheng-Kuang  and
      Huang, Hen-Hsen  and
      Chen, Hsin-Hsi",
    editor = "Ku, Lun-Wei  and
      Martins, Andre  and
      Srikumar, Vivek",
    booktitle = "Findings of the Association for Computational Linguistics: ACL 2024",
    month = aug,
    year = "2024",
    address = "Bangkok, Thailand",
    publisher = "Association for Computational Linguistics",
    url = "https://aclanthology.org/2024.findings-acl.333/",
    doi = "10.18653/v1/2024.findings-acl.333",
    pages = "5598--5621"
}

@InProceedings{debiasFirst,
    author="Qiao, Jingfen
    and Huang, Jin
    and Ma, Xinyu
    and Wang, Shuaiqiang
    and Yin, Dawei
    and Kanoulas, Evangelos
    and Yates, Andrew",
    editor="Campos, Ricardo
    and Jatowt, Adam
    and Lan, Yanyan
    and Aliannejadi, Mohammad
    and Bauer, Christine
    and MacAvaney, Sean
    and Anand, Avishek
    and Ren, Zhaochun
    and Verberne, Suzan
    and Bai, Nan
    and Mansoury, Masoud",
    title="LLM-Based Listwise Reranking Under the Effect of Positional Bias",
    booktitle="Advances in Information Retrieval",
    year="2026",
    publisher="Springer Nature Switzerland",
    address="Cham",
    pages="131--146",
    isbn="978-3-032-21289-4"
}

% I was making a mistake, computing the POB-ICM over the contingency matrix of raw predicted positions. It should be applied over the item positions before shuffling (I didn't apply \sigma)

\appendix

\section{Proof of \autoref{thm:stochastic} and \autoref{prop:lower_bound}}
\label{app:proof_theorem}

\begin{customthm}{\ref{thm:stochastic}}
    Under $\varepsilon$-absence of positional bias, we have $\mathrm{TV}(p(\cdot), U_k) \le \varepsilon$, where $U_k$ denotes the uniform distribution over $\{1, 2, \ldots, k\}$. In fact, for all instances $x$, $\mathrm{TV}\!\left(p(\cdot \mid x),\; U_k\right) \leq \varepsilon$.
\end{customthm}

\begin{proof}[Proof of \autoref{thm:stochastic}]
    Fix an instance $X = x$ and a reference permutation $\varphi_0 \in S_k$ (e.g., the identity). Write $r_\sigma(j) = \mathrm{Pr}(P_j \mid x, \sigma)$ for the distribution over items under the ordering $\sigma$. Recall that if the item at position $i$ in the prompt is picked under ordering $\sigma$, then the item selected is $P_{\sigma(i)}$. Thus, for any $\sigma \in S_k$, we have
    \begin{equation*}
        p(i \mid x) = \mathbb{E}_{\sigma \sim U(S_k)} \left[ r_\sigma(\sigma(i)) \right].
    \end{equation*}
    Since $\sigma(i)$ is uniformly distributed over $\{1, \ldots, k\}$ when $\sigma \sim U(S_k)$, for the fixed permutation $\varphi_0$ we obtain 
    \begin{equation*}
        \mathbb{E}_{\sigma \sim U(S_k)}\left[r_{\varphi_0}(\sigma(i))\right] = \frac{1}{k} \sum_{j=1}^k r_{\varphi_0}(j) = \frac{1}{k}.
    \end{equation*}
    Therefore, combining the two equations above, we find that
    \begin{equation*}
        p(i \mid x) - \frac{1}{k} = \mathbb{E}_{\sigma \sim U(S_k)} \left[ r_\sigma(\sigma(i)) \right] - \mathbb{E}_{\sigma \sim U(S_k)}\left[r_{\varphi_0}(\sigma(i))\right] = \mathbb{E}_{\sigma} \left[ r_\sigma(\sigma(i)) - r_{\varphi_0}(\sigma(i)) \right],
    \end{equation*}
    and we have
    \begin{equation*}
    \begin{split}
        \sum_{i=1}^k \left| p(i \mid x) - \frac{1}{k} \right|
        &= \sum_{i=1}^k \left| \mathbb{E}_{\sigma} \left[ r_\sigma(\sigma(i)) - r_{\varphi_0}(\sigma(i)) \right] \right| \\
        & \leq \sum_{i=1}^k \mathbb{E}_{\sigma} \left| r_\sigma(\sigma(i)) - r_{\varphi_0}(\sigma(i)) \right|,
    \end{split}
    \end{equation*}
    where we have used the inequality $\lvert \mathbb{E}[Y] \rvert \le \mathbb{E}[ \lvert Y \rvert]$. By using the definition of expectation and reordering the sums, it follows
    \begin{equation*}
    \begin{split}
        \sum_{i=1}^k \mathbb{E}_{\sigma} \left| r_\sigma(\sigma(i)) - r_{\varphi_0}(\sigma(i)) \right|
        & = \sum_{i=1}^k \sum_{j = 1}^k \sum_{\sigma \in S_k : \sigma(i) = j} \frac{1}{k!}  \left| r_\sigma(j) - r_{\varphi_0}(j) \right| \\
        & = \sum_{j = 1}^k \sum_{\sigma \in S_k} \frac{1}{k!}  \left| r_\sigma(j) - r_{\varphi_0}(j) \right| \\
        &= \sum_{\sigma \in S_k} \frac{1}{k!} \sum_{j=1}^k \left| r_\sigma(j) - r_{\varphi_0}(j) \right| \\
        &= \mathbb{E}_{\sigma \sim U(S_k)} \left[ \sum_{j=1}^k \left| r_\sigma(j) - r_{\varphi_0}(j) \right| \right] \\
        &= 2\, \mathbb{E}_{\sigma}\left[\mathrm{TV}(r_\sigma, r_{\varphi_0})\right] \\
        &\leq 2\varepsilon,
    \end{split}
    \end{equation*}
    where the last inequality follows from \autoref{def:eps-absence}. Dividing both sides by $2$ yields 
    \begin{equation*}
        \mathrm{TV}(p(\cdot \vert x),U_k) =  \frac{1}{2}\sum_{i=1}^k \left| p(i \mid x) - \frac{1}{k} \right| \le \frac{1}{2} \sum_{i=1}^k \mathbb{E}_{\sigma} \left| r_\sigma(\sigma(i)) - r_{\varphi_0}(\sigma(i)) \right| \le \varepsilon,
    \end{equation*}
    as we wanted to prove.
    Now we tackle the unconditional distribution $p(i) = \mathbb{E}_X[p(i \mid X)]$. Indeed, by convexity of total variation distance and Jensen's inequality,
    \begin{equation*}
        \mathrm{TV}(p(\cdot), U_k) = \mathrm{TV}\!\left(\mathbb{E}_X[p(\cdot \mid X)],\; U_k\right) \leq \mathbb{E}_X\left[\mathrm{TV}(p(\cdot \mid X), U_k)\right] \leq \varepsilon. \qedhere
    \end{equation*}
\end{proof}

\begin{customprop}{\ref{prop:lower_bound}}
    Fix $k \ge 2$, $\ell \ge 1$ and $\alpha \in (0,1)$, and let $L$ be the bound defined in~\autoref{eq:tv_lower_bound}, computed from $n$ independent instances. Then
    \begin{equation*}
        \liminf_{n \to \infty} \; \Pr\!\left[\, L \le \mathrm{TV}(p(\cdot), U_k) \,\right] \;\ge\; 1 - \alpha .
    \end{equation*}
\end{customprop}
\begin{proof}
    We first show that $\limsup_{n \to \infty} \Pr[L_i > \mu_i] \le \alpha/k$ for every $i \in \{1, 2, \ldots, k\}$. Fix $i$. The variables $Z_1(i), Z_2(i), \ldots$ are independent and identically distributed with mean $\mu_i$, and they are bounded since $\lvert Z_j(i) \rvert \le 1$. We distinguish two cases.
    \begin{itemize}
        \item Case $\operatorname{Var}(Z_1(i)) = 0$. Then $Z_j(i) = \mu_i$ almost surely for every $j$, so that $s_i = 0$ and $L_i = \mu_i$. Hence $\Pr[L_i > \mu_i] = 0$.
        \item Case $\operatorname{Var}(Z_1(i)) > 0$. Let $T_n(i) = \sqrt{n}\,(\bar{Z}(i) - \mu_i)/s_i$, with the convention $T_n(i) = 0$ when $s_i = 0$. By the strong law of large numbers, $s_i$ converges almost surely to $\operatorname{Var}(Z_1(i))^{1/2} > 0$, so $\Pr[s_i = 0] \to 0$. Moreover, by the central limit theorem and Slutsky's theorem, $T_n(i)$ converges in distribution to a standard normal random variable. On the event $s_i > 0$, we have $L_i > \mu_i$ if and only if $T_n(i) > t_{1-\alpha/k,\,n-1}$. Since $t_{1-\alpha/k,\,n-1} \to z_{1-\alpha/k}$ as $n \to \infty$, where $z_\beta$ denotes the $\beta$-quantile of the standard normal distribution, we obtain
        \begin{equation*}
            \lim_{n \to \infty} \Pr\!\left[ L_i > \mu_i \right]
            = \lim_{n \to \infty} \Pr\!\left[ T_n(i) > t_{1-\alpha/k,\,n-1} \right]
            = \frac{\alpha}{k} .
        \end{equation*}
    \end{itemize}
    Now let $E_n$ be the event ``$L_i \le \mu_i$ for all $i$'', and let $E_n^{C}$ be its complement. By the union bound,
    \begin{equation*}
        \limsup_{n \to \infty} \Pr[E_n^{C}] \le \sum_{i=1}^{k} \limsup_{n \to \infty} \Pr[L_i > \mu_i] \le \alpha,
    \end{equation*}
    so $\liminf_{n \to \infty} \Pr[E_n] \ge 1 - \alpha$. If $E_n$ holds, the monotonicity of $x \mapsto x_+$ gives $(L_i)_+ \le (\mu_i)_+$ for every $i$, and therefore
    \begin{equation*}
        L = \sum_{i=1}^{k} (L_i)_+ \;\le\; \sum_{i=1}^{k} (\mu_i)_+ \;=\; \mathrm{TV}(p(\cdot), U_k),
    \end{equation*}
    where the last equality is~\autoref{eq:tv_positive_part}. That is, $E_n$ implies $L \le \mathrm{TV}(p(\cdot), U_k)$, and we conclude that
    \begin{equation*}
        1 - \alpha \le \liminf_{n \to \infty} \Pr[E_n] \le \liminf_{n \to \infty} \Pr\!\left[ L \le \mathrm{TV}(p(\cdot), U_k) \right]. \qedhere
    \end{equation*}
\end{proof}

\section{Simulation study: Type-I error and statistical power}
\label{sec:simulation_study}

To assess the statistical validity and sample efficiency of POB-ICM and POB-UNBP independently of any specific LLM, we run a simulation study on synthetic data with known ground-truth bias. We measure (i) empirical Type-I error under the null hypothesis of no positional bias, and (ii) statistical power as a function of both bias strength and sample size.

For each synthetic query\footnote{In this context, a synthetic query is a numerical index with no associated text, unlike the queries used in \autoref{sec:datasets}.}, we draw a uniformly random, non-empty relevance set $\mathcal{R}(X) \subseteq \{1,\ldots,k\}$. For each sampled permutation $\sigma$, the simulated recommender selects an item uniformly from $\mathcal{R}(X)$ with probability $1-p_{\mathrm{bias}}$, and otherwise (probability $p_{\mathrm{bias}}$) selects the item placed first by $\sigma$. Thus $p_{\mathrm{bias}}=0$ is an exact null model of no positional bias, and $p_{\mathrm{bias}}>0$ injects a known amount of strong positional bias; all experiments below are repeated for $k \in \{5,10,15,20\}$.

\emph{Experiment 1: Type-I error.} We estimate the empirical Type-I error (false positive rate) of both tests, where synthetic datasets with no induced bias are sampled and tested. We fix $p_{\mathrm{bias}}=0$, $n_{\mathrm{queries}}=2000$, $n_\sigma=20$ (matching \autoref{sec:datasets}), and estimate the rejection rate at $\alpha=0.05$ over $1000$ replicates. A well calibrated $p$-value should produce a type-I error equal to the nominal level $\alpha$ on expectation. \Cref{tab:type_i_error} shows POB-UNBP stays well calibrated at every $k$ ($0.039$--$0.053$), while POB-ICM is calibrated at $k=5,10$ but becomes conservative for $k\geq15$ ($0.016$--$0.020$), since most columns of its contingency matrix contain a single observation, which makes the Chi-square statistic nearly constant across the permutation replicates and produces many ties with the observed value.

\emph{Experiment 2: power vs.\ bias strength.} We measure power as a function of bias magnitude. Fixing $n_{\mathrm{queries}}=500$, $n_\sigma=20$, we sweep $p_{\mathrm{bias}} \in (0,0.1)$ and estimate power over $200$ replicates per point. \Cref{fig:power_vs_bias} shows POB-UNBP saturates by $p_{\mathrm{bias}}\approx0.02$ at $k=5$, and at even smaller $p_{\mathrm{bias}}$ for larger $k$, while POB-ICM needs roughly $p_{\mathrm{bias}}\in(0.02,0.06)$ at $k=5$ and fails to gain power at all for $k\geq10$.

\emph{Experiment 3: minimum sample size.} We estimate the minimum number of queries needed for $0.8$ power. Fixing $p_{\mathrm{bias}}=0.03$, we grid-search $n_{\mathrm{queries}}\in[25,2500]$, $n_\sigma\in[2,80]$ (60 replicates/cell), interpolating the smallest $n_{\mathrm{queries}}$ reaching $80\%$ power per $n_\sigma$. \Cref{tab:min_samples} shows POB-UNBP requires approximately six times fewer queries than POB-ICM at $k=5$ (e.g., $175$ vs.\ $1{,}068$ at $n_\sigma=20$), and this requirement decreases further with $k$ ($65$ queries at $k=20$); POB-ICM never reaches $80\%$ power for $k\geq10$, so it is reported only at $k=5$.

Taken together, both tests are valid (Type-I error near nominal), and POB-UNBP is consistently more sample-efficient than POB-ICM, an advantage that grows rather than shrinks with $k$. This motivates relying on POB-UNBP for the candidate-set-size analysis in \autoref{sec:results_unbp} and reporting POB-ICM only at $k=5$ (\autoref{tab:tests_results}); it also confirms that the $n_{\mathrm{queries}}=2000$, $n_\sigma=20$ operating point used throughout \autoref{sec:results} affords high power for both tests.

\begin{figure}
\centering

\begin{minipage}[b]{0.48\columnwidth}
    \centering
    
    \begin{tabular}{c|cccc}
        \toprule
        Method & $k=5$ & $k=10$ & $k=15$ & $k=20$ \\
        \midrule
        POB-ICM & 0.055 & 0.050 & 0.016 & 0.020 \\
        POB-UNBP & 0.053 & 0.044 & 0.043 & 0.039 \\
        \bottomrule
    \end{tabular}
    \captionof{table}{Type-I error (false positive rate) in the simulation, $1000$ synthetic dataset samples with no positional bias induced. % COMENTARIO: Explicar más claramente qué es en este contexto el type I error rate en este experimento
    }
    \label{tab:type_i_error}

    \vspace{12mm}

    \resizebox{\linewidth}{!}{%
    \begin{tabular}{r|r|rrrr}
        \toprule
        \multirow{2}{*}{$n_\sigma$} & POB-ICM & \multicolumn{4}{c}{POB-UNBP} \\
         & $k=5$ & $k=5$ & $k=10$ & $k=15$ & $k=20$ \\
        \midrule    2 & $\ge$2500 & 1393 & 707 & 688 & 562 \\
        5 & $\ge$2500 & 659 & 368 & 338 & 225 \\
        10 & 1983 & 344 & 214 & 181 & 168 \\
        20 & 1068 & 175 & 98 & 72 & 65 \\
        40 & 589 & 88 & 46 & 44 & 39 \\
        80 & 246 & 40 & $\le$25 & $\le$25 & $\le$25 \\
        \bottomrule
    \end{tabular}}
    \captionof{table}{Required number of queries to reach a power of $0.8$, conditioned on the number of permutations per query, $n_\sigma$, used in the experiment.}
    \label{tab:min_samples}
\end{minipage}
\hfill
\begin{minipage}[b]{0.48\columnwidth}
    \centering
    \includegraphics[width=\linewidth]{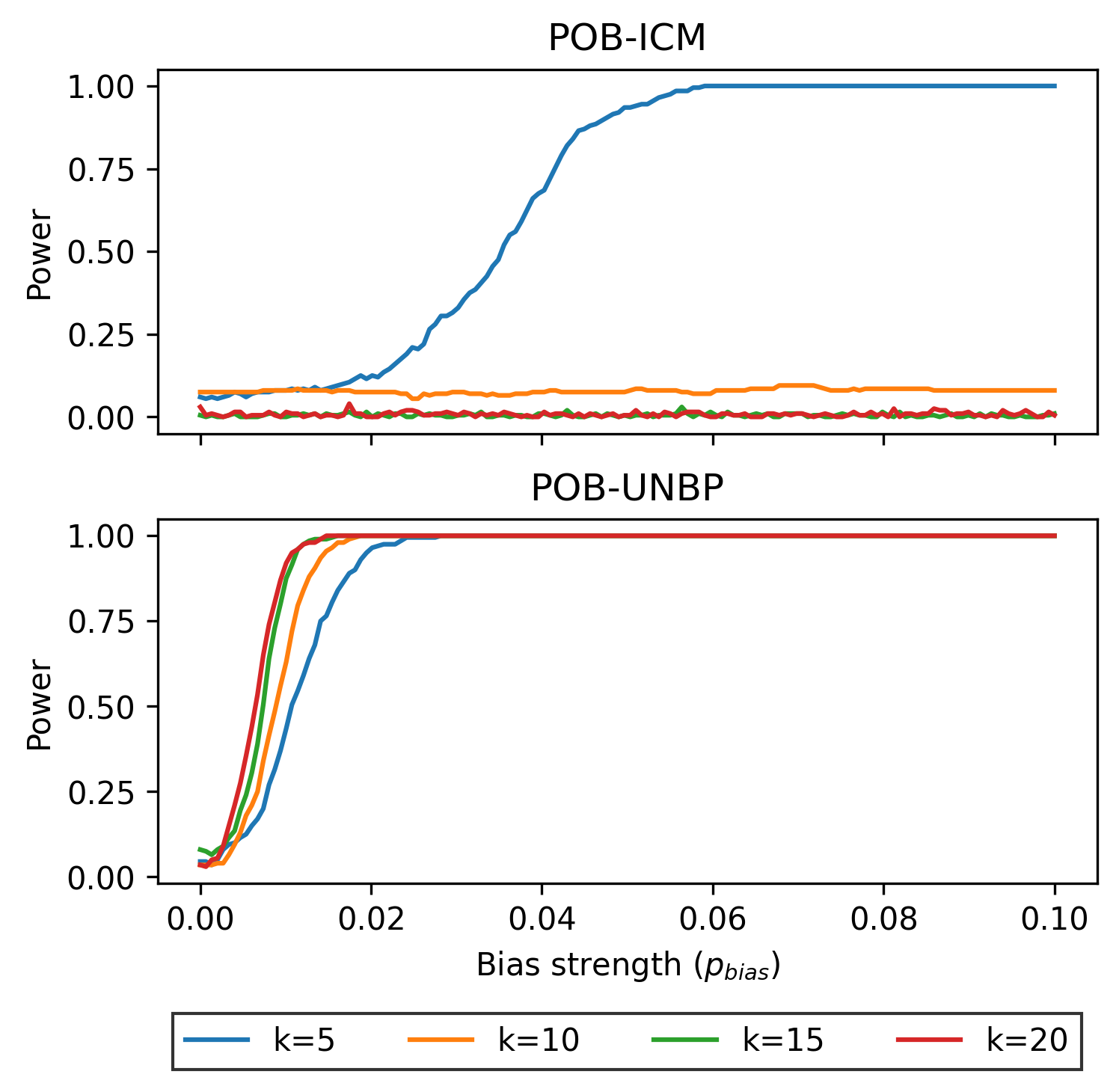}
    \captionof{figure}{Power depending on the bias strength, measured as the probability of switch the decision to the first prompt position.}
    \label{fig:power_vs_bias}
\end{minipage}

\end{figure}

\section{Separating stochastic decoding variability from positional bias}
\label{app:sampling_vs_order}

Under non-zero temperature, the recommended item may change between two evaluations of the same instance either because the candidate items were permuted or because of the randomness of the decoding itself. To disentangle both mechanisms, we evaluate the top-$5$ setting under three counterfactual configurations: (i) \emph{decoding only}, with temperature $t = 0.6$ and the identity permutation for every evaluation; (ii) \emph{permutation only}, with $t = 0$ and uniformly sampled permutations; and (iii) \emph{both}, with $t = 0.6$ and uniformly sampled permutations. Configurations (ii) and (iii) are the evaluations reported in \autoref{sec:results_analysis}.

For every configuration, we compute three quantities on $20$ evaluations per instance. The first is the instance-level strong positional bias, that is, the fraction of evaluations whose recommended item lies outside the relevance set $\mathcal{R}(X)$ (\autoref{eq:strong_positional_bias_metric}). The second is the lower bound $L$ of POB-UNBP (\autoref{eq:tv_lower_bound}), in percentage points, which quantifies the overall positional bias; it is reported only for configurations (ii) and (iii), since under configuration (i) the order of the items is fixed. The third is the intersection over union (IoU) between the relevance set of the instance and the one obtained under configuration (i), averaged over instances. Relevance sets are computed as described in \autoref{sec:metrics}, with the folds and thresholds $\tau$ of configuration (ii) shared by the three configurations, so that the relevance set is defined identically across them.\footnote{Hence, the sPB values of configuration (ii) coincide with \autoref{tab:metrics_topk}, whereas those of configuration (iii) may differ from \autoref{tab:metrics_temperature}, where $\tau$ is calibrated on the evaluations at $t = 0.6$; the differences are largest for DeepSeek R1 8B and LLaMA 3.1 8B, due to the discreteness of the selection frequencies discussed in \autoref{app:spb_ci}.} Confidence intervals for sPB are obtained by percentile bootstrap at the query level, keeping the folds, the thresholds $\tau$ and hence the relevance sets fixed. They therefore reflect the sampling variability of the queries conditionally on the relevance sets, which is the relevant uncertainty when comparing configurations that share them, and are narrower than those of \autoref{app:spb_ci}, where $\tau$ is recalibrated in every resample.

To test whether permutations increase strong positional bias beyond the effect of stochastic decoding, we compare the instance-level sPB of configurations (iii) and (i), which share the same temperature and differ only in the permutations. Since both configurations evaluate the same $2000$ instances, we apply a two-sided paired Wilcoxon signed-rank test, discarding instances with identical values in both configurations, and adjust the resulting $p$-values for the eight models with the Holm procedure, which controls the family-wise error rate at $0.05$.

\begin{table}
    \centering
    
    \resizebox{\linewidth}{!}{%
    \begin{tabular}{c|ccc|cc|cc|c}
        \toprule
        \multirow{2}{*}{Model name}
        & \multicolumn{3}{c|}{sPB}
        & \multicolumn{2}{c|}{$L$}
        & \multicolumn{2}{c|}{IoU of $\mathcal{R}(X)$}
        & \multirow{2}{*}{$p$-value} \\
        & (i) & (ii) & (iii) & (ii) & (iii) & (i), (ii) & (i), (iii) & \\
        \midrule
        DeepSeek R1 32B & 2.9 (2.8, 3.1) & 3.1 (2.9, 3.2) & 3.4 (3.3, 3.6) & 1.0 & 1.1 & 78.9 & 79.8 & \textbf{2.2e-08} \\
        DeepSeek R1 7B & 9.1 (8.8, 9.4) & 9.8 (9.5, 10.1) & 10.2 (9.9, 10.5) & 4.2 & 4.4 & 70.2 & 70.6 & \textbf{7.4e-09} \\
        DeepSeek R1 8B & 6.6 (6.3, 6.8) & 6.7 (6.5, 6.9) & 7.5 (7.3, 7.8) & 0.8 & 1.6 & 73.6 & 73.4 & \textbf{4.3e-11} \\
        LLaMA 3.1 8B & 6.9 (6.7, 7.2) & 6.7 (6.4, 6.9) & 7.5 (7.3, 7.7) & 3.0 & 2.0 & 74.3 & 74.5 & \textbf{1.6e-08} \\
        LLaMA 4 Scout & 2.3 (2.1, 2.5) & 3.2 (3.0, 3.4) & 3.3 (3.1, 3.5) & 0.9 & 1.0 & 76.4 & 76.6 & \textbf{5.1e-19} \\
        GPT OSS 120B & 0.9 (0.9, 1.0) & 1.0 (1.0, 1.1) & 1.1 (1.0, 1.1) & 3.1 & 3.2 & 74.8 & 75.1 & 0.104 \\
        GPT-OSS 20B & 2.5 (2.3, 2.6) & 2.1 (2.0, 2.2) & 2.3 (2.1, 2.4) & 4.7 & 4.9 & 72.8 & 74.2 & 0.194 \\
        LLaMA 3.1 70B & 3.2 (3.0, 3.3) & 3.1 (3.0, 3.3) & 3.5 (3.3, 3.7) & 3.4 & 4.5 & 78.5 & 79.6 & \textbf{0.003} \\
        \bottomrule
    \end{tabular}}
    \caption{Strong positional bias (sPB), POB-UNBP lower bound $L$ of $\mathrm{TV}(p(\cdot), U_k)$ for the configurations with permutations, and mean intersection over union (IoU) of the relevance sets with configuration (i), for top-$5$ under configurations (i) stochastic decoding, (ii) random permutation and (iii) both. sPB and IoU are percentages and $L$ is in percentage points, with $95\%$ confidence intervals for sPB. The last column reports the Holm-adjusted $p$-value comparing the sPB of configurations (iii) and (i); bold values are significant at the $0.05$ level.}
    \label{tab:spb_mechanisms}
\end{table}

\paragraph{Overall positional bias.} Stochastic decoding does not remove the overall positional bias: $L$ remains of similar magnitude with and without temperature, between $0.8$ and $4.7$ percentage points under configuration (ii) and between $1.0$ and $4.9$ under configuration (iii), consistently with \autoref{tab:pob_unbp_temp}.

\paragraph{Strong positional bias beyond decoding noise.} Recommendations outside the relevance set are rare in all configurations, with sPB between $0.9\%$ and $10.2\%$, but they depend on the source of variability. Adding permutations to stochastic decoding, configuration (iii) versus (i), significantly increases sPB for six of the eight models, by $0.3$ to $1.1$ percentage points, the exceptions being the two GPT-OSS models, whose sPB is the lowest among the evaluated models. Permutations alone, configuration (ii), yield an sPB comparable to that of stochastic decoding alone for most models, and notably larger for LLaMA 4 Scout ($3.2\%$ against $2.3\%$) and DeepSeek R1 7B ($9.8\%$ against $9.1\%$). Therefore, part of the strong positional bias observed under stochastic decoding is attributable to the permutation of the candidate items and is not explained by temperature alone.

The GPT-OSS models are a notable exception. Their sPB under permutations does not exceed the level observed under decoding only, and the $95\%$ confidence interval of the paired difference between configurations (iii) and (i) bounds any increase of sPB due to permutations to at most $0.2$ percentage points. Since sPB also captures recommendations outside the relevance set caused by decoding noise and by the estimation of the relevance set from a limited number of evaluations, we find no evidence of strong positional bias in these models beyond this noise level, although the identity permutation used in configuration (i) prevents a fully clean comparison for them. Combined with their significant overall positional bias (\autoref{tab:tests_results}), this indicates that the positional bias of the GPT-OSS models is essentially soft: their marked preference for the first prompt positions acts as a tie-breaker among comparably relevant candidates rather than displacing the recommendation towards clearly less relevant items.

\paragraph{Stability of the relevance set.} The mean intersection over union (IoU) between the relevance sets obtained under decoding only and under permutations ranges from $70\%$ to $80\%$, with almost identical values whether or not temperature is added to the permutations. Hence, the relevance sets largely overlap across sources of variability, and their differences typically involve a single item. Agreement is lowest for DeepSeek R1 7B, the model with the highest sPB, whose recommendations are more dispersed across candidate items, and highest for DeepSeek R1 32B and LLaMA 3.1 70B. Since every relevance set is estimated from only $20$ evaluations, part of these differences reflects estimation noise in the items closest to the threshold.

One caveat applies: configuration (i) fixes the identity permutation, so the product ranked first by the semantic matcher always occupies the first prompt position, which the GPT-OSS models strongly favor (\autoref{fig:bias_analysis}); this couples positional preference with retrieval rank and may explain why the GPT-OSS models show no significant differences.

\section{Separating item identifiers from positional bias}
\label{app:labels_vs_order}

In the prompt of \autoref{fig:recommender_prompt}, each item is identified by its prompt position (``Title 1'', ``Description 1'', etc.), and the model returns this identifier in the field \texttt{article\_number}. Hence, the position of an item and its identifier are coupled, and part of the observed effect could reflect a preference for particular identifiers rather than for prompt positions. To disentangle both, we evaluate the top-$5$ setting with random labels: the identifiers $1, \ldots, k$ are assigned to the prompt positions by a random permutation, drawn independently of the query and of the item permutation, and the label returned by the model is mapped back to its prompt position before the analysis. As in \autoref{app:sampling_vs_order}, we consider three configurations at $t = 0$: (i) \emph{random labels only}, with the identity permutation; (ii) \emph{permutation only}, with positional labels and uniformly sampled permutations, i.e., the evaluation reported in \autoref{sec:results_analysis}; and (iii) \emph{both}, with random labels and uniformly sampled permutations. sPB, $L$ and the relevance sets are computed as in \autoref{app:sampling_vs_order}, with the folds and thresholds of configuration (ii), and configurations are compared with the same paired Wilcoxon signed-rank tests and Holm correction.

\begin{table}
    \centering
    
    \resizebox{\linewidth}{!}{%
    \begin{tabular}{c|ccc|cc|cc|c}
        \toprule
        \multirow{2}{*}{Model name}
        & \multicolumn{3}{c|}{sPB}
        & \multicolumn{2}{c|}{$L$}
        & \multicolumn{2}{c|}{IoU of $\mathcal{R}(X)$}
        & \multirow{2}{*}{$p$-value} \\
        & (i) & (ii) & (iii) & (ii) & (iii) & (i), (ii) & (i), (iii) & \\
        \midrule
        DeepSeek R1 32B & 2.6 (2.4, 2.8) & 3.1 (2.9, 3.2) & 3.5 (3.3, 3.6) & 1.0 & 2.6 & 78.6 & 78.9 & \textbf{3.0e-18} \\
        DeepSeek R1 7B & 9.5 (9.2, 9.8) & 9.8 (9.5, 10.1) & 10.4 (10.1, 10.7) & 4.2 & 3.9 & 71.0 & 71.3 & \textbf{2.4e-07} \\
        DeepSeek R1 8B & 6.0 (5.8, 6.3) & 6.7 (6.5, 6.9) & 7.2 (6.9, 7.4) & 0.8 & 3.7 & 72.4 & 72.4 & \textbf{1.3e-14} \\
        LLaMA 3.1 8B & 6.0 (5.7, 6.2) & 6.7 (6.4, 6.9) & 7.3 (7.0, 7.5) & 3.0 & 0.3 & 73.9 & 74.0 & \textbf{7.8e-21} \\
        LLaMA 4 Scout & 2.7 (2.5, 2.9) & 3.2 (3.0, 3.4) & 3.2 (3.1, 3.4) & 0.9 & 0.9 & 79.2 & 80.6 & \textbf{2.4e-07} \\
        GPT OSS 120B & 1.0 (0.9, 1.1) & 1.0 (1.0, 1.1) & 1.5 (1.4, 1.6) & 3.1 & 4.3 & 77.3 & 79.9 & \textbf{1.0e-14} \\
        GPT-OSS 20B & 2.3 (2.2, 2.5) & 2.1 (2.0, 2.2) & 3.0 (2.8, 3.1) & 4.7 & 7.5 & 73.9 & 76.7 & \textbf{5.7e-13} \\
        LLaMA 3.1 70B & 2.4 (2.3, 2.6) & 3.1 (3.0, 3.3) & 3.2 (3.1, 3.4) & 3.4 & 5.3 & 76.2 & 77.5 & \textbf{2.4e-13} \\
        \bottomrule
    \end{tabular}}
    \caption{Strong positional bias (sPB), POB-UNBP lower bound $L$ of $\mathrm{TV}(p(\cdot), U_k)$ for the configurations with permutations, and mean intersection over union (IoU) of the relevance sets with configuration (i), for top-$5$ under configurations (i) random labels, (ii) random permutation and (iii) both. sPB and IoU are percentages and $L$ is in percentage points, with $95\%$ confidence intervals for sPB. The last column reports the Holm-adjusted $p$-value comparing the sPB of configurations (iii) and (i); bold values are significant at the $0.05$ level.}
    \label{tab:spb_labels}
\end{table}

\paragraph{Identifiers alone.} With a fixed order, random labels already change the recommended item: the sPB of configuration (i), between $1.0\%$ and $9.5\%$, is comparable to the noise floor of stochastic decoding (\autoref{app:sampling_vs_order}). Hence, the identifiers act as a perturbation of the prompt, similar to decoding noise.

\paragraph{Overall positional bias beyond identifiers.} With random labels, configuration (iii), POB-UNBP rejects the null hypothesis of uniformity for all eight models ($p \le 3.5 \cdot 10^{-4}$), so the overall positional bias is not explained by the identifiers either. Its magnitude, however, depends on them: $L$ increases for five models, most markedly for GPT-OSS 20B ($4.7$ to $7.5$ percentage points), remains similar for DeepSeek R1 7B and LLaMA 4 Scout, and drops for LLaMA 3.1 8B ($3.0$ to $0.3$), whose preference for the last prompt positions under positional labels thus largely reflects a preference for the corresponding identifiers.

\paragraph{Strong positional bias beyond identifiers.} Adding permutations to random labels, configuration (iii) versus (i), significantly increases sPB for all eight models, by $0.5$ to $1.3$ percentage points. Moreover, decoupling identifiers from positions never reduces sPB: compared with positional labels, configuration (iii) versus (ii), sPB increases significantly for six models and is unchanged for LLaMA 4 Scout and LLaMA 3.1 70B. Therefore, the strong positional bias is not explained by the identifiers assigned to the items.

The increase of sPB from (ii) to (iii) is largest for the GPT-OSS models ($1.0\%$ to $1.5\%$ and $2.1\%$ to $3.0\%$), whose overall positional bias also increases ($L$ from $3.1$ to $4.3$ and from $4.7$ to $7.5$ percentage points). Hence, their preference for the first prompt positions is not a preference for the identifier ``1''. A possible explanation is that, when identifiers no longer follow the order of the prompt, these models rely more on the position of the items, which also displaces part of their recommendations outside the relevance set. In any case, the essentially soft positional bias of the GPT-OSS models observed in \autoref{app:sampling_vs_order} holds under positional labels, whereas their sPB increases when identifiers and positions are decoupled.

\paragraph{Stability of the relevance set.} The mean IoU between the relevance sets obtained with random labels only and under permutations ranges from $71\%$ to $81\%$, similar to the agreement observed for stochastic decoding (\autoref{app:sampling_vs_order}). The caveat of \autoref{app:sampling_vs_order} also applies: configuration (i) uses the identity permutation, so the product ranked first by the retriever always occupies the first prompt position.

\section{Confidence intervals for permutation sensitivity and strong positional bias}
\label{app:spb_ci}

This appendix reports $95\%$ confidence intervals for the permutation sensitivity (PS) and strong positional bias (sPB) values of \autoref{tab:metrics_topk} and \autoref{tab:metrics_temperature}, obtained by percentile bootstrap at the query level with $B = 2000$ resamples (\autoref{tab:spb_ci_topk} and \autoref{tab:spb_ci_temperature}). It also reports the thresholds $\tau$ across folds computed to estimate the sPB metric.

\begin{table}
    \centering
    
    \begin{tabular}{c|cccc}
        \toprule
        \multirow{2}{*}{Model name} & \multicolumn{4}{c}{Top-$k$} \\
         & $k = 5$ & $k = 10$ & $k = 15$ & $k = 20$ \\
        \midrule
        \multicolumn{5}{c}{PS} \\
        \midrule
        DeepSeek R1 32B & 21.9 (21.0, 22.7) & 29.4 (28.5, 30.4) & 34.1 (33.1, 35.2) & 37.2 (36.2, 38.3) \\
        DeepSeek R1 7B & 34.0 (33.1, 34.9) & 45.1 (44.2, 45.9) & 52.7 (51.8, 53.6) & 58.1 (57.3, 59.0) \\
        DeepSeek R1 8B & 29.8 (28.8, 30.7) & 38.6 (37.7, 39.5) & 43.2 (42.3, 44.2) & 46.4 (45.5, 47.4) \\
        LLaMA 3.1 8B & 30.4 (29.5, 31.3) & 41.9 (41.0, 42.8) & 46.9 (46.0, 47.8) & 49.6 (48.6, 50.5) \\
        LLaMA 4 Scout & 22.4 (21.5, 23.3) & 38.3 (37.3, 39.2) & 51.5 (50.6, 52.3) & 61.6 (60.9, 62.4) \\
        GPT OSS 120B & 15.5 (14.7, 16.3) & 21.9 (21.0, 22.9) & 24.9 (23.9, 25.9) & 27.3 (26.3, 28.3) \\
        GPT-OSS 20B & 17.7 (16.8, 18.5) & 24.3 (23.3, 25.3) & 28.3 (27.2, 29.3) & 30.9 (29.9, 32.0) \\
        LLaMA 3.1 70B & 22.7 (21.8, 23.6) & 31.7 (30.7, 32.8) & 36.7 (35.7, 37.7) & 40.2 (39.2, 41.3) \\
        \midrule
        \multicolumn{5}{c}{sPB} \\
        \midrule
        DeepSeek R1 32B & 3.1 (2.9, 3.2) & 6.9 (6.7, 8.5) & 10.7 (10.4, 11.1) & 13.7 (11.3, 14.2) \\
        DeepSeek R1 7B & 9.8 (9.1, 10.2) & 20.1 (19.2, 20.5) & 26.6 (25.8, 28.2) & 33.2 (32.1, 35.1) \\
        DeepSeek R1 8B & 6.7 (6.5, 7.0) & 14.8 (14.4, 15.1) & 19.4 (17.3, 19.8) & 20.6 (20.2, 21.1) \\
        LLaMA 3.1 8B & 6.7 (6.4, 6.9) & 15.7 (15.3, 16.1) & 20.7 (20.0, 21.2) & 24.1 (23.2, 25.0) \\
        LLaMA 4 Scout & 3.2 (3.0, 3.4) & 12.9 (10.9, 14.4) & 24.5 (23.9, 25.4) & 36.6 (35.6, 37.6) \\
        GPT OSS 120B & 1.0 (1.0, 1.3) & 3.4 (3.2, 3.6) & 5.9 (5.3, 6.2) & 6.7 (6.4, 6.9) \\
        GPT-OSS 20B & 2.1 (1.2, 2.5) & 5.6 (4.0, 5.9) & 6.7 (6.5, 7.6) & 9.4 (8.5, 9.9) \\
        LLaMA 3.1 70B & 3.1 (3.0, 3.3) & 9.5 (8.0, 9.8) & 12.2 (10.9, 14.3) & 14.9 (14.5, 15.3) \\
        \bottomrule
    \end{tabular}
    \caption{Permutation sensitivity (PS) and strong positional bias (sPB) across candidate-set sizes $k$ at temperature $t = 0$, with $95\%$ confidence intervals obtained by percentile bootstrap at the query level. Values are percentages.}
    \label{tab:spb_ci_topk}
\end{table}

\begin{table}
    \centering
    
    \begin{tabular}{c|cc|cc|cc|cc}
        \toprule
        \multirow{3}{*}{Model name} & \multicolumn{8}{c}{Top-$k$} \\
         & \multicolumn{2}{c|}{$k = 5$} & \multicolumn{2}{c|}{$k = 10$} & \multicolumn{2}{c|}{$k = 15$} & \multicolumn{2}{c}{$k = 20$} \\
          & Min. & Max. & Min. & Max. & Min. & Max. & Min. & Max. \\
        \midrule
        DeepSeek R1 32B & 86.8 & 87.4 & 80.6 & 80.8 & 76.7 & 77.1 & 74.0 & 74.7 \\
        DeepSeek R1 7B & 78.4 & 78.8 & 67.3 & 68.3 & 60.6 & 61.1 & 55.0 & 55.7 \\
        DeepSeek R1 8B & 80.8 & 81.6 & 72.9 & 73.5 & 68.8 & 69.5 & 65.9 & 66.9 \\
        LLaMA 3.1 8B & 81.7 & 82.1 & 71.4 & 71.9 & 66.4 & 67.0 & 63.4 & 64.1 \\
        LLaMA 4 Scout & 86.6 & 86.9 & 74.8 & 75.1 & 63.1 & 64.0 & 52.5 & 53.0 \\
        GPT OSS 120B & 90.5 & 91.1 & 85.8 & 86.5 & 83.7 & 84.1 & 81.6 & 82.0 \\
        GPT-OSS 20B & 89.6 & 90.0 & 84.2 & 84.7 & 80.7 & 81.2 & 78.6 & 79.1 \\
        LLaMA 3.1 70B & 86.2 & 86.7 & 78.9 & 79.6 & 74.7 & 75.2 & 71.9 & 72.3 \\
        \bottomrule
    \end{tabular}
    \caption{Minimum and maximum of the threshold $\tau$ of sPB across the five folds, for candidate-set sizes $k$ at temperature $t = 0$. Values are percentages.}
    \label{tab:tau_thresholds_k}
\end{table}

\paragraph{Permutation sensitivity.} PS is an average of per-query values, so each bootstrap resample recomputes this average over the resampled queries. The intervals are narrow, about $\pm 1$ percentage point. The increase of PS with $k$ is robust, as the intervals for consecutive values of $k$ never overlap. The increase with temperature is mild: the intervals at $t = 0$ and $t = 0.6$ are disjoint only for DeepSeek R1 8B and LLaMA 3.1 8B.

\paragraph{Strong positional bias.} Since sPB depends on a threshold $\tau$ calibrated from the data (\autoref{sec:metrics}), each bootstrap resample repeats the whole estimation procedure: the partition into five folds, the calibration of $\tau$ on the training folds, and the computation of the relevance sets and sPB on the held-out fold. Copies of the same instance are always assigned to the same fold. Most intervals are narrow, but some are asymmetric, e.g., DeepSeek R1 32B at $k = 10$, because with $20$ evaluations per instance the item frequencies lie on a grid of step $0.05$, so small variations of $\tau$ across resamples can move many instances between relevance sets of different sizes at once. The increase of sPB with $k$ is robust, as the intervals at $k = 5$ and $k = 20$ are far apart for every model, whereas its differences across temperatures are of the same order as the interval widths, so sPB is essentially stable with respect to temperature, in agreement with \autoref{app:sampling_vs_order}.

The calibrated thresholds are stable across folds: their minimum and maximum differ by at most $1$ percentage point in every setting (\autoref{tab:tau_thresholds_k} and \autoref{tab:tau_thresholds_t}). They decrease with $k$, as the recommendations of each instance spread over more candidates, decrease slightly with temperature for most models, and are highest for the GPT-OSS models, whose recommendations are the most concentrated.

\begin{table}
    \centering
    
    \begin{tabular}{c|cccc}
        \toprule
        \multirow{2}{*}{Model name} & \multicolumn{4}{c}{Temperature $t$} \\
         & $t = 0.0$ & $t = 0.2$ & $t = 0.4$ & $t = 0.6$ \\
        \midrule
        \multicolumn{5}{c}{PS} \\
        \midrule
        DeepSeek R1 32B & 21.9 (21.0, 22.7) & 22.7 (21.8, 23.6) & 22.9 (22.0, 23.8) & 23.4 (22.5, 24.3) \\
        DeepSeek R1 7B & 34.0 (33.1, 34.9) & 34.6 (33.8, 35.5) & 35.3 (34.4, 36.2) & 35.9 (35.0, 36.8) \\
        DeepSeek R1 8B & 29.8 (28.8, 30.7) & 30.9 (30.0, 31.8) & 31.5 (30.5, 32.3) & 32.6 (31.7, 33.4) \\
        LLaMA 3.1 8B & 30.4 (29.5, 31.3) & 31.4 (30.5, 32.3) & 32.3 (31.4, 33.1) & 33.0 (32.0, 33.8) \\
        LLaMA 4 Scout & 22.4 (21.5, 23.3) & 22.6 (21.7, 23.5) & 22.6 (21.8, 23.5) & 22.9 (21.9, 23.7) \\
        GPT OSS 120B & 15.5 (14.7, 16.3) & 15.6 (14.8, 16.4) & 15.7 (15.0, 16.5) & 16.0 (15.2, 16.8) \\
        GPT-OSS 20B & 17.7 (16.8, 18.5) & 17.9 (17.0, 18.8) & 18.6 (17.7, 19.5) & 18.6 (17.7, 19.4) \\
        LLaMA 3.1 70B & 22.7 (21.8, 23.6) & 22.8 (21.9, 23.8) & 23.7 (22.8, 24.6) & 24.2 (23.3, 25.1) \\
        \midrule
        \multicolumn{5}{c}{sPB} \\
        \midrule
        DeepSeek R1 32B & 3.1 (2.9, 3.2) & 3.3 (3.1, 3.5) & 3.4 (3.2, 4.0) & 3.5 (3.3, 5.0) \\
        DeepSeek R1 7B & 9.8 (9.1, 10.2) & 10.0 (9.6, 10.5) & 10.4 (9.9, 10.8) & 10.4 (10.0, 10.8) \\
        DeepSeek R1 8B & 6.7 (6.5, 7.0) & 6.9 (6.7, 9.1) & 8.7 (7.0, 9.9) & 9.9 (9.4, 10.2) \\
        LLaMA 3.1 8B & 6.7 (6.4, 6.9) & 7.1 (6.8, 8.7) & 9.3 (7.0, 9.8) & 9.4 (8.1, 10.1) \\
        LLaMA 4 Scout & 3.2 (3.0, 3.4) & 3.2 (3.0, 3.4) & 3.2 (3.1, 3.4) & 3.3 (3.1, 3.5) \\
        GPT OSS 120B & 1.0 (1.0, 1.3) & 1.0 (1.0, 1.4) & 1.2 (1.1, 1.7) & 1.1 (1.0, 1.6) \\
        GPT-OSS 20B & 2.1 (1.2, 2.5) & 2.5 (2.1, 2.7) & 2.6 (1.7, 2.8) & 2.7 (2.4, 2.8) \\
        LLaMA 3.1 70B & 3.1 (3.0, 3.3) & 3.4 (3.3, 3.6) & 3.5 (3.3, 4.0) & 3.5 (3.3, 4.0) \\
        \bottomrule
    \end{tabular}
    \caption{Permutation sensitivity (PS) and strong positional bias (sPB) across decoding temperatures $t$ at top-$5$, with $95\%$ confidence intervals obtained by percentile bootstrap at the query level. Values are percentages.}
    \label{tab:spb_ci_temperature}
\end{table}

\begin{table}
    \centering
    
    \begin{tabular}{c|cc|cc|cc|cc}
        \toprule
        \multirow{3}{*}{Model name} & \multicolumn{8}{c}{Temperature $t$} \\
         & \multicolumn{2}{c|}{$t = 0.0$} & \multicolumn{2}{c|}{$t = 0.2$} & \multicolumn{2}{c|}{$t = 0.4$} & \multicolumn{2}{c}{$t = 0.6$} \\
          & Min. & Max. & Min. & Max. & Min. & Max. & Min. & Max. \\
        \midrule
        DeepSeek R1 32B & 86.8 & 87.4 & 86.1 & 86.2 & 85.6 & 86.0 & 85.1 & 85.3 \\
        DeepSeek R1 7B & 78.4 & 78.8 & 77.9 & 78.2 & 77.7 & 78.6 & 76.9 & 77.6 \\
        DeepSeek R1 8B & 80.8 & 81.6 & 80.1 & 80.8 & 79.7 & 80.1 & 78.5 & 79.1 \\
        LLaMA 3.1 8B & 81.7 & 82.1 & 80.2 & 80.9 & 79.4 & 79.7 & 79.0 & 79.6 \\
        LLaMA 4 Scout & 86.6 & 86.9 & 86.0 & 86.7 & 86.4 & 86.8 & 86.4 & 86.6 \\
        GPT OSS 120B & 90.5 & 91.1 & 90.6 & 90.9 & 90.6 & 90.8 & 90.6 & 91.0 \\
        GPT-OSS 20B & 89.6 & 90.0 & 89.1 & 89.4 & 89.1 & 89.7 & 88.8 & 89.2 \\
        LLaMA 3.1 70B & 86.2 & 86.7 & 86.0 & 86.6 & 85.6 & 86.0 & 85.4 & 86.2 \\
        \bottomrule
    \end{tabular}
    \caption{Minimum and maximum of the threshold $\tau$ of sPB across the five folds, for decoding temperatures $t$ at top-$5$. Values are percentages.}
    \label{tab:tau_thresholds_t}
\end{table}

\section{Effect of retrieval score dispersion on strong positional bias}
\label{app:retrieval_dispersion}

Although the random permutations prevent retrieval relevance from favoring any prompt position (\autoref{sec:results_icm}), the relevance structure of the candidates may still modulate the magnitude of positional bias. We measure it through the retrieval score dispersion $\delta(X)$, the standard deviation of the cosine similarities between the query and its top-$k$ candidates, standardized within each setting. Low values correspond to candidates that the retriever barely distinguishes.

For each model and setting, we fit a weighted linear regression of the instance-level strong positional bias, that is, the percentage of valid evaluations of $X$ recommending an item outside $\mathcal{R}(X)$, on $\delta(X)$, weighting each instance by its number of valid evaluations. \autoref{tab:dispersion_topk} and \autoref{tab:dispersion_temperature} report the slope, in percentage points of sPB per standard deviation of $\delta(X)$, with $95\%$ confidence intervals based on heteroscedasticity-robust (HC1) standard errors, and the same change relative to the sPB of the model in that setting. We test the null hypothesis that the slope is zero, i.e., that the strong positional bias of a query does not depend linearly on its retrieval score dispersion, with a two-sided Wald test based on the same HC1 standard errors. The resulting $p$-values are Holm-adjusted jointly across the $32$ tests of each table, that is, across the eight models and the four values of $k$ (\autoref{tab:dispersion_topk}) or of $t$ (\autoref{tab:dispersion_temperature}).

The slope is negative in every configuration: queries whose candidates are less separated by the retriever show slightly more strong positional bias. The effect is small, at most about $1$ pp per standard deviation, or a few percent of the strong positional bias of each model, and it is significant mainly for $k \ge 10$, while at $k = 5$ it is significant only for a few models and shows no trend with temperature. Hence, retrieval score dispersion does not explain the positional bias detected in this work, and only marginally modulates its strong component.

\begin{table}
    \centering
    \resizebox{\linewidth}{!}{%
    \begin{tabular}{l|cc|cc|cc|cc}
        \toprule
         & \multicolumn{2}{c|}{$k = 5$} & \multicolumn{2}{c|}{$k = 10$} & \multicolumn{2}{c|}{$k = 15$} & \multicolumn{2}{c}{$k = 20$} \\
        Model name & $\Delta$sPB/SD (95\% CI) & Rel. & $\Delta$sPB/SD (95\% CI) & Rel. & $\Delta$sPB/SD (95\% CI) & Rel. & $\Delta$sPB/SD (95\% CI) & Rel. \\
        \midrule
        DeepSeek R1 32B & -0.08 (-0.25, 0.09) & -2.6\% & \textbf{-0.41 (-0.66, -0.16)} & \textbf{-5.8\%} & \textbf{-0.58 (-0.89, -0.27)} & \textbf{-5.4\%} & \textbf{-0.66 (-1.05, -0.26)} & \textbf{-4.8\%} \\
        DeepSeek R1 7B & -0.26 (-0.58, 0.05) & -2.7\% & \textbf{-0.60 (-0.96, -0.23)} & \textbf{-3.0\%} & \textbf{-1.09 (-1.49, -0.68)} & \textbf{-4.1\%} & \textbf{-0.95 (-1.33, -0.56)} & \textbf{-2.9\%} \\
        DeepSeek R1 8B & -0.30 (-0.54, -0.05) & -4.4\% & \textbf{-0.84 (-1.19, -0.49)} & \textbf{-5.7\%} & \textbf{-0.59 (-0.98, -0.20)} & \textbf{-3.0\%} & \textbf{-0.94 (-1.32, -0.55)} & \textbf{-4.5\%} \\
        LLaMA 3.1 8B & -0.22 (-0.47, 0.03) & -3.3\% & \textbf{-0.81 (-1.17, -0.46)} & \textbf{-5.2\%} & \textbf{-0.90 (-1.29, -0.50)} & \textbf{-4.3\%} & -0.60 (-1.02, -0.17) & -2.5\% \\
        LLaMA 4 Scout & \textbf{-0.30 (-0.48, -0.12)} & \textbf{-9.3\%} & \textbf{-0.85 (-1.22, -0.48)} & \textbf{-6.1\%} & \textbf{-0.89 (-1.27, -0.51)} & \textbf{-3.7\%} & \textbf{-1.08 (-1.44, -0.72)} & \textbf{-3.0\%} \\
        GPT OSS 120B & -0.01 (-0.10, 0.08) & -1.0\% & -0.20 (-0.38, -0.02) & -5.9\% & \textbf{-0.38 (-0.63, -0.13)} & \textbf{-6.4\%} & \textbf{-0.48 (-0.73, -0.23)} & \textbf{-7.2\%} \\
        GPT-OSS 20B & -0.16 (-0.32, 0.01) & -6.8\% & \textbf{-0.39 (-0.64, -0.15)} & \textbf{-7.2\%} & \textbf{-0.54 (-0.80, -0.28)} & \textbf{-8.0\%} & \textbf{-0.49 (-0.82, -0.16)} & \textbf{-5.1\%} \\
        LLaMA 3.1 70B & -0.07 (-0.24, 0.11) & -2.1\% & -0.38 (-0.69, -0.07) & -4.0\% & -0.35 (-0.74, 0.03) & -2.5\% & \textbf{-0.63 (-1.03, -0.22)} & \textbf{-4.2\%} \\
        \bottomrule
    \end{tabular}}
    \caption{Change of sPB, in percentage points, per standard deviation of retrieval score dispersion (95\% CI), and relative to the sPB of each model (Rel.), across candidate-set sizes at $t = 0$. Bold: significant at the $0.05$ level after Holm correction.}
    \label{tab:dispersion_topk}
\end{table}

\begin{table}
    \centering
    \resizebox{\linewidth}{!}{%
    \begin{tabular}{l|cc|cc|cc|cc}
        \toprule
         & \multicolumn{2}{c|}{$t = 0.0$} & \multicolumn{2}{c|}{$t = 0.2$} & \multicolumn{2}{c|}{$t = 0.4$} & \multicolumn{2}{c}{$t = 0.6$} \\
        Model name & $\Delta$sPB/SD (95\% CI) & Rel. & $\Delta$sPB/SD (95\% CI) & Rel. & $\Delta$sPB/SD (95\% CI) & Rel. & $\Delta$sPB/SD (95\% CI) & Rel. \\
        \midrule
        DeepSeek R1 32B & -0.08 (-0.25, 0.09) & -2.6\% & -0.07 (-0.25, 0.11) & -2.2\% & -0.19 (-0.37, -0.02) & -5.7\% & -0.05 (-0.23, 0.13) & -1.5\% \\
        DeepSeek R1 7B & -0.26 (-0.58, 0.05) & -2.7\% & \textbf{-0.66 (-0.96, -0.36)} & \textbf{-6.6\%} & \textbf{-0.74 (-1.04, -0.43)} & \textbf{-7.1\%} & -0.38 (-0.68, -0.07) & -3.6\% \\
        DeepSeek R1 8B & -0.30 (-0.54, -0.05) & -4.4\% & -0.16 (-0.40, 0.09) & -2.2\% & -0.42 (-0.72, -0.13) & -4.4\% & \textbf{-0.51 (-0.81, -0.20)} & \textbf{-5.1\%} \\
        LLaMA 3.1 8B & -0.22 (-0.47, 0.03) & -3.3\% & -0.33 (-0.58, -0.08) & -4.7\% & \textbf{-0.55 (-0.85, -0.26)} & \textbf{-5.9\%} & -0.17 (-0.46, 0.12) & -1.8\% \\
        LLaMA 4 Scout & \textbf{-0.30 (-0.48, -0.12)} & \textbf{-9.3\%} & -0.17 (-0.34, 0.00) & -5.3\% & \textbf{-0.34 (-0.51, -0.17)} & \textbf{-10.4\%} & \textbf{-0.29 (-0.46, -0.12)} & \textbf{-8.7\%} \\
        GPT OSS 120B & -0.01 (-0.10, 0.08) & -1.0\% & -0.06 (-0.15, 0.03) & -5.4\% & -0.06 (-0.15, 0.04) & -4.7\% & -0.02 (-0.11, 0.07) & -2.0\% \\
        GPT-OSS 20B & -0.16 (-0.32, 0.01) & -6.8\% & -0.10 (-0.27, 0.06) & -4.1\% & -0.12 (-0.30, 0.05) & -4.8\% & -0.16 (-0.32, 0.01) & -5.9\% \\
        LLaMA 3.1 70B & -0.07 (-0.24, 0.11) & -2.1\% & -0.23 (-0.41, -0.05) & -6.8\% & -0.16 (-0.33, 0.00) & -4.8\% & -0.13 (-0.30, 0.05) & -3.6\% \\
        \bottomrule
    \end{tabular}
    }
    \caption{As \autoref{tab:dispersion_topk}, across decoding temperatures at $k = 5$.}
    \label{tab:dispersion_temperature}
\end{table}

\section{Sensitivity to discarded generations}
\label{app:failures_order}

Generations that cannot be parsed, in which the model refuses to recommend an item, or that return an invalid position are discarded before applying POB-UNBP (\autoref{sec:experimental_setup}). If these failures depended on the ordering of the candidate items, discarding them could bias the estimated position distributions. To assess this, we repeat POB-UNBP restricted to the complete-case queries, i.e., those for which none of the generations was discarded, so that every permutation of each query contributes to the estimate. \autoref{thm:stochastic} holds for any set of queries, so restricting the analysis to a subset of queries does not change the null hypothesis.

\autoref{tab:complete_case} reports, for each model and candidate-set size, the lower bound $L$ of the total variation distance to the uniform distribution and the $p$-value of the POB-UNBP test, computed on all queries (All) and on the complete-case queries (CC). The complete-case subset retains between $44\%$ and $98\%$ of the queries, depending on the model and candidate-set size, which is well above the sample sizes required for high power (\autoref{sec:simulation_study}). The test rejects the null hypothesis of absence of positional bias in all configurations on the complete-case queries, with the largest $p$-value being $0.002$ (DeepSeek R1 32B, top-5). The lower bound $L$ also remains stable: for most configurations it changes by less than $0.01$, and the conclusions drawn in \autoref{sec:results} hold for both sets of queries. The largest relative decrease corresponds to DeepSeek R1 32B at top-5 ($0.010$ to $0.006$), which already exhibited the weakest bias, and the largest absolute decrease to LLaMA 4 Scout at top-15 and top-20 ($0.214$ to $0.194$ and $0.302$ to $0.281$), which remains the most biased model by a wide margin. Conversely, for DeepSeek R1 7B, the model with the highest fraction of discarded generations, $L$ increases on the complete-case queries for top-5 and top-10 ($0.042$ to $0.053$ and $0.058$ to $0.074$), so discarding failures does not inflate the measured bias. Hence, the exclusion of failed generations does not explain our findings.

Note that this analysis does not determine whether failures depend on the ordering of the candidate items; rather, it shows that, whether or not they do, their exclusion does not alter the estimated positional bias.

\begin{table}[t]
    \centering
    
    \resizebox{\textwidth}{!}{%
    \begin{tabular}{cc|cc|cc|cc|cc}
        \toprule
        \multirow{2}{*}{Model name} & \multirow{2}{*}{Metric} & \multicolumn{2}{c|}{Top-5} & \multicolumn{2}{c|}{Top-10} & \multicolumn{2}{c|}{Top-15} & \multicolumn{2}{c}{Top-20} \\
         & & All & CC & All & CC & All & CC & All & CC \\
        \midrule
        \multirow{2}{*}{DeepSeek R1 32B} & $L$ & 0.970{\%} & 0.563{\%} & 2.827{\%} & 2.291{\%} & 4.291{\%} & 4.114{\%} & 6.775{\%} & 6.618{\%} \\
         & $p$-value & 7.456e-7 & 0.002 & 5.130e-17 & 7.082e-17 & 2.202e-47 & 2.072e-38 & 2.098e-48 & 2.859e-40 \\
        \midrule
        \multirow{2}{*}{DeepSeek R1 7B} & $L$ & 4.151{\%} & 5.279{\%} & 5.786{\%} & 7.430{\%} & 9.506{\%} & 9.432{\%} & 9.894{\%} & 9.969{\%} \\
         & $p$-value & 6.197e-60 & 2.169e-64 & 9.362e-55 & 1.299e-45 & 5.966e-79 & 2.045e-47 & 4.449e-70 & 2.075e-33 \\
        \midrule
        \multirow{2}{*}{DeepSeek R1 8B} & $L$ & 0.788{\%} & 0.785{\%} & 1.959{\%} & 2.093{\%} & 4.303{\%} & 4.387{\%} & 7.238{\%} & 7.256{\%} \\
         & $p$-value & 5.875e-5 & 9.146e-7 & 1.150e-11 & 2.483e-12 & 2.253e-46 & 2.989e-46 & 2.111e-71 & 3.821e-70 \\
        \midrule
        \multirow{2}{*}{LLaMA 3.1 8B} & $L$ & 3.009{\%} & 2.923{\%} & 3.131{\%} & 2.960{\%} & 4.078{\%} & 3.993{\%} & 3.768{\%} & 3.986{\%} \\
         & $p$-value & 4.457e-26 & 4.831e-24 & 4.928e-17 & 3.058e-15 & 3.379e-16 & 9.291e-15 & 1.465e-26 & 1.886e-15 \\
        \midrule
        \multirow{2}{*}{LLaMA 4 Scout} & $L$ & 0.872{\%} & 0.792{\%} & 8.298{\%} & 7.223{\%} & 21.430{\%} & 19.403{\%} & 30.189{\%} & 28.097{\%} \\
         & $p$-value & 1.188e-4 & 8.446e-5 & 1.438e-60 & 2.975e-48 & 8.023e-208 & 1.121e-125 & 1.341e-249 & 1.913e-137 \\
        \midrule
        \multirow{2}{*}{GPT OSS 120B} & $L$ & 3.144{\%} & 3.188{\%} & 3.621{\%} & 3.614{\%} & 5.099{\%} & 5.088{\%} & 6.997{\%} & 7.053{\%} \\
         & $p$-value & 8.879e-30 & 3.145e-30 & 2.233e-41 & 3.083e-41 & 6.968e-48 & 2.612e-47 & 1.601e-58 & 9.038e-60 \\
        \midrule
        \multirow{2}{*}{GPT-OSS 20B} & $L$ & 4.733{\%} & 4.670{\%} & 5.690{\%} & 5.603{\%} & 7.779{\%} & 7.187{\%} & 9.034{\%} & 8.474{\%} \\
         & $p$-value & 4.814e-65 & 5.262e-59 & 5.034e-82 & 5.072e-70 & 6.788e-96 & 3.230e-79 & 2.993e-100 & 1.016e-85 \\
        \midrule
        \multirow{2}{*}{LLaMA 3.1 70B} & $L$ & 3.396{\%} & 3.328{\%} & 2.990{\%} & 3.030{\%} & 4.402{\%} & 4.417{\%} & 7.039{\%} & 7.096{\%} \\
         & $p$-value & 7.216e-24 & 3.409e-23 & 4.845e-25 & 9.634e-26 & 2.463e-29 & 1.214e-29 & 1.045e-41 & 3.221e-41 \\
        \bottomrule
    \end{tabular}
    }
    \caption{Sensitivity of POB-UNBP to the discarded generations. For each model, the first row reports the lower bound $L$ of the total variation distance to the uniform distribution and the second row the $p$-value of the POB-UNBP test, computed on all queries (All) and on the complete-case queries (CC), for which no generation was discarded.}
    \label{tab:complete_case}
\end{table}

\end{document}